\newcommand{\longversion}[1]{#1}
\newcommand{\shortversion}[1]{}
\newcommand{\longshort}[2]{\longversion{#1}\shortversion{#2}}
\newcommand{\fo}{\mathcal{FO}}
\newcommand{\rela}{\mathbf{A}}
\newcommand{\cc}{\mathbf{C}}
\newcommand{\dd}{\mathbf{D}}
\newcommand{\pp}{\mathbf{P}}
\newcommand{\qq}{\mathbf{Q}}
\newcommand{\suchthat}{\;\ifnum\currentgrouptype=16 \middle\fi|\;}
\newtheorem{proposition}{Proposition}
\newtheorem{theorem}{Theorem}
\newtheorem{lemma}{Lemma}
\newtheorem{claim}{Claim}
\newtheorem{example}{Example}
\newtheorem{observation}{Observation}
\begin{document}
\shortversion{

\setlength{\pdfpageheight}{\paperheight}
\setlength{\pdfpagewidth}{\paperwidth}

\conferenceinfo{CSL-LICS 2014}{July 14--18, 2014, Vienna, Austria}
\copyrightyear{2014}
\copyrightdata{978-1-4503-2886-9}
\doi{nnnnnnn.nnnnnnn}





\titlebanner{banner above paper title}        
\preprintfooter{short description of paper}   
}

\title{Model Checking Existential Logic on Partially Ordered Sets\footnote{This research was supported by ERC Starting Grant (Complex Reason, 239962) 
and FWF Austrian Science Fund (Parameterized Compilation, P26200).}}

\longshort{
\newcommand*\samethanks[1][\value{footnote}]{\footnotemark[#1]}
\author{%
Simone Bova, Robert Ganian, and Stefan Szeider\\
\small Vienna University of Technology\\[-3pt]
\small  Vienna, Austria}

\date{}

}
{
\authorinfo{Simone Bova, Robert Ganian, and Stefan Szeider}
           {Vienna University of Technology (Vienna, Austria)}}

\maketitle

\begin{abstract}
We study the problem of checking whether an existential sentence 
(that is, a first-order sentence in prefix form built using existential quantifiers and all Boolean connectives) 
is true in a finite partially ordered set (in short, a poset). A poset is a reflexive, antisymmetric, and transitive digraph.  
The problem encompasses the fundamental embedding problem of finding an isomorphic 
copy of a poset as an induced substructure of another poset.  

Model checking existential logic is already $\textup{NP}$-hard on a fixed poset; 
thus we investigate structural properties of posets yielding 
conditions for fixed-parameter tractability when the problem is parameterized by the sentence. 
We identify width as a central structural property 
(the width of a poset is the maximum size of a subset of pairwise incomparable elements); 
our main algorithmic result is that model checking existential logic on 
classes of finite posets of bounded width is fixed-parameter tractable.  
We observe a similar phenomenon in classical complexity, 
where we prove that the isomorphism problem is polynomial-time tractable 
on classes of posets of bounded width; this settles an open problem in order theory.

We surround our main algorithmic result with complexity results 
on less restricted, natural neighboring classes of finite posets, 
establishing its tightness in this sense.  
We also relate our work with (and demonstrate its independence of) 
fundamental fixed-parameter tractability results for model checking on digraphs of bounded degree 
and bounded clique-width.  
\end{abstract}

\shortversion{
\category{D.2.4}{Software Engineering}{Software/Program Verification}[Model checking]


\keywords
Partially ordered sets, Model checking, width, parameterized complexity.}

\section{Introduction}

\noindent \textit{Motivation.}  The \emph{model checking problem}, 
to decide whether a given logical sentence is true in a given structure, 
is a fundamental computational problem which appears in a variety of areas in
computer science, including database theory, artificial intelligence, constraint satisfaction, 
and computational complexity.  The problem is computationally intractable in its general version, 
and hence it is natural to seek restrictions of the class of structures or the class of sentences 
yielding sufficient or necessary conditions for computational tractability.  

Here, as usual in the complexity investigation of the model checking problem, 
computational tractability refers to \emph{polynomial-time tractability} or, 
in cases where polynomial-time tractability is unlikely, 
a relaxation known as \emph{fixed-parameter tractability with the sentence as a parameter}. 
The latter guarantees a decision algorithm running 
in $f(k) \cdot n^c$ time on inputs of size $n$ 
and sentences of size $k$, 
where $f$ is a computable function 
and $c$ is a constant. 
For further discussion of the complexity setup adopted here, 
including its algorithmic motivations, 
we refer the reader to \cite{Grohe07a, FlumGrohe06}.

The study of model checking first-order logic on restricted 
classes of finite \emph{combinatorial structures} is an established line of research originating from the seminal work of Seese \cite{Seese96}. Results in this area have provided very general conditions for computational tractability, and 
even exact characterizations in many relevant cases \cite{GroheKreutzerSiebertz14}.  
As Grohe observes \cite{Grohe07a}, though, 
it would be also interesting to investigate structural properties 
facilitating the model checking problem in the realm of finite \emph{algebraic structures}, 
for instance groups or lattices.  

In this paper, we investigate the class of finite \emph{partially ordered sets}.  
A partially ordered set (in short, a \emph{poset}) is the structure obtained by equipping a nonempty 
set with a reflexive, antisymmetric, and transitive binary relation.  In other words, 
the class of posets coincides with the class of directed graphs satisfying 
a certain universal first-order sentence (axiom); namely, the sentence that enforces 
reflexivity, antisymmetry, and transitivity of the edge relation.  
In this sense, from a logical perspective, posets form 
an intermediate case between combinatorial and algebraic structures; 
they can be viewed as being stronger than purely combinatorial structures, 
as the nonlogical vocabulary is presented by a first-order axiomatization; 
but weaker than genuinely algebraic structures, as the axiomatization 
is expressible in universal first-order logic (too weak of a fragment to define algebraic operations).  

Posets are fundamental combinatorial objects \cite[Chapter~8]{GrahamGrotschelLovasz95}, 
with applications in many fields of computer science, ranging from software verification \cite{NielsonNielsonHankin05} 
to computational biology \cite{RauschReinert10}.  However, 
very little is known about the complexity of the model checking problem on classes of finite posets; 
to the best of our knowledge, even the complexity of natural syntactic fragments of first-order logic 
on basic classes of finite posets is open.  

A prominent logic in first-order model-checking is \emph{primitive positive} logic, 
that is, first-order sentences built using existential quantification ($\exists$) 
and conjunction ($\wedge$); 
the problem of model checking primitive positive logic is 
equivalent to the \emph{constraint satisfaction problem} and the \emph{homomorphism problem} \cite{FederVardi98}.  However, 
restricted to posets, the problem of model checking primitive positive logic 
and even \emph{existential positive} logic, obtained from primitive positive logic by including disjunction ($\vee$) in the logical vocabulary, is trivial; 
because of reflexivity, every existential positive sentence is true on every poset!

As we observe (Proposition~\ref{pr:exprcomplex}), the complexity scenario changes abruptly 
in \emph{existential conjunctive} logic, that is, 
first-order sentences in prefix negation normal form built using 
$\exists$, $\wedge$, and negation ($\neg$).  Here, 
the model checking problem is $\textup{NP}$-hard 
even on a certain fixed finite poset; in the complexity jargon, 
the \emph{expression} complexity of 
existential conjunctive logic 
is $\textup{NP}$-hard on finite posets.   
In other words, as long as computational tractability is identified with 
polynomial-time tractability, any structural property of posets is 
algorithmically immaterial (in a sense that can be made precise).  There is then a natural quest 
for relaxations of polynomial-time tractability yielding 
\textit{(i)} a nontrivial complexity analysis of the problem, 
and \textit{(ii)} a refined perspective on the structural properties of posets 
underlying tamer algorithmic behaviors; in this paper 
we achieve \textit{(i)} and \textit{(ii)} through the glasses of fixed-parameter tractability.  

More precisely, as we discuss below, our contribution is a complete description of 
the parameterized complexity of model checking (all syntactic fragments of) existential first-order logic 
(first-order sentences in prefix normal form built using $\exists$, $\wedge$, $\vee$, and $\neg$), 
with respect to classes of finite posets in a hierarchy 
generated by fundamental poset invariants.\footnote{Note that existential \emph{disjunctive} logic (first-order sentences in prefix negation normal form 
built using $\exists$, $\vee$, and $\neg$) is trivial on posets.  
In fact, every sentence in the fragment is either true on every poset, or false on every poset, 
and it is easy to check which of the two cases holds for any given sentence.}  

Model checking existential logic encompasses as a special case the fundamental 
\emph{embedding problem}, to decide whether a given structure contains an isomorphic 
copy of another given structure as an \emph{induced} substructure; 
in fact, the embedding problem reduces in polynomial-time to the problem of 
model checking certain existential (even conjunctive) sentences.  The aforementioned fact 
that existential conjunctive logic is already $\textup{NP}$-hard on a fixed finite poset 
leaves open the existence of a nontrivial classical complexity classification of the embedding problem. 
We provide such a classification by giving a complete description of 
the classical complexity of the embedding problem in the introduced hierarchy of poset invariants.  

%

We hope that the investigation of the existential fragment prepares the ground 
(and possibly provides basic tools) for understanding the model checking problem 
for more expressive logics on posets.

\medskip

\noindent \textit{Contribution.}  We now give an account of our contribution.  
We refer the reader to Figure~\ref{fig:overwparcompl} for an overview; 
the poset invariants and their relations are introduced in Section~\ref{sect:setup}.  

\begin{figure}[ht]
\centering
\begin{picture}(0,0)%
\includegraphics{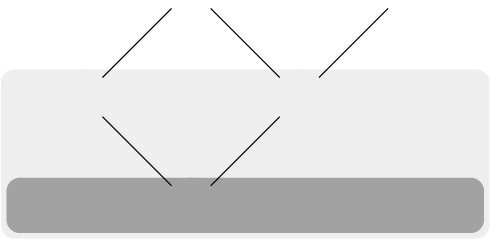}%
\end{picture}%
\setlength{\unitlength}{2279sp}%
\begingroup\makeatletter\ifx\SetFigFont\undefined%
\gdef\SetFigFont#1#2#3#4#5{%
  \reset@font\fontsize{#1}{#2pt}%
  \fontfamily{#3}\fontseries{#4}\fontshape{#5}%
  \selectfont}%
\fi\endgroup%
\begin{picture}(4074,2315)(2014,-10243)
\put(2521,-9106){\makebox(0,0)[lb]{\smash{{\SetFigFont{7}{8.4}{\rmdefault}{\mddefault}{\updefault}{\color[rgb]{0,0,0}$\textup{width}$}%
}}}}
\put(3421,-10006){\makebox(0,0)[lb]{\smash{{\SetFigFont{7}{8.4}{\rmdefault}{\mddefault}{\updefault}{\color[rgb]{0,0,0}$\textup{size}$}%
}}}}
\put(4186,-9106){\makebox(0,0)[lb]{\smash{{\SetFigFont{7}{8.4}{\rmdefault}{\mddefault}{\updefault}{\color[rgb]{0,0,0}$\textup{degree}$}%
}}}}
\put(2971,-8206){\makebox(0,0)[lb]{\smash{{\SetFigFont{7}{8.4}{\rmdefault}{\mddefault}{\updefault}{\color[rgb]{0,0,0}$\textup{cover\textup{-}degree}$}%
}}}}
\put(5041,-8206){\makebox(0,0)[lb]{\smash{{\SetFigFont{7}{8.4}{\rmdefault}{\mddefault}{\updefault}{\color[rgb]{0,0,0}$\textup{depth}$}%
}}}}
\end{picture}%
\caption{
The (light or dark) gray region covers invariants such that, 
if a class of finite posets is bounded under the invariant, 
then 
model checking existential logic (or equivalently, by Proposition~\ref{proposition:metodological}, 
model checking existential conjunctive logic, or deciding embedding)
over the class is fixed-parameter tractable; 
the white region covers invariants such that 
there exists a class of finite posets bounded under the invariant 
where the problem is $\textup{W}[1]$-hard.  Similarly, the dark gray region covers invariants where 
the embedding problem is polynomial-time tractable, 
and the complement of the dark gray region (light gray or white) covers invariants where the problem is $\textup{NP}$-hard.  
In classical complexity, as opposed to parameterized complexity, 
the tractability frontier of existential (conjunctive) logic and embedding 
are different (the former, since existential logic is already $\textup{NP}$-hard on a fixed finite poset, 
is $\textup{NP}$-hard everywhere).}
\label{fig:overwparcompl}
\end{figure}

In contrast to the classical case, model checking existential logic on fixed structures 
is trivially fixed-parameter tractable; in fact, even the full first-order logic is trivially fixed-parameter tractable on 
any class of finite structures of bounded size. On the other hand, there exist 
classes of finite posets where existential logic 
is unlikely to be fixed-parameter tractable (in fact, 
there exist classes where even the embedding problem is $\textup{W}[1]$-hard); 
but the reduction class given by the natural hardness proof is rather wild, 
in particular it has bounded depth but \emph{unbounded width} (Proposition~\ref{pr:parhardallposets}).

The \emph{width} of a poset is the maximum size of a subset of pairwise incomparable elements (antichain); 
along with its \emph{depth}, the maximum size of a subset of pairwise comparable elements (chain), 
these two invariants form the basic and fundamental structural properties of a poset, 
arguably its most prominent and natural features.  Our main result establishes that 
width helps algorithmically (in contrast to depth); specifically, 
we prove that \emph{model checking existential logic on classes of finite posets 
of bounded width is fixed-parameter tractable} (Theorem~\ref{thm:EFOFPT}).  
This, together with Seese's  algorithm (plus a routine reduction described in Proposition~\ref{pr:harddegree}), allows us 
to complete the parameterized complexity classification of the investigated poset invariants, 
as depicted in Figure~\ref{fig:overwparcompl}.

We believe that our tractability result essentially enlightens the fundamental feature of posets of bounded width 
that can be exploited algorithmically; namely, \emph{bounded width posets admit a polynomial-time compilation to certain semilattice structures}, which 
are algorithmically tamer than the original posets, but equally expressive with respect to the problem at hand.  
The proof proceeds in two stages.  We first prove that, on any class of finite relational structures, 
model checking existential logic is fixed-parameter tractable if and only 
if the embedding problem is fixed-parameter tractable (Proposition~\ref{proposition:metodological}). Next, 
using 
the color coding technique of Alon, Yuster, and Zwick \cite{AlonYusterZwick95}, 
we reduce an instance of the embedding problem on posets of bounded width to 
a suitable family of instances of the homomorphism problem of certain semilattice structures, 
which is polynomial-time tractable by classical results of Jeavons, Cohen, and Gyssens \cite{JeavonsCohenGyssens97}.   

Our approach is reminiscent of the well established fact in order theory 
that finite posets correspond exactly (in a sense that can be made precise in category-theoretic terms) 
to finite distributive lattices.  However, 
the algorithmic implications of this correspondence 
have been possibly overlooked.  Indeed, using the correspondence and the known fact that 
the isomorphism problem is polynomial-time tractable on finite distributive lattices, 
we prove that \emph{the isomorphism problem for posets of bounded width is polynomial-time tractable} (Theorem~\ref{th:isoptime}), 
which 
settles an 
open question in order theory \cite[p.~284]{CaspardLeclercMonjardet12}.

Motivated by the equivalence (in parameterized complexity) between embedding and model checking existential conjunctive logic 
(Proposition~\ref{proposition:metodological}) on one hand, 
and the fact that existential conjunctive logic is already $\textup{NP}$-hard on a fixed finite poset (Proposition~\ref{pr:exprcomplex}) on the other hand, 
we also revisit the classical complexity of the embedding problem for finite posets and classify 
it with respect to the poset invariants studied in the parameterized complexity setting.  
The outcome is pictured in Figure~\ref{fig:overwparcompl}; here, polynomial-time tractability 
of the embedding problem on posets of bounded size is optimal with respect to the studied poset 
invariants.  We remark that the hardness results are technically involved (Theorem~\ref{th:widthnphard} and Theorem~\ref{th:degreenphard}); 
in particular, bounded width is a known obstruction for hardness proofs 
(for instance, the complexity of the dimension problem is unknown on bounded width posets).

We conclude mentioning that our work on posets relates with, but is independent of, 
general results by Seese \cite{Seese96} and Courcelle, Makowsky, and Rotics \cite{CourcelleMakowskyRotics00}, respectively, on model checking first-order logic 
on classes of finite graphs of bounded degree and bounded clique-width.  Namely, the order relation of a poset has bounded degree 
if and only if the poset has bounded depth and bounded \emph{cover-degree} (that is, its cover relation has bounded degree); 
moreover, if a poset has bounded width, then it has bounded cover-degree (Proposition~\ref{pr:diagram}).  
However, \emph{there exist classes of bounded width posets with unbounded degree} (for instance, chains), 
and \emph{there exist classes of bounded width posets with unbounded clique-width} (Proposition~\ref{pr:nocw}), 
which excludes the direct application of the aforementioned results.

\medskip

\noindent \shortversion{\emph{Throughout the paper, we mark with $\star$ all statements whose proofs 
are omitted; we refer to the \href{http://arxiv.org/abs/???}{arXiv} for a full version.}}

\section{Preliminaries}\label{sect:prelim}

For all integers $k \geq 1$, we let $[k]$ denote the set $\{ 1, \ldots, k \}$.

\medskip

\noindent \textit{Logic.} In this paper, we focus on relational first-order logic.  
A \emph{vocabulary} $\sigma$ is a \emph{finite} set of \emph{relation symbols}, 
each of which is associated to a natural number called its
\emph{arity}; we let $\textup{ar}(R)$ denote the arity of $R \in \sigma$.  
An \emph{atom} $\alpha$ (over vocabulary $\sigma$) is an equality of variables ($x=y$) 
or is a predicate application $R x_1 \dots x_{\textup{ar}(R)}$, 
where $R \in \sigma$ and $x_1,\dots,x_{\textup{ar}(R)}$ are variables.  
A \emph{formula} (over vocabulary $\sigma$) is built from atoms (over $\sigma$), 
conjunction ($\wedge$), disjunction ($\vee$), negation ($\neg$), 
universal quantification ($\forall$), and existential quantification ($\exists$).  
A \emph{sentence} is a formula having no free variables.  We let $\fo$ 
denote the class of first-order sentences in \emph{prefix negation normal form}, 
that is, for each $\phi \in \fo$, 
the quantifiers occur in front of the sentence and the negations occur in front of the atoms.  

Let $\rho$ be a subset of $\{\forall,\exists,\wedge,\vee,\neg\}$ 
containing at least one quantifier and at least one binary connective.  
We let $\fo(\rho) \subseteq \fo$ denote the \emph{syntactic fragment} of $\fo$-sentences 
built using only logical symbols in $\rho$.  We call 
$\fo(\exists,\wedge,\vee,\neg)$ the \emph{existential} fragment, 
$\fo(\exists,\wedge,\neg)$ the \emph{existential conjunctive} fragment, 
and $\fo(\exists,\wedge)$, the \emph{existential conjunctive positive} (or \emph{primitive positive}) fragment.  
 
\medskip

\noindent \textit{Structures.}  
Let $\sigma$ be a relational vocabulary.  A \emph{structure} $\rela$ (over $\sigma$) is specified by 
a nonempty set $A$, called the \emph{universe} of the structure, 
and a relation $R^{\rela} \subseteq A^{\textup{ar}(R)}$ for each relation symbol $R \in \sigma$.  
A structure is \emph{finite} if its universe is finite.  

\emph{All structures considered in this paper are finite.} 

Given a structure $\mathbf{A}$ 
and $B \subseteq A$, we denote by $\mathbf{A}|_B$ the substructure of $\mathbf{A}$ induced by $B$, 
namely the universe of $\mathbf{A}|_B$ is $B$ and $R^{\mathbf{A}|_B}=R^{\mathbf{A}} \cap B^{\textup{ar}(R)}$ 
for all $R \in \sigma$.

Let $\mathbf{A}$ and $\mathbf{B}$ be $\sigma$-structures.  
A \emph{homomorphism} from $\mathbf{A}$ to $\mathbf{B}$ 
is a function $h \colon A \to B$ such that $(a_1,\ldots,a_{\textup{ar}(R)}) \in R^{\mathbf{A}}$ 
implies $(h(a_1),\ldots,h(a_{\textup{ar}(R)})) \in R^{\mathbf{B}}$, 
for all $R \in \sigma$ and all $(a_1,\ldots,a_{\textup{ar}(R)}) \in A^{\textup{ar}(R)}$; 
a homomorphism from $\mathbf{A}$ to $\mathbf{B}$ is \emph{strong} 
if $(a_1,\ldots,a_{\textup{ar}(R)}) \not\in R^{\mathbf{A}}$ 
implies $(h(a_1),\ldots,h(a_{\textup{ar}(R)})) \not\in R^{\mathbf{B}}$.  
An \emph{embedding} from $\mathbf{A}$ to $\mathbf{B}$ 
is an injective strong homomorphism from $\mathbf{A}$ to $\mathbf{B}$.  
An \emph{isomorphism} from $\mathbf{A}$ to $\mathbf{B}$ 
is a bijective embedding from $\mathbf{A}$ to $\mathbf{B}$.  

\emph{In graph theory, an injective strong homomorphism is also called a \lq\lq strong embedding\rq\rq,  
and the term \lq\lq embedding\rq\rq\ is used in the weaker sense of injective homomorphism; here, 
we adopt the order-theoretic (and model-theoretic) terminology.} 

For a structure $\rela$ and a sentence $\phi$ over the same vocabulary, 
we write $\rela \models \phi$ if the sentence $\phi$ is \emph{true} in the structure $\rela$.  
When $\rela$ is a structure, $f$ is a mapping from variables to 
the universe of $\rela$, and $\psi(x_1,\ldots,x_n)$ is a formula over the vocabulary of $\rela$,
we liberally write $\rela \models \psi(f(x_1),\ldots,f(x_n))$ to indicate that $\psi$ is satisfied 
by $\rela$ and $f$.  

%

A structure $\mathbf{G}=(V,E^\mathbf{G})$ 
with $\textup{ar}(E)=2$ is called a \emph{digraph}, 
and a \emph{graph} if $E^\mathbf{G}$ is irreflexive and symmetric.  We let $\mathcal{G}$ 
denote the class of all graphs. Let $\mathbf{G}$ be a digraph.  The \emph{degree} of $g \in G$, in symbols $\textup{degree}(g)$,  
is equal to $|\{ (g',g) \in E^\mathbf{G} \mid g' \in G \} 
\cup \{ (g,g') \in E^\mathbf{G} \mid g' \in G \}|$, 
and the \emph{degree} of $\mathbf{G}$, in symbols $\textup{degree}(\mathbf{G})$, 
is the maximum degree attained by the elements of $\mathbf{G}$.  

A digraph $\pp=(P,\leq^\pp)$ 
is a \emph{poset} if $\leq^\pp$ is a \emph{reflexive}, 
\emph{antisymmetric}, and \emph{transitive} relation over $P$, 
that is, respectively, $\pp \models \forall x(x \leq x)$, 
$\pp \models \forall x \forall y((x \leq y \wedge y \leq x) \to x=y)$, 
and $\pp \models \forall x \forall y \forall z((x \leq y \wedge y \leq z) \to x \leq z)$.  

A \emph{chain} in $\pp$ is a subset $C \subseteq P$ 
such that $p \leq^{\pp} q$ or $q \leq^{\pp} p$ for all $p,q \in C$ 
(in particular, if $P$ is a chain in $\pp$, 
we call $\pp$ itself a chain).  
We say that $p$ and $q$ are \emph{incomparable} 
in $\pp$ (denoted $p \parallel^{\pp} q$) if 
$\pp \not\models p \leq q \vee q\leq p$.  
An \emph{antichain} in $\pp$ is a subset $A \subseteq P$ 
such that $p \parallel^{\pp} q$ for all $p,q \in A$ 
(in particular, if $P$ is an antichain in $\pp$, 
we call $\pp$ itself an antichain).  

Let $\pp$ be a poset and let $p,q \in P$.  We say that $q$ \emph{covers} $p$ in $\pp$ (denoted $p \prec^{\pp} q$) 
if $p<^{\pp}q$ and, for all $r \in P$, $p \leq^{\pp} r <^{\pp}q$ implies $p=r$.  
The \emph{cover graph} of $\pp$ is the digraph $\textup{cover}(\pp)$ with vertex set $P$ 
and edge set $\{ (p,q) \mid p \prec^{\pp} q \}$.  
If $\mathcal{P}$ 
is a class of posets, we let $\textup{cover}(\mathcal{P})=\{ \textup{cover}(\pp) \mid \pp \in \mathcal{P} \}$. 
\longshort{It is well known that 
computing the cover relation corresponding to a given order relation, 
and vice versa the order relation corresponding to a given cover relation, 
is feasible in polynomial time \cite{Schroder03}.}{It is well known that 
computing the cover relation corresponding to a given order relation, 
and vice versa the order relation corresponding to a given cover relation, 
is feasible in polynomial time \cite{Schroder03}.}

In the figures, 
posets are represented by their \emph{Hasse diagrams}, 
that is a diagram of their cover relation 
where all edges are intended oriented upwards.  


Let $\mathcal{P}$ be the class of all posets.  A \emph{poset invariant} 
is a mapping $\textup{inv} \colon \mathcal{P} \to \mathbb{N}$ such that $\textup{inv}(\mathbf{P})=\textup{inv}(\mathbf{Q})$ 
for all $\mathbf{P},\mathbf{Q} \in \mathcal{P}$ such that $\mathbf{P}$ and $\mathbf{Q}$ are isomorphic.  
Let $\textup{inv}$ be any invariant over $\mathcal{P}$.  Let $\mathcal{P}$ be any class of posets.  
We say that $\mathcal{P}$ is \emph{bounded} with respect to $\textup{inv}$ if there exists $b\in \mathbb{N}$ such that 
$\textup{inv}(\mathcal{P})\leq b$ for all $\mathbf{P} \in \mathcal{P}$.  
Two poset invariants are \emph{incomparable} if there exists a class of posets 
bounded under the first but unbounded under the second, and there exists a class of posets  bounded under the second but unbounded under the first.


\medskip

\noindent \textit{Problems.}  We refer the reader to \cite{FlumGrohe06} for 
the standard algorithmic setup of the model checking problem, 
including the underlying computational model, 
encoding conventions for input structures and sentences, 
and the notion of \emph{size} of the (encoding of an) input structure or sentence.  
We also refer the reader to \cite{FlumGrohe06} for further background in parameterized complexity theory 
(including the notion of \emph{fpt many-one reduction} and \emph{fpt Turing reduction}).

Here, we mention that a \emph{parameterized problem} $(Q,\kappa)$ is a 
\emph{problem} $Q \subseteq \Sigma^*$ together 
with a \emph{parameterization} $\kappa \colon \Sigma^* \to \mathbb{N}$, where $\Sigma$ is a finite alphabet.  
A parameterized problem $(Q,\kappa)$ is \emph{fixed-parameter tractable (with respect to $\kappa$)}, 
in short \emph{fpt}, if there exists a decision algorithm for $Q$, 
a computable function $f \colon \mathbb{N} \to \mathbb{N}$, 
and a polynomial function $p \colon \mathbb{N} \to \mathbb{N}$, 
such that for all $x \in \Sigma^*$, the running time of the algorithm on $x$ 
is at most $f(\kappa(x)) \cdot p(|x|)$.  
%
%
We provide evidence that 
a parameterized problem is not fixed-parameter tractable 
by proving that the problem is $\textup{W}[1]$-hard under fpt many-one reductions; 
this holds unless the exponential time hypothesis fails \cite{FlumGrohe06}.

The (parameterized) computational problems under consideration are the following.  
Let $\sigma$ be a relational vocabulary, 
$\mathcal{C}$ be a class of 
$\sigma$-structures, 
and $\mathcal{L} \subseteq \fo$ be a class of $\sigma$-sentences.  
The \emph{model checking problem} for $\mathcal{C}$ and $\mathcal{L}$, 
in symbols $\textsc{MC}(\mathcal{C},\mathcal{L})$, is the problem of deciding, 
given $(\mathbf{A},\phi) \in \mathcal{C} \times \mathcal{L}$, 
whether $\mathbf{A} \models \phi$.  The parameterization, given an instance $(\mathbf{A},\phi)$, 
returns the size of the encoding of $\phi$.  The \emph{embedding problem} 
for $\mathcal{C}$, 
in symbols $\textsc{Emb}(\mathcal{C})$, 
is the problem of deciding, given a pair $(\mathbf{A},\mathbf{B})$, 
where $\mathbf{A}$ is a $\sigma$-structure and 
$\mathbf{B}$ is a $\sigma$-structure in $\mathcal{C}$, 
whether $\mathbf{A}$ embeds into $\mathbf{B}$. 
The parameterization, given an instance $(\mathbf{A},\mathbf{B})$, 
returns the size of the encoding of $\mathbf{A}$.  The problems $\textsc{Hom}(\mathcal{C})$ 
and $\textsc{Iso}(\mathcal{C})$ are defined similarly 
in terms of homomorphisms and isomorphisms respectively.

\section{Basic Results}\label{sect:setup}

In this section, we set the stage for our parameterized and classical 
complexity results in Section~\ref{sect:mainresults} and Section~\ref{sect:classical} respectively.  
We start observing some basic reducibilities between the problems under consideration. 

\longshort{\begin{proposition}}{\begin{proposition}[$\star$]} \label{proposition:metodological}
Let $\mathcal{C}$ be a class of structures.  The following are equivalent.
\begin{enumerate}[label=\textit{(\roman*)}]
\item $\textsc{MC}(\mathcal{C},\mathcal{FO}(\exists,\wedge,\vee,\neg))$ is fixed-parameter tractable.
\item $\textsc{MC}(\mathcal{C},\mathcal{FO}(\exists,\wedge,\neg))$ is fixed-parameter tractable.
\item $\textsc{Emb}(\mathcal{C})$ is fixed-parameter tractable.
\end{enumerate}
In particular, $\textsc{Emb}(\mathcal{C})$ polynomial-time (thus fpt) many-one reduces to $\textsc{MC}(\mathcal{C},\mathcal{FO}(\exists,\wedge,\vee,\neg))$.
\end{proposition}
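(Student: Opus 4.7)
The plan is to prove the cyclic chain $(i) \Rightarrow (ii) \Rightarrow (iii) \Rightarrow (i)$; the construction witnessing $(ii) \Rightarrow (iii)$ will simultaneously be a polynomial-time many-one reduction, yielding the last assertion for free. The implication $(i) \Rightarrow (ii)$ is immediate, since $\mathcal{FO}(\exists,\wedge,\neg) \subseteq \mathcal{FO}(\exists,\wedge,\vee,\neg)$.

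For $(ii) \Rightarrow (iii)$, I would send an instance $(\mathbf{A},\mathbf{B})$ of $\textsc{Emb}(\mathcal{C})$ with $A=\{a_1,\ldots,a_k\}$ to $(\mathbf{B},\phi_{\mathbf{A}})$, where $\phi_{\mathbf{A}}$ is the canonical \emph{diagram} sentence of $\mathbf{A}$:
\[
\phi_{\mathbf{A}} = \exists x_1 \cdots \exists x_k \biggl(\bigwedge_{1 \le i < j \le k} \neg(x_i = x_j) \;\wedge\; \bigwedge_{R \in \sigma,\, \vec{a} \in R^{\mathbf{A}}} R(x_{\vec{a}}) \;\wedge\; \bigwedge_{R \in \sigma,\, \vec{a} \in A^{\textup{ar}(R)} \setminus R^{\mathbf{A}}} \neg R(x_{\vec{a}})\biggr),
\]
with $x_{\vec{a}}$ the result of replacing each occurrence of $a_i$ in $\vec{a}$ by $x_i$. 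By construction $\mathbf{B} \models \phi_{\mathbf{A}}$ iff $\mathbf{A}$ admits an injective strong homomorphism into $\mathbf{B}$, i.e., embeds into $\mathbf{B}$; and, since $\sigma$ is fixed, $|\phi_{\mathbf{A}}|$ is polynomial in $|\mathbf{A}|$, while $\phi_{\mathbf{A}} \in \mathcal{FO}(\exists,\wedge,\neg)$. This is a polynomial-time many-one reduction that establishes both $(ii) \Rightarrow (iii)$ and the last assertion (the latter a fortiori, using the larger fragment).

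For $(iii) \Rightarrow (i)$, I would give an fpt Turing reduction. Given $(\mathbf{B},\phi)$ with $\phi = \exists x_1 \cdots \exists x_k\, \psi$, I first rewrite the quantifier-free matrix $\psi$ in DNF as $\bigvee_{i=1}^m D_i$, so that $\mathbf{B} \models \phi$ iff $\mathbf{B} \models \exists \vec{x}\, D_i$ for some $i$. For each disjunct $D_i$ I enumerate every equivalence relation ${\sim}$ on $\{x_1,\ldots,x_k\}$ consistent with the equality and inequality literals of $D_i$, and, for each such ${\sim}$, every completion $c$ of the partial $\sigma$-structure that the relational literals of $D_i$ induce on the quotient $\{x_1,\ldots,x_k\}/{\sim}$ (each undetermined atom freely chosen in or out of its relation). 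Each triple $(i,{\sim},c)$ yields a $\sigma$-structure $\mathbf{A}_{i,{\sim},c}$, and I would verify that $\mathbf{B} \models \phi$ iff some $\mathbf{A}_{i,{\sim},c}$ embeds into $\mathbf{B}$: a satisfying assignment of $\phi$ determines a disjunct $i$, an equality pattern ${\sim}$, and a completion $c$ (pulled back from $\mathbf{B}$) whose induced injection is an embedding; conversely, an embedding of some $\mathbf{A}_{i,{\sim},c}$ into $\mathbf{B}$ furnishes an injective assignment witnessing $D_i$, hence $\phi$. Since $\sigma$ is fixed, the total number of triples is bounded by a computable function of $|\phi|$, and each embedding query runs in fpt time by hypothesis; this suffices, since fpt is closed under fpt Turing reductions.

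The main obstacle lies in step $(iii) \Rightarrow (i)$: one has to bridge the gap between existential quantification (which is \emph{not} injective and only \emph{partially} constrains the signature via literals) and embedding (which is injective and \emph{totally} preserves the signature). The enumeration over equivalence relations ${\sim}$ closes the injectivity gap, while the enumeration over completions $c$ closes the partiality gap; fixedness of $\sigma$ is precisely what keeps the combined blow-up bounded by a function of $|\phi|$.
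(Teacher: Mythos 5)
Your proposal is correct and follows essentially the same route as the paper: the diagram sentence gives the polynomial-time many-one reduction from $\textsc{Emb}(\mathcal{C})$ to model checking the (conjunctive) existential fragment, and the converse direction is an fpt Turing reduction that puts the matrix in disjunctive form and enumerates, for each disjunct, all ways of completing the equality pattern and the undetermined relational atoms, querying $\textsc{Emb}(\mathcal{C})$ on each resulting structure. Your enumeration of equivalence relations and completions is just a semantic rephrasing of the paper's syntactic ``reduced completion'' construction, so the two arguments coincide in substance.
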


\newcommand{\pfmetodological}[0]{
\begin{proof}
Let $\mathcal{C}$ be a class of 
$\sigma$-structures.  

We give a polynomial-time many-one reduction of 
$\textsc{Emb}(\mathcal{C})$ to $\textsc{MC}(\mathcal{C},\mathcal{FO}(\exists,\wedge,\neg))$.  
Note that embedding a 
$\sigma$-structure $\mathbf{A}$ into a 
$\sigma$-structure $\mathbf{B} \in \mathcal{C}$ 
reduces to checking whether $\mathbf{B}$ verifies the 
existential closure of the $\mathcal{FO}(\wedge,\neg)$-formula  
$$\bigwedge_{a,a' \in A, a \neq a'}a \neq a' 
\wedge 
\bigwedge_{R \in \sigma} 
\left(
\bigwedge_{\mathbf{a} \in R^{\mathbf{A}}} R\mathbf{a}
\wedge 
\bigwedge_{\mathbf{a} \not\in R^{\mathbf{A}}} \neg R\mathbf{a}
\right)\text{.}$$ 

Clearly, $\textsc{MC}(\mathcal{C},\mathcal{FO}(\exists,\wedge,\neg))$ 
polynomial-time many-one reduces to $\textsc{MC}(\mathcal{C},\mathcal{FO}(\exists,\wedge,\vee,\neg))$.  
We conclude the proof giving a fpt Turing (in fact, even truthtable) reduction, 
from $\textsc{MC}(\mathcal{C},\mathcal{FO}(\exists,\wedge,\vee,\neg))$ 
to $\textsc{Emb}(\mathcal{C})$.  

Let $\phi \in \mathcal{FO}(\exists,\wedge,\vee,\neg)$.  Say that $\phi$ 
is \emph{disjunctive} if $\phi=\psi_1 \vee \cdots \vee \psi_l$ 
and $\psi_i \in \mathcal{FO}(\exists,\wedge,\neg)$ for all $i \in [l]$.  
Clearly, for every $\phi \in \mathcal{FO}(\exists,\wedge,\vee,\neg)$, 
a disjunctive $\phi' \in \mathcal{FO}(\exists,\wedge,\vee,\neg)$ 
such that $\phi \equiv \phi'$ is computable by (equivalence 
preserving) syntactic replacements.  

Let $\psi$ be a $\sigma$-sentence in $\mathcal{FO}(\exists,\wedge,\neg)$.  
Say that the disjunctive $\sigma$-sentence $\psi'=\chi_1 \vee \cdots \vee \chi_l$  
is a \emph{completion} of $\psi$ if $\psi' \equiv \psi$ and, 
for all $i \in [l]$, if the quantifier prefix of $\chi_i$ is $\exists x_1 \ldots \exists x_m$, 
then: 
\begin{itemize}
\item 
for all $(y,y') \in \{x_1,\ldots,x_m\}^2$, 
it holds that $y=y'$ or $y \neq y'$ occur in the quantifier free part of $\chi_i$;
\item for all $R \in \sigma$ and all $(y_1,\ldots,y_{\textup{ar}(R)}) \in \{x_1,\ldots,x_m\}^{\textup{ar}(R)}$, 
it holds that $R y_1 \ldots y_{\textup{ar}(R)}$ or $\neg R y_1 \ldots y_{\textup{ar}(R)}$ 
occur in the quantifier free part of $\chi_i$;
\end{itemize}
moreover, $\psi'$ is said \emph{reduced} if, for all $i \in [l]$, $\chi_i$ is satisfiable, 
$\chi_i$ does not contain dummy quantifiers, and $\chi_i$ does not contain atoms of the form $y=y'$.  

Let $\psi'=\chi_1 \vee \cdots \vee \chi_l$ be a reduced completion of 
the $\sigma$-sentence $\psi \in \mathcal{FO}(\exists,\wedge,\neg)$.  
Clearly, $\psi'$ is computable from $\psi$ as follows.  
Let $\exists x_1 \ldots \exists x_{m}$ 
be the quantifier prefix of $\psi$.  
\begin{itemize}
\item For all $(y,y') \in \{x_1,\ldots,x_m\}^2$ such that neither $y=y'$ nor $y \neq y'$ occur in the quantifier free part of $\psi$, 
conjoin $(y=y' \vee y \neq y')$ to the quantifier free part of $\psi$.
\item For all $R \in \sigma$ and $(y_1,\ldots,y_{\textup{ar}(R)}) \in \{x_1,\ldots,x_l\}^{\textup{ar}(R)}$ 
such that neither $R y_1 \ldots y_{\textup{ar}(R)}$ nor\longversion{\\} $\neg R y_1 \ldots y_{\textup{ar}(R)}$ occur in the quantifier free part of $\psi$, 
conjoin $(R y_1 \ldots y_{\textup{ar}(R)} \vee \neg R y_1 \ldots y_{\textup{ar}(R)})$ to the quantifier free part of $\psi$.
\item Compute a disjunctive form of the resulting sentence, eliminate equality atoms and dummy quantifiers from each disjunct, 
and finally eliminate unsatisfiable disjuncts (empty disjunctions are false on all structures).
\end{itemize}

Note that for each $i \in [l]$, the disjunct $\chi_i$ naturally corresponds to 
a $\sigma$-structure $\mathbf{A}_i$, defined as follows.  
Let $\exists x_1 \ldots \exists x_{m}$ be the quantifier prefix of $\chi_i$.  
The universe $A_{\chi_i}$ is $\{x_1,\ldots,x_{m}\}$, 
and $(y_1,\ldots,y_{\textup{ar}(R)}) \in R^{\mathbf{A}_i}$ 
if and only if $R y_1 \ldots y_{\textup{ar}(R)}$ occurs in the quantifier free part of $\chi_i$.

We are now ready to describe the reduction.  Let $(\mathbf{B},\phi)$ be an instance of 
$\textsc{MC}(\mathcal{P},\mathcal{FO}(\exists,\wedge,\vee,\neg))$.  
The algorithm first computes a disjunctive form logically equivalent to $\phi$, 
say $\phi \equiv \psi_1 \vee \cdots \vee \psi_l$, and then, for each $i \in [l]$, 
computes a reduced completion $\psi'_i$ logically equivalent to $\psi_i$, 
say $\psi'_i \equiv \chi'_{i,1} \vee \cdots \vee \chi'_{i,l_i}$.  
For each $i \in [l]$ and $j \in [l_i]$, 
let $\mathbf{A}_{i,j}$ be the structure corresponding to $\chi'_{i,j}$.  

We claim that 
$\mathbf{B} \models \phi$ if and only if 
there exist $i \in [l]$ and $j \in [l_i]$ such that $\mathbf{A}_{i,j}$ embeds into $\mathbf{B}$.  
The backwards direction is clear.  For the forwards direction, 
assume $\mathbf{B} \models \phi$.  Then, there exist $i \in [l]$ and $j \in [l_i]$ 
such that $\mathbf{B} \models \chi'_{i,j}$.  Then, $\mathbf{A}_{i,j}$ embeds 
into $\mathbf{B}$.

Thus, the algorithm works as follows.  For each $i \in [l]$ and $j \in [l_i]$, 
it poses the query $(\mathbf{A}_{i,j},\mathbf{B})$ to the problem $\textsc{Emb}(\mathcal{C})$, 
and it accepts if and only if at least one query answers positively.
\end{proof}
}
\longversion{\pfmetodological}

The next observation is that model checking existential conjunctive logic 
(and thus the full existential logic) on posets is unlikely to be polynomial-time tractable, 
even if the poset is fixed.  Let $\mathbf{B}$ be the bowtie poset 
defined by the universe $B=[4]$ 
and the covers 
$i \prec^\mathbf{B} j$ for all $i \in \{1,2\}$ and $j \in \{3,4\}$.

\longshort{\begin{proposition}}{\begin{proposition}[$\star$]}
\label{pr:exprcomplex}
$\textsc{MC}(\{\mathbf{B}\},\mathcal{FO}(\exists,\wedge,\neg))$ is $\textup{NP}$-hard.
\end{proposition}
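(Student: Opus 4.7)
The plan is to give a straightforward polynomial-time many-one reduction from graph $4$-colorability, which is well known to be NP-complete, to $\textsc{MC}(\{\mathbf{B}\},\mathcal{FO}(\exists,\wedge,\neg))$. Given a graph $G=(V,E)$, I would output the sentence
$$\phi_G \;=\; \exists (x_v)_{v \in V}\; \bigwedge_{\{u,v\}\in E} \neg(x_u = x_v),$$
introducing one existentially quantified variable per vertex and one negated equality atom per edge. The sentence is in prefix negation normal form, employs only the connective $\wedge$, and its only negations sit in front of equality atoms, so $\phi_G \in \mathcal{FO}(\exists,\wedge,\neg)$; it has encoding size $O(|V|+|E|)$ and is clearly computable from $G$ in polynomial time.

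Next I would verify correctness. Since $|B|=4$, a satisfying assignment of $\phi_G$ in $\mathbf{B}$ is exactly a function $f\colon V \to \{1,2,3,4\}$ with $f(u)\neq f(v)$ for every edge $\{u,v\}\in E$, i.e., a proper $4$-coloring of $G$; conversely, any proper $4$-coloring of $G$ induces such an assignment. Hence $\mathbf{B}\models \phi_G$ iff $G$ is $4$-colorable, which together with the NP-hardness of $4$-colorability yields the claim.

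There is essentially no obstacle in this argument. In fact the reduction makes no use of the order relation $\leq^{\mathbf{B}}$ of the bowtie, exploiting only that $|B|=4$; the same argument shows that $\textsc{MC}(\{\mathbf{A}\},\mathcal{FO}(\exists,\wedge,\neg))$ is NP-hard for every fixed structure $\mathbf{A}$ with at least three elements (reducing from $|A|$-colorability, which is NP-hard whenever $|A|\geq 3$). The specific choice of the bowtie here is therefore just to match the statement of the proposition and to provide a concrete small witness that the complexity phenomenon already arises for structures of minimal nontrivial size.
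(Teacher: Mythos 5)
Your reduction is correct: negated equality atoms are admissible in $\mathcal{FO}(\exists,\wedge,\neg)$ under the paper's definition of atoms (an atom may be an equality of variables), the sentence $\phi_G$ is in prefix negation normal form over the vocabulary $\{\leq\}$ even though it never mentions $\leq$, and since $|B|=4$ the models of $\phi_G$ in $\mathbf{B}$ are exactly the proper $4$-colorings of $G$; $4$-colorability is NP-hard, so the claim follows. However, this is a genuinely different route from the paper's. The paper reduces from $\textsc{Hom}(\{\mathbf{B}^*\})$, where $\mathbf{B}^*$ is the bowtie expanded with a singleton unary predicate naming each of its four elements; NP-hardness of that problem is the $n=2$ case of the Pratt--Tiuryn theorem on satisfiability of inequalities in a poset, and the constructed sentence uses disequalities and order atoms ($z_i < z_j$, $a \leq a'$) to pin down the four named elements and express the constraints of the input structure. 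What the paper's approach buys is a hardness proof that genuinely exercises the order relation and connects the bowtie to an established order-theoretic hardness result; what yours buys is a shorter, self-contained argument and the sharper observation that the expression complexity of $\mathcal{FO}(\exists,\wedge,\neg)$ is already NP-hard on \emph{every} fixed structure with at least three elements, poset or not --- which if anything strengthens the paper's point that, at the level of polynomial-time tractability, structural properties of the poset are immaterial. One stylistic caveat: since your sentence ignores $\leq^{\mathbf{B}}$ entirely, it does not illustrate that hardness persists for sentences whose non-equality literals are order comparisons, which is the flavor the paper's choice of reduction conveys.
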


\newcommand{\pfexprcomplex}[0]{
\begin{proof}
Let $\sigma=\{\leq,1,2,3,4\}$ be a relational vocabulary where 
$\textup{ar}(\leq)=2$ and $\textup{ar}(i)=1$ for all $i \in [4]$.  
Let $\mathbf{B}^*$ be the $\sigma$-structure 
such that $(B^*,\leq^{\mathbf{B}^*})$ is isomorphic to $\mathbf{B}$, 
say without loss of generality via the isomorphism $f(b)=b \in B^*$ for all $b \in B$, 
and where $b^{\mathbf{B}^*}=\{f(b)\}=\{b\}$ for all $b \in B$.  
By the case $n=2$ of the main theorem in Pratt and Tiuryn \cite[Theorem~2]{PrattTiuryn96}, 
the problem $\textsc{Hom}(\{\mathbf{B}^*\})$ is $\textup{NP}$-hard.  
We give a polynomial-time many-one reduction of $\textsc{Hom}(\{\mathbf{B}^*\})$ 
to $\textsc{MC}(\{\mathbf{B}\},\mathcal{FO}(\exists,\wedge,\neg))$.  

Let $\mathbf{A}$ be an instance of $\textsc{Hom}(\{\mathbf{B}^*\})$, 
and let $\phi$ be the existential closure of the conjunction of the following 
$\{\leq\}$-literals (thus, $\phi$ is a $\mathcal{FO}(\exists,\wedge,\neg)$-sentence 
on the vocabulary of $\mathbf{B}$):
\begin{itemize}
\item $z_i \neq z_j$, for all $1 \leq i<j\leq 4$;
\item $z_i<z_j$, for all $i \in \{1,2\}$ and $j \in \{3,4\}$;
\item $a=z_i$, for all $i \in [4]$ and $a \in i^\mathbf{A}$;
\item $a \leq a'$, for all $a \leq^{\mathbf{A}} a'$.  
\end{itemize}
It is easy to check that $\mathbf{A}$ maps homomorphically to 
$\mathbf{B}^*$ if and only if $\mathbf{B} \models \phi$.
\end{proof}
}
\longversion{\pfexprcomplex}

In contrast, model checking existential logic on any fixed 
poset $\pp$ is trivially fixed-parameter tractable 
(the instance is a structure of constant size, and a sentence taken as a parameter).
However, 
there are classes of 
posets where the embedding problem, 
and hence, by Proposition~\ref{proposition:metodological}, the problem of model checking existential logic, 
is unlikely to be fixed-parameter tractable, as we now show.

First, we introduce a family of poset invariants and relate them as in Figure~\ref{fig:diagram}.  Let $\pp$ be a poset.
\begin{itemize}
\item The \emph{size} of $\pp$ is the cardinality of its universe, $|P|$.   
\item The \emph{width} of $\pp$, in symbols $\textup{width}(\pp)$, 
is the maximum size attained by an antichain in $\pp$.  
\item The \emph{depth} of $\pp$, in symbols $\textup{depth}(\pp)$, 
is the maximum size attained by a chain in $\pp$.  
\item The \emph{degree} of $\pp$, in symbols $\textup{degree}(\pp)$, 
is the degree of the order relation of $\pp$, 
that is, $\textup{degree}(\leq^\pp)$.   
\item The \emph{cover-degree} of $\pp$, 
in symbols $\textup{cover\textup{-}degree}(\pp)$, is the degree of the cover relation of $\pp$, 
that is, $\textup{degree}(\textup{cover}(\pp))$.   
\end{itemize}
\begin{figure}[t]
\centering
\begin{picture}(0,0)%
\includegraphics{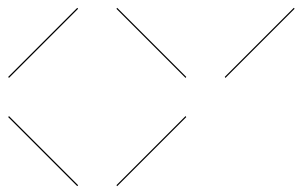}%
\end{picture}%
\setlength{\unitlength}{2279sp}%
\begingroup\makeatletter\ifx\SetFigFont\undefined%
\gdef\SetFigFont#1#2#3#4#5{%
  \reset@font\fontsize{#1}{#2pt}%
  \fontfamily{#3}\fontseries{#4}\fontshape{#5}%
  \selectfont}%
\fi\endgroup%
\begin{picture}(3166,2266)(2468,-10194)
\put(2521,-9106){\makebox(0,0)[lb]{\smash{{\SetFigFont{7}{8.4}{\rmdefault}{\mddefault}{\updefault}{\color[rgb]{0,0,0}$\textup{width}$}%
}}}}
\put(3421,-10006){\makebox(0,0)[lb]{\smash{{\SetFigFont{7}{8.4}{\rmdefault}{\mddefault}{\updefault}{\color[rgb]{0,0,0}$\textup{size}$}%
}}}}
\put(4186,-9106){\makebox(0,0)[lb]{\smash{{\SetFigFont{7}{8.4}{\rmdefault}{\mddefault}{\updefault}{\color[rgb]{0,0,0}$\textup{degree}$}%
}}}}
\put(2971,-8206){\makebox(0,0)[lb]{\smash{{\SetFigFont{7}{8.4}{\rmdefault}{\mddefault}{\updefault}{\color[rgb]{0,0,0}$\textup{cover\textup{-}degree}$}%
}}}}
\put(5041,-8206){\makebox(0,0)[lb]{\smash{{\SetFigFont{7}{8.4}{\rmdefault}{\mddefault}{\updefault}{\color[rgb]{0,0,0}$\textup{depth}$}%
}}}}
\end{picture}%
\caption{The order of poset invariants induced by Proposition~\ref{pr:diagram}.}\label{fig:diagram}
\end{figure}




\longshort{\begin{proposition}}{\begin{proposition}[$\star$]}\label{pr:diagram}
Let $\mathcal{P}$ be a class of 
posets.  
\begin{enumerate}[label=\textit{(\roman*)}]
\item $\mathcal{P}$ has bounded degree if and only if 
$\mathcal{P}$ has bounded depth and bounded cover-degree.  
\item If $\mathcal{P}$ has bounded width, 
then $\mathcal{P}$ has bounded cover-degree.
\item $\mathcal{P}$ has bounded size if and only if 
$\mathcal{P}$ has bounded width and bounded degree.
\end{enumerate}
\end{proposition}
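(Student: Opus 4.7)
The plan is to prove each of the three items pointwise, by establishing a quantitative relationship between the invariants of a single poset $\pp \in \mathcal{P}$; the class-level statements follow immediately. In all three items one direction is either immediate from the definitions or a one-line observation, so the substantive work lies in the backward direction of (i), item (ii), and the backward direction of (iii).

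For (i), the easy direction rests on two remarks: any chain $p = p_1 <^\pp p_2 <^\pp \cdots <^\pp p_k$ contributes the $k$ distinct pairs $(p,p_i)$ to $\leq^\pp$, whence $\textup{depth}(\pp)\leq \textup{degree}(\pp)$; and the edge set of $\textup{cover}(\pp)$ is contained in $\leq^\pp$, whence $\textup{cover-degree}(\pp)\leq \textup{degree}(\pp)$. For the converse, assuming $\textup{depth}(\pp)\leq d$ and $\textup{cover-degree}(\pp)\leq c$, I plan to bound the up-set $\{q : p \leq^\pp q\}$ and down-set $\{q : q \leq^\pp p\}$ of each $p \in P$ by walking along cover edges. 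Every $q$ comparable to $p$ is connected to $p$ by a cover-path of length at most $d-1$; since each step offers at most $c$ choices, the number of such paths from $p$ (and hence of distinct endpoints) is at most $\sum_{i=0}^{d-1}c^i$. Adding up the two cones yields a bound on $\textup{degree}(\pp)$ in terms of $c$ and $d$. This path-counting argument is really the only technical step in the whole proposition, and the main obstacle will be verifying that cover-paths actually suffice to reach every comparable element (which follows since $\pp$ is finite, so every relation $q <^\pp p$ refines to a maximal chain of covers).

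For (ii), I observe that the upper covers of any $p$ form an antichain: if $p \prec^\pp q_1$ and $p \prec^\pp q_2$ with $q_1 <^\pp q_2$, then $p <^\pp q_1 <^\pp q_2$ contradicts $p \prec^\pp q_2$. The symmetric argument for lower covers gives $\textup{cover-degree}(\pp)\leq 2\cdot\textup{width}(\pp)$. For (iii), the forward direction is trivial since $\textup{width}(\pp),\textup{degree}(\pp)\leq |P|$. For the converse, bounded degree yields bounded depth by (i), and then Dilworth's theorem decomposes $\pp$ into $\textup{width}(\pp)$ chains, each of length at most $\textup{depth}(\pp)$, giving $|P|\leq \textup{width}(\pp)\cdot\textup{depth}(\pp)\leq \textup{width}(\pp)\cdot\textup{degree}(\pp)$. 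Beyond the counting in (i), no step requires more than the antichain observation for covers and an appeal to Dilworth's theorem.
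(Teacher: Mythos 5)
Your proposal is correct and follows essentially the same route as the paper's proof: the antichain observation for upper and lower covers in (ii), Dilworth's theorem combined with the bounded depth obtained from (i) in the converse of (iii), and a cover-path count bounding the up- and down-cones of each element in the converse of (i). The only cosmetic difference is that the paper states the cruder per-element bound $c^d$ where you give $\sum_{i=0}^{d-1}c^{i}$ per cone, and it uses $\textup{depth}(\pp)\leq\textup{degree}(\pp)+1$ where you note the loop gives $\textup{depth}(\pp)\leq\textup{degree}(\pp)$; both yield the same boundedness conclusions.
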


\newcommand{\pfdiagram}[0]{
\begin{proof}
We prove \textit{(i)}.  Assume that $\mathcal{P}$ has bounded degree.  
Let $\pp \in \mathcal{P}$. Then $\textup{cover\textup{-}degree}(\pp)\leq \textup{degree}(\pp)$ 
follows from the fact that $\textup{cover}(\pp)$ is contained in $\leq^\pp$, 
while $\textup{depth}(\pp)\leq \textup{degree}(\pp)+1$ follows from the fact that each chain forms a complete directed acyclic subgraph in $\pp$.
Conversely, 
let $d \in \mathbb{N}$ and $c \in \mathbb{N}$ 
be the largest depth and cover-degree attained by a poset in $\mathcal{P}$, 
respectively.  Then, for every $\pp \in \mathcal{P}$ 
and $p \in P$, it holds that $\textup{degree}(p) \leq c^d$, 
hence $\mathcal{P}$ has bounded degree.

We prove \textit{(ii)}.  Let $w$ be the largest width attained by a poset in $\mathcal{P}$.  
Then, for every $\pp \in \mathcal{P}$ and $p \in P$, 
it holds that $\textup{cover\textup{-}degree}(p) \leq 2w$, 
because the lower covers of $p$ and the upper covers of $p$ 
form antichains in $\pp$, hence $p$ has at most $2w$ lower or upper covers.  
Hence, $\mathcal{P}$ has bounded cover-degree.

We prove \textit{(iii)}.  Assume that $\mathcal{P}$ has bounded size.  
Let $s$ be the largest size attained by a poset in $\mathcal{P}$.  
Then, for every $\pp \in \mathcal{P}$, 
it holds that $\textup{width}(\pp),\textup{degree}(\pp) \leq s$, 
that is, $\mathcal{P}$ has bounded width and bounded degree.  
Conversely, by \textit{(i)}, $\mathcal{P}$ has bounded depth.  
Let $d$ and $w$ be the largest depth and width attained by a poset in $\mathcal{P}$, respectively. 
Let $\pp \in \mathcal{P}$.  By Dilworth's theorem, 
there exist $w$ chains in $\pp$ 
whose union is $P$, hence $\textup{size}(\pp)\leq w \cdot d$.  
We conclude that $\mathcal{P}$ has bounded size.
\end{proof}
}

\longversion{\pfdiagram}

The previous proposition, together with the observation 
that bounded width and bounded degree (bounded width and bounded depth, 
bounded cover-degree and bounded depth, respectively) are incomparable, 
justifies the order in Figure~\ref{fig:diagram}, whose interpretation 
is the following: invariant $\textup{inv}$ is below invariant $\textup{inv}'$ 
if and only if, for every class $\mathcal{P}$ of posets, 
if $\mathcal{P}$ is bounded under $\textup{inv}$, 
then $\mathcal{P}$ is bounded under $\textup{inv}'$.

The emerging hierarchy of poset invariants will provide a measure of tightness 
for our positive algorithmic results, once we will manage to 
surround them with complexity results on covering neighboring classes.  

To this aim, we immediately observe that there exists a class of posets of bounded depth 
where the embedding problem, and hence model-checking existential first-order logic, 
is $\textup{W}[1]$-hard.  Given any graph $\mathbf{G} \in \mathcal{G}$, 
construct a poset $r(\mathbf{G})=\pp$ 
by taking $|G|$ pairwise disjoint $3$-element chains, 
and covering the bottom of the $i$th chain by the top of the $j$th chain 
if and only if $i$ and $j$ are adjacent in $\mathbf{G}$.  
Note that $\textup{depth}(\pp) \leq 3$.  Hence, the class 
$\mathcal{P}_{\textup{depth}}=\{ r(\mathbf{G}) \mid \mathbf{G} \in \mathcal{G} \}$ 
has bounded depth.

\begin{proposition}\label{pr:parhardallposets}
$\textsc{Emb}(\mathcal{P}_\textup{depth})$ is $\textup{W}[1]$-hard.
\end{proposition}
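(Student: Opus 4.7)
The plan is to give an fpt reduction from the $\textup{W}[1]$-hard $k$-clique problem on graphs. Given an instance $(\mathbf{G},k)$ with $\mathbf{G} \in \mathcal{G}$, I would output the pair $(r(\mathbf{K}_k), r(\mathbf{G}))$, where $\mathbf{K}_k$ is the $k$-vertex complete graph. Both posets can be built in polynomial time, $r(\mathbf{G}) \in \mathcal{P}_\textup{depth}$, and the source poset $r(\mathbf{K}_k)$ has size $3k$, so this is an fpt reduction with the embedding parameter controlled by $k$.

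The correctness claim is that $r(\mathbf{K}_k)$ embeds into $r(\mathbf{G})$ if and only if $\mathbf{G}$ contains a $k$-clique. For the easy direction, if $\{v_1,\dots,v_k\}$ is a clique in $\mathbf{G}$, I would send the $i$th 3-chain of $r(\mathbf{K}_k)$, with bottom $b_i$, middle $m_i$, and top $t_i$, to the 3-chain indexed by $v_i$ in $r(\mathbf{G})$. Because $\{v_1,\dots,v_k\}$ is a clique, every pair $(i,i')$ with $i \neq i'$ that is connected in $r(\mathbf{K}_k)$ by $b_i \prec t_{i'}$ is matched by $b_{v_i} \prec t_{v_{i'}}$ in $r(\mathbf{G})$; conversely, any comparability among the images in $r(\mathbf{G})$ comes either from being within one of the selected chains or from adjacency among the $v_i$, so no spurious comparabilities are introduced. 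Hence the map is an injective strong homomorphism, i.e.\ an embedding.

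For the harder direction, I would first establish the structural lemma that every $3$-element chain in $r(\mathbf{G})$ has the form $\{b_j,m_j,t_j\}$ for some $j \in G$. This is because $m_j$ is the unique element with a predecessor and a successor in chain $j$; no ``mixed'' $3$-chain can arise, since the cross-chain covers $b_i \prec t_{i'}$ give only $2$-chains and the tops $t_{i'}$ are maximal while the bottoms are minimal. Now, given any embedding $h \colon r(\mathbf{K}_k) \to r(\mathbf{G})$, the $3$-chain $b_i<m_i<t_i$ of $r(\mathbf{K}_k)$ must map to such a canonical $3$-chain of $r(\mathbf{G})$, so $h(b_i)=b_{\pi(i)}$, $h(m_i)=m_{\pi(i)}$, $h(t_i)=t_{\pi(i)}$ for some function $\pi \colon [k] \to G$; injectivity of $h$ forces $\pi$ to be injective.

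Finally, for distinct $i,i' \in [k]$, the adjacency in $\mathbf{K}_k$ gives $b_i \prec^{r(\mathbf{K}_k)} t_{i'}$, hence $b_{\pi(i)} \leq^{r(\mathbf{G})} t_{\pi(i')}$, which by the definition of $r(\mathbf{G})$ and the injectivity of $\pi$ implies $\pi(i) \sim^{\mathbf{G}} \pi(i')$. Thus $\pi([k])$ is a $k$-clique in $\mathbf{G}$. The only real subtlety, and the main thing to check carefully, is the structural lemma identifying the $3$-chains of $r(\mathbf{G})$; once that is in hand, the correspondence between clique vertices and embeddings is essentially forced.
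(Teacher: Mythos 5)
Your proposal is correct and is exactly the paper's reduction: the paper proves this proposition by the one-line observation that $\textsc{Clique}$ fpt many-one reduces to $\textsc{Emb}(\mathcal{P}_{\textup{depth}})$ via $(\mathbf{G},k)\mapsto(r(\mathbf{K}_k),r(\mathbf{G}))$, leaving the correctness check implicit. Your structural lemma identifying the $3$-chains of $r(\mathbf{G})$ (the middles $m_j$ are the only elements with both a strict predecessor and a strict successor, and cross-chain covers yield only $2$-chains) is the right way to make that check explicit, and the rest of your argument is sound.
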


\newcommand{\pfparhardallposets}[0]{
\begin{proof}
$\textsc{Clique}$ fpt many-one reduces to $\textsc{Emb}(\mathcal{P}_{\textup{depth}})$ 
by mapping $(\mathbf{G},k)$ to $(r(\mathbf{K}_k),r(\mathbf{G}))$.  
\end{proof}
}

\longshort{\pfparhardallposets}{\pfparhardallposets}

The goal of the technical part of the paper is to establish 
the facts leading from Figure~\ref{fig:diagram} to Figure~\ref{fig:overwparcompl}:
\begin{itemize}
\item For the parameterized complexity of model checking existential logic, 
we have tractability on bounded degree classes by Seese's algorithm \cite{Seese96}, 
and hardness on (certain) bounded depth classes by Proposition~\ref{pr:parhardallposets}.  
In Section~\ref{sect:mainresults}, we establish tractability on bounded width classes 
by Theorem~\ref{thm:EFOFPT}, and hardness on (certain) bounded cover-degree classes by Proposition~\ref{pr:harddegree}.
\item For the classical complexity of the embedding problem 
 (Section~\ref{sect:classical}), 
Proposition~\ref{th:wdtract} establishes tractability on bounded size classes, 
Theorem~\ref{th:widthnphard} establishes hardness on (certain) bounded width classes, 
and Theorem~\ref{th:degreenphard} establishes hardness on (certain) bounded degree classes.  
\end{itemize}

We conclude the section by relating our work on posets of bounded width 
with previous work on digraphs of bounded clique-width, and showing that 
our results are indeed independent. 

Clique-width is a prominent invariant 
of undirected as well as directed graphs which generalizes treewidth \cite{CourcelleOlariu00}; in particular, 
it is known that monadic second-order logic (precisely, $\mathcal{MSO}_1$) 
is fixed-parameter tractable on digraphs of bounded clique-width \cite{CourcelleMakowskyRotics00}, 
thus:

\begin{observation}
\label{obs:MSO}
$\textsc{MC}(\mathcal{P},\mathcal{FO})$ is fixed-parameter tractable for any class $\mathcal{P}$ of posets such that the clique-width of $\mathcal{P}$ is bounded.
\end{observation}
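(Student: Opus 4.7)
The plan is to reduce the observation to the Courcelle--Makowsky--Rotics theorem by essentially a trivial syntactic unfolding; no genuine work is required beyond checking that all ingredients line up, so I expect no real obstacle here.

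First, I would note that by definition every poset $\mathbf{P}=(P,\leq^{\mathbf{P}})$ is a digraph whose edge relation is $\leq^{\mathbf{P}}$. Hence any class $\mathcal{P}$ of posets of bounded clique-width is (tautologically) a class of digraphs of bounded clique-width, where the invariant is computed on the underlying digraph structure of each $\mathbf{P}$, exactly as in the preceding paragraph of the paper where clique-width is introduced for digraphs.

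Second, I would observe that first-order logic is a syntactic fragment of monadic second-order logic without quantification over edge sets, $\mathcal{MSO}_1$: every $\mathcal{FO}$-sentence over the vocabulary $\{\leq\}$ is literally (i.e., without any translation) an $\mathcal{MSO}_1$-sentence over the same vocabulary. Consequently, any instance $(\mathbf{P},\phi)$ of $\textsc{MC}(\mathcal{P},\mathcal{FO})$ is an instance of $\mathcal{MSO}_1$ model checking on $\mathcal{P}$, with the parameter $|\phi|$ unchanged.

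Finally, I would invoke the theorem of Courcelle, Makowsky, and Rotics \cite{CourcelleMakowskyRotics00}, which asserts that $\mathcal{MSO}_1$ model checking is fixed-parameter tractable on any class of (di)graphs of bounded clique-width, parameterized by the sentence. Combining the two preceding observations with this theorem yields the claim immediately, with no meaningful step in between; the entire content of the observation is that $\mathcal{FO}\subseteq \mathcal{MSO}_1$ and that posets \emph{are} digraphs.
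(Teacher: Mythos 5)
Your proposal is correct and matches the paper's own justification exactly: the observation is stated as an immediate corollary of the Courcelle--Makowsky--Rotics theorem, using precisely the facts that posets are digraphs and that $\mathcal{FO}$ is a fragment of $\mathcal{MSO}_1$. No further commentary is needed.
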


Since it is possible to compute the cover relation from the order relation (and vice versa) in polynomial time, one might wonder whether using the clique-width of the cover graph would allow us to efficiently model check wider classes of posets. This turns out not to be the case:

\begin{observation}[follows from Examples 1.32, 1.33 and Corollary 1.53 of \cite{CourcelleEngelfriet12}]
\label{obs:cw}
For any class $\mathcal{P}$ of posets, the clique-width of $\mathcal{P}$ is bounded if and only if the clique-width of $\textup{cover}(\mathcal{P})$ is bounded.
\end{observation}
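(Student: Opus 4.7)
The plan is to invoke the fundamental fact (Corollary~1.53 of \cite{CourcelleEngelfriet12}) that monadic second-order definable transductions preserve bounded clique-width, and to exhibit one such transduction in each direction between a poset and its cover graph. Since both structures share the same universe and differ only in their binary edge relation, the required transductions are of the simplest kind: no copies, no parameters, and a fixed universe, so they amount to an MSO-rewriting of the single binary predicate.

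For the forward direction, assume $\mathcal{P}$ has bounded clique-width. I would observe that the cover relation is first-order (hence MSO) definable from the order relation by
$$\phi_{\prec}(x,y) \equiv x \neq y \wedge x \leq y \wedge \forall z\bigl((x \leq z \wedge z \leq y) \to (z = x \vee z = y)\bigr)\text{,}$$
so that $\textup{cover}(\pp)$ is the image of $\pp$ under this transduction. Applying Corollary~1.53 of \cite{CourcelleEngelfriet12} then yields that $\textup{cover}(\mathcal{P})$ has bounded clique-width. For the backward direction, assume $\textup{cover}(\mathcal{P})$ has bounded clique-width. The order of $\pp$ is the reflexive transitive closure of $\textup{cover}(\pp)$, and reflexive transitive closure is MSO-definable in the standard way, namely
$$\phi_{\leq}(x,y) \equiv x = y \vee \forall X\bigl((x \in X \wedge \forall u \forall v\,(u \in X \wedge u \prec v \to v \in X)) \to y \in X\bigr)\text{.}$$
A second appeal to Corollary~1.53 recovers $\mathcal{P}$ from $\textup{cover}(\mathcal{P})$ with bounded clique-width, completing the equivalence.

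The only point that genuinely requires care is checking that the two formulas above fit the technical hypotheses of Corollary~1.53, in particular that the clique-width of the output is bounded by a function of the clique-width of the input and the size of the defining formula. This is precisely what Examples~1.32 and~1.33 of \cite{CourcelleEngelfriet12} record (the cover-from-order and order-from-cover transformations are natural running examples there), so once the formulas are matched to those examples the verification is immediate. Beyond that matching, there is no real obstacle: the universes coincide, no parameter sets are needed, and both directions are captured by a single MSO formula with a single free pair of first-order variables.
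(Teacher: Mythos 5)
Your argument is correct and is exactly the one the paper intends: the observation is stated as following from Examples~1.32, 1.33 and Corollary~1.53 of Courcelle and Engelfriet, i.e., the cover relation is first-order definable from the order, the order is MSO-definable as the reflexive transitive closure of the cover relation, and MSO transductions preserve bounded clique-width. Your two formulas are the standard ones matching those examples, so the proposal coincides with the paper's (cited) proof.
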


A natural class of posets which is easily observed having clique-width bounded by $2$ (despite having unbounded treewidth) 
is the class of \emph{series parallel posets}.  However, we show that there exist classes of posets of bounded width which 
do not have bounded clique-width (if not Theorem~\ref{thm:EFOFPT} would follow from Observation \ref{obs:MSO}). 

\longshort{\begin{proposition}}{\begin{proposition}[$\star$]}
\label{pr:nocw}
There exists a class $\mathcal{P}$ of posets which has bounded width but does not have bounded clique-width.
\end{proposition}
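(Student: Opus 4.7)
The plan is to exhibit an explicit family $\mathcal{P} = \{\mathbf{P}_n : n \geq 1\}$ of posets such that $\textup{width}(\mathbf{P}_n) \leq w$ for some fixed constant $w$, while the clique-width of $\mathbf{P}_n$ (viewed as a digraph via the order relation $\leq^{\mathbf{P}_n}$) grows without bound as $n \to \infty$. The starting observation is that Dilworth's theorem only forces a width-$w$ poset to decompose into $w$ chains, while the cross-chain comparabilities remain flexible enough, modulo transitivity, to carry nontrivial combinatorial information; in particular, bounded width does not constrain the depth of the poset or the total number of elements, leaving ample room to encode complex structure along the chains.

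Concretely, I would fix a class $(\mathbf{G}_n)_{n \geq 1}$ of (di)graphs of unbounded clique-width (for instance the $n \times n$ square grids, or explicit bounded-degree expanders) and build $\mathbf{P}_n$ from a small constant number of long chains whose cross-comparabilities, chosen as monotone ``staircase'' patterns forced by transitivity, encode the incidence structure of $\mathbf{G}_n$ in a first-order definable way. Since clique-width is stable (up to a computable function) under FO and MSO interpretations, the unboundedness of the clique-width of $(\mathbf{G}_n)_{n \geq 1}$ then transfers to $(\mathbf{P}_n)_{n \geq 1}$. An alternative to the interpretation step is to bound the clique-width of $\mathbf{P}_n$ from below directly, e.g.\ via rank-width, by exhibiting for each $k$ a bipartition of $P_n$ with cut-rank exceeding $k$.

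The main technical obstacle is the rigidity imposed by transitivity: between any two chains of a width-$w$ poset the cross-comparabilities must form a monotone staircase, so a single pair of chains on its own carries very limited information. Overcoming this will likely require taking $w \geq 3$ and/or a layered construction in which $\mathbf{G}_n$ is reconstructed from the joint pattern of staircases across several chain pairs, possibly with auxiliary ``marker'' elements inserted along the chains to synchronize the encodings and expose the desired incidences to the FO interpretation. Once the encoding is in place, the bound $\textup{width}(\mathbf{P}_n) \leq w$ is immediate from the explicit chain decomposition, and the clique-width lower bound follows from the chosen transfer or rank-width argument.
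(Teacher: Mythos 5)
Your proposal is a strategy outline rather than a proof: the entire content of the proposition is the exhibition of a concrete family $\mathcal{P}$, and that is exactly the piece you leave unspecified. You correctly identify the central obstacle --- for the order relation, the comparabilities between any two chains of a bounded-width poset are forced into a nested ``staircase'' pattern, so a constant number of chains seems to carry too little information --- but you then only say that overcoming it ``will likely require'' $w\geq 3$, layering, or marker elements, without actually resolving it. Until a specific poset family is written down, with its width verified and the FO/MSO interpretation (or the cut-rank lower bound) actually carried out, there is no proof. A secondary caution on your transfer step: bounded clique-width is preserved under (one-dimensional) MSO$_1$ transductions, but not under multi-dimensional interpretations, so if your encoding represents vertices of $\mathbf{G}_n$ by tuples of poset elements the transfer argument breaks; your ``marker element'' idea suggests a one-dimensional encoding, which would be fine, but this needs to be made explicit.

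For comparison, the paper resolves the staircase obstacle by not working with the order relation at all: it uses the fact (Observation~\ref{obs:cw}) that a class of posets has bounded clique-width if and only if the class of their cover graphs does, and then builds, for each $i$, a width-$2$ poset on two interleaved chains whose \emph{cover} graph contains an $i\times i$ grid as a subgraph and has degree at most $4$. Unbounded treewidth together with bounded degree gives unbounded clique-width of the (undirected) cover graphs, an orientation argument lifts this to the directed cover graphs, and Observation~\ref{obs:cw} lifts it to the posets themselves. This is considerably more elementary than interpreting expanders or computing cut-ranks, and it shows that width $2$ already suffices --- no layering or auxiliary markers are needed once one passes to the cover relation. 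If you want to salvage your approach, the shortest route is to adopt this change of viewpoint: encode your unbounded-clique-width graphs into the cover relation of a two-chain poset and invoke the cover/order equivalence, rather than fighting the transitivity constraints of the order relation directly.
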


\newcommand{\pfnocw}[0]{
\begin{proof}
For each $i \in \mathbb{N}$, we define a poset $\mathbf{P}_i$ 
as follows.  The universe is $P_i=\{p_{a,b}, q_{a,b} \mid a,b\in [i] \}$ 
and the cover relation is defined by the following pairs:
\begin{itemize}
\item $p_{a,b}\prec^{\mathbf{P}_i} p_{a,b+1}$ and $q_{a,b}\prec^{\mathbf{P}_i} q_{a,b+1}$,
\item $p_{a,i}\prec^{\mathbf{P}_i} p_{a+1,1}$ and $q_{a,i}\prec^{\mathbf{P}_i} q_{a+1,1}$, 
\item $p_{a,b}\prec^{\mathbf{P}_i} q_{a+1,b}$ and $q_{a,b}\prec^{\mathbf{P}_i} p_{a+1,b}$.
\end{itemize}

Notice that $\textup{cover}(\mathbf{P}_i)$ contains a $i\times i$ grid as a subgraph; indeed, 
one may define the $j$th row of the grid to consist of the chain $p_{1,j}\prec^{\mathbf{P}_i} q_{2,j}\prec^{\mathbf{P}_i} p_{3,j}\prec^{\mathbf{P}_i} q_{4,j}\dots$ 
and similarly the $j$th column to consist of $p_{j,1}\prec^{\mathbf{P}_i} p_{j,2}\prec^{\mathbf{P}_i} p_{j,3}\dots$ for odd $j$ and $q_{j,1}\prec^{\mathbf{P}_i} q_{j,2}\prec^{\mathbf{P}_i} q_{j,3}\dots$ for even $j$. Furthermore, $\mathbf{P}_i$ has width $2$ and $\textup{cover}(\mathbf{P}_i)$ has degree $4$. 
We will prove that $\mathcal{P}=\{\mathbf{P}_i \mid i\in \mathbb{N}\}$ has unbounded clique-width.

Let $\mathcal{H}$ be the class of undirected graphs corresponding to the covers of $\mathcal{P}$ 
(that is, $\mathcal{H}$ contains the symmetric closure of $\textup{cover}(\pp_i)$ 
for all $\pp_i \in \mathcal{P}$).  Since $\mathcal{H}$ contains graphs with arbitrarily large grids, 
$\mathcal{H}$ has unbounded tree-width. Hence $\mathcal{H}$ also has unbounded clique-width by \cite[Corollary 1.53]{CourcelleEngelfriet12}, 
and the fact that it has bounded degree. It is a folklore fact that 
for any graph $\mathbf{G}$ and any orientation $\mathbf{G}'$ of $\mathbf{G}$, 
the clique-width of $\mathbf{G}$ is bounded by the clique-width of $\mathbf{G}'$ (indeed, one can use the same decomposition in this direction).  
Since $\textup{cover}(\mathcal{P})$ contains one orientation for each graph in $\mathcal{H}$ and since $\mathcal{H}$ has unbounded clique-width, we conclude that $\textup{cover}(\mathcal{P})$ has unbounded clique-width.
\end{proof}
}

\longversion{\pfnocw}

\section{Parameterized Complexity}\label{sect:mainresults}
In this section, we study the parameterized complexity of the problems under consideration. 
The section is organized as follows.
\begin{itemize}
\item In Subsection \ref{sect:embfpt}, we develop a fixed-parameter tractable algorithm for the embedding problem on posets of bounded width (Theorem \ref{th:embfpt}), 
which yields that model checking existential logic on such posets is fixed-parameter tractable (Theorem \ref{thm:EFOFPT}).
\item In Subsection \ref{sect:bdcover}, we provide a reduction proving $\textup{W[1]}$-hardness of model checking 
existential logic on posets of bounded cover-degree (Proposition \ref{pr:harddegree}).
\end{itemize}

\subsection{Embedding is FPT on Bounded Width Posets}\label{sect:embfpt}

We first outline our proof strategy.  
The core of the proof lies in defining a suitable 
compilation of bounded width posets. We then proceed in two steps: 
\begin{enumerate}[label=\textit{(\roman*)}] 
\item proving that the homomorphism problem is polynomial-time tractable on such compilations, and
\item reducing the embedding problem between two bounded width posets to fpt many instances 
of the homomorphism problem between compilations of these posets.
\end{enumerate}
For \textit{(i)}, we prove that the compilation admits a semilattice polymorphism (Lemma~\ref{lemma:compres}), 
and use the classical result by Jeavons et al.\ that the homomorphism problem is polynomial-time tractable on semilattice structures (Theorem \ref{th:semilpoly}).  
For \textit{(ii)}, we use color coding and hash functions (Theorem~\ref{th:hash}) to link a homomorphism between two compilations to the existence 
of an embedding between the compiled posets (Lemma \ref{lemma:correct}).


\subsubsection{Known Facts}\label{sect:prim}

The proof uses known facts about semilattice structures and hash functions, collected below.  

\medskip

\noindent \textit{Semilattice Polymorphisms.}  
Let $\sigma$ be a finite relational vocabulary, and let $\mathbf{A}$ be a
$\sigma$-structure.  Let $f \colon A^m \to A$ 
be an $m$-ary function on $A$.  We say that $f$ is a \emph{polymorphism} of $\mathbf{A}$ 
(or, $\mathbf{A}$ \emph{admits} $f$) 
if $f$ \emph{preserves} all relations of $\mathbf{A}$, that is, 
for all $R \in \sigma$, where $\textup{ar}(R)=r$, if 
$$(a_{1,1},a_{1,2},\ldots,a_{1,r}),\ldots,(a_{m,1},a_{m,2},\ldots,a_{m,r}) \in R^\mathbf{A}\text{,}$$  
then 
$$(
f(a_{1,1},a_{2,1},\ldots,a_{m,1}),
\ldots,
f(a_{1,r},a_{2,r},\ldots,a_{m,r})
) \in R^\mathbf{A}\text{.}$$

We say that a function $f \colon A^2 \to A$ is a \emph{semilattice} function over $A$ if 
$f$ is idempotent, associative, and commutative on $A$, 
that is, $f(a,a)=a$, $f(a,f(a',a''))=f(f(a,a'),a'')$, and $f(a,a')=f(a',a)$ for all $a,a',a'' \in A$.  

\begin{theorem}[\cite{JeavonsCohenGyssens97}]\label{th:semilpoly}
Let $\mathbf{A}$ be a $\sigma$-structure, 
and let $f$ be a semilattice function over $A$.  
If $f$ is a polymorphism of $\mathbf{A}$, 
then $\textsc{Hom}(\mathbf{A})$ is polynomial-time tractable.
\end{theorem}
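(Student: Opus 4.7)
The plan is to recast $\textsc{Hom}(\mathbf{A})$ as a constraint satisfaction problem and exploit the semilattice polymorphism $f$ through arc consistency. Given an input $\mathbf{B}$, I associate to each $b \in B$ a domain $D(b) \subseteq A$, initialized to $A$, and iteratively enforce \emph{generalized arc consistency}: for each $R \in \sigma$, each tuple $(b_1,\ldots,b_r)\in R^\mathbf{B}$, each position $i$, and each $a \in D(b_i)$, if no tuple of $R^\mathbf{A}$ has $i$-th coordinate equal to $a$ and $j$-th coordinate in $D(b_j)$ for every $j$, then remove $a$ from $D(b_i)$. The pruning halts in polynomially many rounds. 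If some $D(b)$ is emptied, I reject; otherwise I construct a homomorphism explicitly from the surviving domains.

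Using commutativity, associativity, and idempotence of $f$, for any finite nonempty $S \subseteq A$ the iterated $f$-product $\bigwedge_f S$ is well defined. Let $h(b) = \bigwedge_f D(b)$. The key step is to verify that $h$ is a homomorphism. Fix $(b_1,\ldots,b_r) \in R^\mathbf{B}$. By arc consistency, for every $i \in [r]$ and every $a \in D(b_i)$ I can choose a witness tuple $\mathbf{t}^{i,a} \in R^\mathbf{A}$ whose $i$-th coordinate is $a$ and whose $j$-th coordinate lies in $D(b_j)$ for every $j$. Folding the finite family $\{\mathbf{t}^{i,a}\}_{i,a}$ coordinatewise under $f$ yields, by the polymorphism property, a tuple of $R^\mathbf{A}$; its $i$-th coordinate is the $f$-product of a multiset of elements of $D(b_i)$ which, thanks to the sub-family with $j=i$, already contains every element of $D(b_i)$. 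Idempotence (together with associativity and commutativity) then collapses this product to $\bigwedge_f D(b_i) = h(b_i)$, so $(h(b_1),\ldots,h(b_r)) \in R^\mathbf{A}$.

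Soundness of rejection is a straightforward induction: any homomorphism $g \colon \mathbf{B} \to \mathbf{A}$ satisfies $g(b) \in D(b)$ throughout the pruning, so an empty domain correctly rules out a homomorphism. The running time is polynomial, since pruning removes at most $|B|\cdot|A|$ values, each consistency check iterates over the polynomially many tuples of the relevant relation of $\mathbf{A}$, and computing $h$ needs $O(|B|\cdot|A|)$ evaluations of $f$.

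The main obstacle I anticipate is the coordinatewise folding argument in the second paragraph: one must choose the witness family so that the combined tuple simultaneously has $i$-th coordinate equal to $h(b_i)$ for every $i$. The trick is to index witnesses by all pairs $(j,a)$ with $a \in D(b_j)$ so that, for each fixed $i$, the $i$-th coordinates of the family already exhaust $D(b_i)$ via the sub-family with $j = i$; idempotence and associativity then absorb the extra witnesses contributed by $j \neq i$ without altering the $f$-product, yielding exactly $h(b_i)$ at coordinate $i$.
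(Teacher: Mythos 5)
Your argument is correct: it is the standard arc-consistency proof of this result — enforce generalized arc consistency, reject on an empty domain, and otherwise fold the witness tuples under the semilattice operation so that idempotence, associativity, and commutativity collapse each coordinate to $\bigwedge_f D(b_i)$, which is exactly how tractability under a semilattice polymorphism is established in the cited source \cite{JeavonsCohenGyssens97}. The paper itself imports Theorem~\ref{th:semilpoly} without proof, so there is nothing to diverge from; your write-up, including the indexing of witnesses by all pairs $(j,a)$ so that each coordinate's multiset has support exactly $D(b_i)$, is a complete and sound substitute.
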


\medskip

\noindent \textit{Hash Functions.}  Let $M$ and $N$ be sets, and let $k \in \mathbb{N}$.  
A \emph{$k$-perfect family of hash functions} from $M$ to $N$ 
is a family $\Lambda$ of functions from $M$ to $N$ 
such that for every subset $K \subseteq M$ of cardinality $k$ 
there exists $\lambda \in \Lambda$ such that $\lambda|_K$ 
is injective. 


\begin{theorem}\label{th:hash}[Theorem 13.14, \cite{FlumGrohe06}]
Let $C$ be a finite set.  There exists an algorithm that, given $C$ and $k \in \mathbb{N}$, 
computes a $k$-perfect family $\Lambda_{C,k}$ of hash functions from $C$ to $[k]$ 
of cardinality $2^{O(k)} \cdot \log^2 |C|$ in time $2^{O(k)} \cdot |C| \cdot \log^2 |C|$.  
\end{theorem}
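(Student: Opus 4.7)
The plan is to prove the result by a two-stage hashing scheme followed by composition: first reduce the universe $C$ to a universe of size polynomial in $k$, then reduce that small universe to $[k]$. Composition of two $k$-perfect families (from $C$ to $M$ and from $M$ to $[k]$) is again $k$-perfect, and its cardinality is the product of the two cardinalities, so if both factor families are small enough the bound follows.

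For the first stage, I would build a $k$-perfect family $\mathcal{F}_1$ from $C$ to $[m]$, where $m = \Theta(k^2)$, with $|\mathcal{F}_1| = O(k \cdot \log |C|)$. The starting point is the classical universal-hashing argument: if $p$ is a prime in the interval $(m, 2m]$ and $C \subseteq [p^t]$ for some $t = O(\log_p |C|)$, then the family of affine maps $x \mapsto (ax + b) \bmod p$ (or, more generally, a suitable pairwise-independent family obtained from polynomials of small degree over $\mathbb{F}_p$) is pairwise independent, so by a union bound a random such map is injective on any fixed $k$-subset with probability at least $1/2$. Derandomizing this probabilistic argument via explicit $\varepsilon$-biased sample spaces (or, alternatively, via $k$-wise independent generators constructed from BCH codes) yields an explicit family of size $O(\log |C|)$ per prime and the prime can be found in time polynomial in $\log|C|$ and $k$; iterating $O(\log|C|)$ trials and taking the union brings the cardinality and construction time to within the claimed budget.

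For the second stage, I would construct a $k$-perfect family $\mathcal{F}_2$ from $[m]$ to $[k]$ of size $2^{O(k)}$ (using that $m = \text{poly}(k)$). Since the new universe is of size polynomial in $k$, one can afford a combinatorial/explicit construction: for instance, the splitter construction of Naor, Schulman, and Srinivasan, which produces an $(n,k,k)$-splitter of size $k^{O(k)} \cdot \log n$, can be specialized here (with $n = m = \text{poly}(k)$) to give size $2^{O(k)}$; each splitter element can be enumerated in time $2^{O(k)}$, keeping the total construction time within $2^{O(k)} \cdot \text{poly}(k)$. Alternatively, one may directly enumerate all functions $[m] \to [k]$ of a canonical form (e.g., based on $k$-wise independent sample spaces on a domain of polynomial size) and prune non-perfect ones.

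Finally, I would compose the two families: for every $f_1 \in \mathcal{F}_1$ and every $f_2 \in \mathcal{F}_2$, put $f_2 \circ f_1$ into the output family $\Lambda_{C,k}$. Correctness is routine (given any $k$-subset $K \subseteq C$, pick $f_1$ injective on $K$, then $f_2$ injective on $f_1(K)$); the cardinality bound is $|\mathcal{F}_1| \cdot |\mathcal{F}_2| = 2^{O(k)} \cdot \log |C|$, which can be sharpened by a factor of $\log|C|$ if the stage-1 family uses two independent layers of universal hashing (accounting for the $\log^2|C|$ in the statement), and the stage-1 derandomization and per-function evaluation on all of $C$ account for the $|C| \cdot \log^2 |C|$ factor in the running time. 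The main obstacle is the stage-1 derandomization: obtaining an \emph{explicit} $k$-perfect family of near-optimal size requires care in choosing either an $\varepsilon$-biased generator or a $k$-wise independent family whose sample space is small enough to meet the $2^{O(k)} \cdot \log^2|C|$ bound, and verifying that the bookkeeping required to enumerate these samples and compose them respects the claimed running time.
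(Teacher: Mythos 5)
This theorem is not proved in the paper at all: it is imported verbatim as Theorem~13.14 of Flum and Grohe's \emph{Parameterized Complexity Theory}, so there is no in-paper argument to compare yours against. What you sketch is, in architecture, the standard proof underlying the cited result --- a two-stage reduction (large universe $C$ to a $\mathrm{poly}(k)$-size universe, then down to $[k]$) followed by composition of the two families, with the composition step and the cardinality bookkeeping handled exactly as in the literature. So the overall plan is the right one.

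The genuine weak point is your stage-1 construction, and you flag it yourself. Derandomizing the affine-maps-over-$\mathbb{F}_p$ argument via $\varepsilon$-biased or $k$-wise independent sample spaces is not obviously within budget: to make the union bound over $\binom{k}{2}$ pairs work you need bias below $k^{-2}$, and the resulting sample space has size $\mathrm{poly}(k,\log|C|)$ with an unspecified exponent on $\log|C|$; since $\log^{c}|C|$ with $c>2$ is \emph{not} dominated by $2^{O(k)}\cdot\log^{2}|C|$ when $|C|$ is huge relative to $k$, the claimed cardinality does not follow without pinning that exponent down. The standard route (Fredman--Koml\'os--Szemer\'edi, used by Flum--Grohe and by Alon--Yuster--Zwick) avoids derandomization entirely: for every $k$-subset $S$ of $[n]$ some prime $p=O(k^{2}\log n)$ makes $x\mapsto x\bmod p$ injective on $S$, so the explicit family $\{x\mapsto x\bmod p\}$ over all such primes is already $k$-perfect; applying this twice is precisely what produces the $\log^{2}|C|$ factor, and the remaining $2^{O(k)}$ comes from the small-universe family in your stage 2. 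A second, minor issue: your affine maps land in $\mathbb{F}_p$ with $p>m$, not in $[m]$, so a further reduction modulo $m$ (costing a constant factor in the collision probability) is needed before the union bound closes. I would replace stage 1 with the prime-modulus family and the proof goes through; as written, the derandomization step is a gap rather than a routine verification.
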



\newcommand{\exchainpartition}[0]{
\begin{example}\label{ex:chainpartition}
Let $\qq$ be the poset with universe $Q=D_1 \cup D_2$, 
where $D_1=\{ d_{11},d_{12},d_{13},d_{14}\}$ and $D_2=\{ d_{21},d_{22},d_{23},d_{24}\}$, 
and cover relation $d_{11} \prec^{\qq} d_{12} \prec^{\qq} d_{13} \prec^{\qq} d_{14}$, 
$d_{21} \prec^{\qq} d_{22} \prec^{\qq} d_{23} \prec^{\qq} d_{24}$, 
$d_{11} \prec^{\qq} d_{23}$,
$d_{12} \prec^{\qq} d_{24}$, and  
$d_{22} \prec^{\qq} d_{13}$.  Then, 
$(\dd_1,\dd_2)$ is a chain partition of $\qq$.  See Figure~\ref{fig:chainpartition} (left).

Let $\pp$ be the poset with universe $P=C_1 \cup C_2$, 
where $C_1=\{ c_{11},\ldots,c_{16}\}$ and $C_2=\{ c_{21},\ldots,c_{26}\}$, 
and cover relation $c_{11} \prec^{\qq} \cdots \prec^{\qq} c_{16}$,
$c_{21} \prec^{\qq} \cdots \prec^{\qq} c_{26}$, 
$c_{11} \prec^{\qq} c_{24}$, 
$c_{12} \prec^{\qq} c_{25}$, 
$c_{13} \prec^{\qq} c_{26}$, 
$c_{21} \prec^{\qq} c_{14}$, 
$c_{22} \prec^{\qq} c_{15}$, and 
$c_{23} \prec^{\qq} c_{16}$.  Then, 
$(\cc_1,\cc_2)$ is a chain partition of $\pp$.  See Figure~\ref{fig:chainpartition} (right).

The mapping $e \colon Q \to P$ defined by 
$e(d_{11})=c_{11}$,
$e(d_{12})=c_{12}$,
$e(d_{13})=c_{15}$, 
$e(d_{14})=c_{16}$, 
$e(d_{21})=c_{21}$, 
$e(d_{22})=c_{22}$, 
$e(d_{23})=c_{24}$, 
$e(d_{24})=c_{25}$ embeds $\qq$ into $\pp$.
\end{example}

\begin{figure}[h]
\centering
\includegraphics[scale=.2]{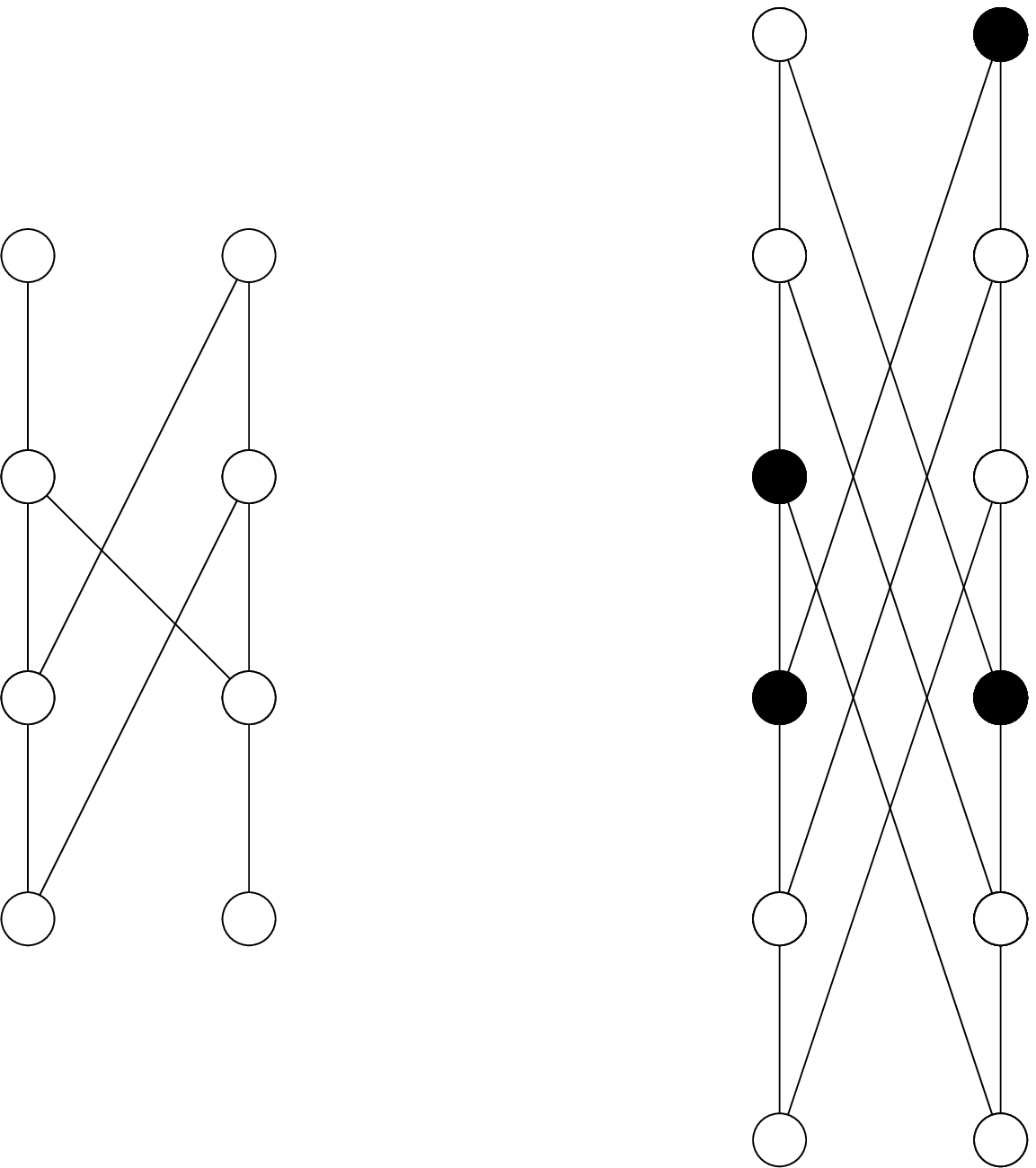}
\caption{The posets $\qq$ (left) and $\pp$ (right) in Example~\ref{ex:chainpartition}.   
The white points in $P$ form the image of the embedding $e \colon Q \to P$ in Example~\ref{ex:chainpartition}.}
\label{fig:chainpartition}
\end{figure}  
}

\subsubsection{Semilattice Compilation}\label{sect:compil}
Let $\pp$ be a 
poset.  Let $(i_1,\ldots,i_a) \in \mathbb{N}^a$ 
be a tuple of numbers.  A \emph{chain partition} of $\pp$ is a tuple $(\cc_{i_1},\ldots,\cc_{i_a})$ 
such that $\emptyset \neq C_{i_j} \subseteq P$ for all $j \in [a]$, 
$P=\bigcup_{j \in [a]}C_{i_j}$, $C_{i_j} \cap C_{i_{j'}}=\emptyset$ for all $1 \leq j<j' \leq a$, 
$\cc_{i_j}$ is the substructure of $\pp$ induced by $C_{i_j}$, 
and $\cc_{i_j}$ is a chain.  \longshort{\exchainpartition}{\exchainpartition}


\begin{theorem}\label{th:felsner}[Theorem 1, \cite{FelsnerRaghavanSpinrad03}]
Let $\pp$ be a 
poset.  Then, 
in time $O(\textup{width}(\pp) \cdot |P|^2)$, 
it is possible to compute both $\textup{width}(\pp)$ 
and a chain partition of $\pp$ of the form $(\cc_1,\ldots,\cc_{\textup{width}(\pp)})$.
\end{theorem}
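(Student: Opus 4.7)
The plan is to prove this via Dilworth's theorem combined with an incremental bipartite-matching argument. Dilworth's theorem states that the minimum number of chains in a chain partition equals $\textup{width}(\pp)$, so it suffices to compute a minimum such partition. I would reduce to bipartite matching by forming the graph $\mathbf{G}$ on two disjoint copies $P^{+}\cup P^{-}$ of the universe, with an edge $\{p^{+},q^{-}\}$ whenever $p<^{\pp} q$. Any matching $M$ in $\mathbf{G}$ induces a chain partition of size $|P|-|M|$ by starting with all singletons and merging, for each matching edge $\{p^{+},q^{-}\}$, the chain ending at $p$ with the chain starting at $q$. A maximum matching therefore yields a minimum chain partition, and its complementary cardinality equals $\textup{width}(\pp)$.

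To meet the $O(\textup{width}(\pp)\cdot|P|^2)$ bound, a black-box max-matching computation (e.g., Hopcroft--Karp at $O(|P|^{2.5})$) is too coarse, so I would compute the matching incrementally. Order the elements in a linear extension of $\leq^\pp$ and maintain at each step a minimum chain partition of the elements already processed. When a new element $p$ is added, search for an augmenting path in the current matching that allows $p$ to be absorbed without increasing the number of chains: this is an alternating BFS through chain-tops, starting at $p$, that either directly links $p$ to the top of a chain lying below it, or follows a cascade of reassignments that frees up such a link. If the search fails, $p$ opens a new chain. Because the running partition has at most $w$ chains and each alternating step scans at most $|P|$ potential predecessors, a single insertion costs $O(w\cdot|P|)$ time; summing over the $|P|$ insertions yields the claimed bound, with $\textup{width}(\pp)$ recovered as the final number of chains.

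The main obstacle will be verifying that the incremental procedure produces a minimum chain partition after every insertion, since only then does the running count really track $\textup{width}$ at the end. I would argue this via a König-style duality: if no augmenting path exists for the newly inserted $p$, then the chain-tops reachable from $p$ by alternating paths together with $p$ itself form an antichain of size one larger than the current chain count, so by Dilworth the width must indeed grow and opening a new chain is optimal. A secondary subtlety is the data-structure bookkeeping needed to realize each augmentation within $O(w\cdot|P|)$ time; maintaining, per chain, its current top element and reverse pointers produced by the alternating search should suffice, though this has to be spelled out to rule out an extra $|P|$ factor that would push the bound to $O(|P|^3)$.
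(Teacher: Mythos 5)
The paper itself does not prove this statement; it imports it wholesale from Felsner, Raghavan, and Spinrad, so your argument has to stand on its own. Your overall strategy---process the elements along a linear extension, encode chain partitions as bipartite matchings \`a la Fulkerson, restore optimality after each insertion by one alternating search, and certify failure by a Dilworth/K\"onig dual witness---is the right family of ideas. But the two steps carrying the actual content of the theorem are asserted rather than proved. First, the time bound. Your $O(\textup{width}(\pp)\cdot|P|)$ per insertion rests on the search being ``an alternating BFS through chain-tops,'' but the search cannot be confined to chain-tops: the required reassignment may split a chain at an interior element. Take chains $\{a<b\}$ and $\{c\}$ and a new maximal element $p$ with $a<p$, $c<b$, $b\parallel p$, $c\parallel p$: no chain-top lies below $p$, yet the width is still $2$; one must detach $b$ from $a$, re-hang it on $c$, and append $p$ to $a$. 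Once interior split points are admitted, a generic alternating search can expand up to $|P|$ vertices and $O(|P|^2)$ edges per insertion, giving $O(|P|^3)$ overall. Getting $O(\textup{width}(\pp)\cdot|P|)$ per insertion requires a structural argument that the search need only examine one candidate split point per chain (the topmost element of that chain below the current frontier element) and need enter each chain a bounded number of times; that is precisely the nontrivial part of the cited result, and it is absent. The ``bookkeeping'' you defer is not the problem---the number of nodes the search expands is.

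Second, the duality certificate is wrong as stated. A chain-top corresponds to an unmatched vertex on the $+$ side, so an alternating path from the unmatched $p^-$ to a chain-top \emph{is} an augmenting path; hence when the search fails, \emph{no} chain-top is reachable, and your proposed witness ``reachable chain-tops together with $p$'' collapses to $\{p\}$. Even the tops of the chains touched by the failed search need not be pairwise incomparable (in the example above, $c<b$). The antichain of size one more than the old chain count must be assembled by choosing one element from \emph{each} existing chain according to a rule extracted from the failed search (essentially the complement of a K\"onig vertex cover), and the pairwise incomparability of these choices has to be verified. Both gaps are fixable---they are exactly what the Felsner--Raghavan--Spinrad proof supplies---but as written the proposal establishes neither the stated running time nor the minimality of the partition it outputs.
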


\newcommand{\excompilp}[0]{
\begin{example}\label{ex:compilp}
Let $\qq$ and $(\dd_1,\dd_2)$ be as in Example~\ref{ex:chainpartition}.  
Let the subtuple of $(1,2)$ be $(1,2)$ itself.  Let $k_1=k_2=4=|D_1|=|D_2|$.  
Let $\mu_1 \colon D_1 \to [k_1]$ be defined by 
$\mu_1(d_{11})=1$, 
$\mu_1(d_{12})=2$, 
$\mu_1(d_{13})=3$, and  
$\mu_1(d_{14})=4$.  
Let $\mu_2 \colon D_2 \to [k_2]$ be defined by 
$\mu_2(c_{21})=1$, 
$\mu_2(c_{22})=2$, 
$\mu_2(c_{23})=3$, and  
$\mu_2(c_{24})=4$.  Then, 
$\textup{compil}(\qq,\dd_{1},\dd_{2},\mu_1,\mu_{2})$ 
is depicted in Figure~\ref{fig:compilq}.

Let $\pp$ and $(\cc_1,\cc_2)$ be as in Example~\ref{ex:chainpartition}.  
Let the subtuple of $(1,2)$ be $(1,2)$ itself.  Let $k_1=k_2=4 \leq 6=|C_1|=|C_2|$.  
Let $\lambda_1 \colon C_1 \to [k_1]$ be defined by 
$\lambda_1(c_{11})=1$, 
$\lambda_1(c_{12})=2$, 
$\lambda_1(c_{13})=4$, 
$\lambda_1(c_{14})=1$, 
$\lambda_1(c_{15})=3$, and 
$\lambda_1(c_{16})=4$.  Let $\lambda_2 \colon C_2 \to [k_2]$ be defined by 
$\lambda_2(c_{21})=1$, 
$\lambda_2(c_{22})=2$, 
$\lambda_2(c_{23})=3$, 
$\lambda_2(c_{24})=3$, 
$\lambda_2(c_{25})=4$, and 
$\lambda_2(c_{26})=1$.  Then, 
$\textup{compil}(\pp,\cc_{1},\cc_{2},\lambda_1,\lambda_{2})$ 
is depicted in Figure~\ref{fig:compilp}.  

\begin{figure}[h]
\centering
\includegraphics[scale=.19]{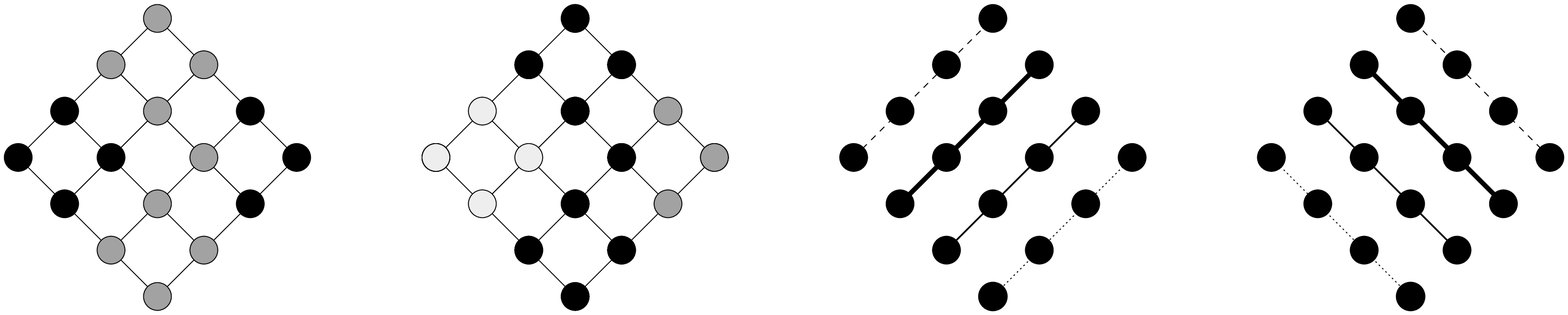}
\caption{Describing the structure $\textup{compil}(\qq,\dd_{1},\dd_{2},\mu_1,\mu_{2})$ in Example~\ref{ex:compilp}.  
From left to right.  
The first picture displays the interpretation of $L$ (thin solid edges) 
and $I_{\{1,2\}}$ (gray points) induced by \textit{(i)} and \textit{(ii)}.  
The second picture displays the interpretation of $L$ (thin solid edges), 
$O_{(2,1)}$ (light gray points), and $O_{(1,2)}$ (dark gray points) 
induced by \textit{(i)} and \textit{(ii)}.  
The third picture displays the interpretation of $R_{(1,1)}$ (dotted edges), 
$R_{(1,2)}$ (medium solid edges), $R_{(1,3)}$ (thick solid edges), and $R_{(1,4)}$ (dashed edges), 
as induced by \textit{(iii)} and $\lambda_1$.  Similarly, 
the fourth picture displays the interpretation of 
$R_{(2,1)}$, $R_{(2,2)}$, $R_{(2,3)}$, and $R_{(2,4)}$ 
induced by \textit{(iii)} and $\lambda_2$.}
\label{fig:compilq}
\end{figure}  

\begin{figure}[h]
\centering
\includegraphics[scale=.19]{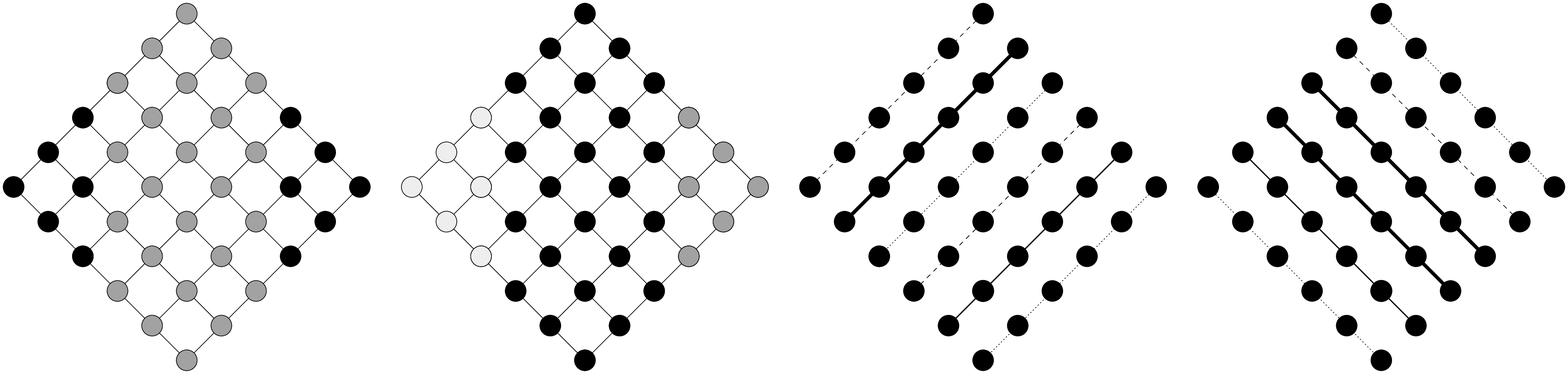}
\caption{Describing the structure $\textup{compil}(\pp,\cc_{1},\cc_{2},\lambda_1,\lambda_{2})$ in Example~\ref{ex:compilp}, 
along the lines of Figure~\ref{fig:compilq}.}
\label{fig:compilp}
\end{figure}  
\end{example}}

We are now ready to define the aforementioned compilations.  
Note that our compilations will depend not only on the poset itself, 
but also on a chain decomposition of the poset and a family of colorings 
(the significance of the latter will become clear in the proof of Lemma \ref{lemma:correct}).  


Let $\pp$ be a 
poset such that $\textup{width}(\pp)\leq w$, 
and let $(\cc_1,\ldots,\cc_{w})$ be a chain partition of $\pp$.  
Let $w' \leq w$ and let $(i_1,\ldots,i_{w'})$ be a subtuple of $(1,\ldots,w)$, 
that is, $(i_1,\ldots,i_{w'})$ is obtained from $(1,\ldots,w)$ by deleting $w-w'$ indices.  
For all $j \in [w']$, let $k_{i_j} \in \mathbb{N}$ be such that $k_{i_j} \leq |C_{i_j}|$, 
$\Lambda_j$ be a family of functions from $C_{i_j}$ to $[k_{i_j}]$, 
and $(\lambda_1,\ldots,\lambda_{w'}) \in \Lambda_1 \times \cdots \times \Lambda_{w'}$.  

For a suitable relational vocabulary $\sigma$ depending on $w'$ and $k_{i_j}$ for all $j \in [w']$, 
we define the $\sigma$-structure $$\textup{compil}(\pp,\cc_{i_1},\ldots,\cc_{i_{w'}},\lambda_1,\ldots,\lambda_{w'})\text{,}$$
which we call the \emph{compilation} of $\pp$ with respect to the \emph{coordinatization} $(\cc_{i_1},\ldots,\cc_{i_{w'}})$ 
and the \emph{coloring} $(\lambda_1,\ldots,\lambda_{w'})$, as follows 
(we use $\textup{compil}(\pp)$ as a shorthand
if the coordinatization and the coloring are contextually clear).


The relational vocabulary $\sigma$ of $\textup{compil}(\mathbf{P})$ 
consists of one binary relation symbol $L$, 
two unary relation symbols $I_{\{j,j'\}}$ and $O_{(j,j')}$ 
for each $2$-element subset $\{j,j'\}$ of $[w']$, 
and one binary relation symbol $R_{(j,k)}$ 
for each $j \in [w']$ and $k \in [k_{i_j}]$.  

The universe of $\textup{compil}(\mathbf{P})$ is 
$$\textup{compil}(P)=C_{i_1} \times C_{i_2} \times \cdots \times C_{i_{w'}}\text{.}$$

Let $\mathbf{c}=(c_1,\ldots,c_{w'})$ and $\mathbf{c}'=(c'_1,\ldots,c'_{w'})$ 
be elements of $\textup{compil}(\mathbf{P})$, and let $K_{(j,k)}=\{ c \in C_{i_j} \mid \lambda_j(c)=k \}$.  
The interpretation of the vocabulary $\sigma$ 
in $\textup{compil}(\mathbf{P})$ is the following:
\begin{enumerate}[label=\textit{(\roman*)}] 
\item The interpretation of $L$ is the set of all pairs $(\mathbf{c},\mathbf{c}')$ 
such that $c_1 \leq^\mathbf{P} c'_1,\ldots,c_{w'} \leq^\mathbf{P} c'_{w'}$.
\item For each $2$-element subset $\{j,j'\}$ of $[w']$, 
$I_{\{j,j'\}}$ and $O_{(j,j')}$ are interpreted, respectively, over  
$I_{\{j,j'\}}=\{ \mathbf{c} \mid c_{j} \parallel^\mathbf{P} c_{j'} \}$  
and $O_{(j,j')}=\{ \mathbf{c} \mid c_{j} <^\mathbf{P} c_{j'} \}$,
\item For each $j \in [w']$ and $k \in [k_{i_j}]$, 
$R_{(j,k)}$ is interpreted over the subset of the interpretation of $L$ defined by 
$$\{ (\mathbf{c},\mathbf{c}') \in L^{\textup{compil}(\mathbf{P})} \suchthat 
\text{$c_{j}\in K_{(j,k)},c_{j}=c'_{j}$} \} 
\text{.}$$
\end{enumerate}
%
%

\longshort{\excompilp}{\excompilp}

The intuition underlying the compilation procedure is the following.  
The universe of $\textup{compil}(\pp)$ is the Cartesian product 
of a family of chains $\cc_{i_1},\ldots,\cc_{i_{w'}}$ partitioning the universe of $\pp$.  
The interpretation of $L$ in $\textup{compil}(\pp)$ is the natural lattice order inherited 
by $\textup{compil}(\pp)$ from $\cc_{i_1},\ldots,\cc_{i_{w'}}$.  For 
$\{i,j\} \subseteq [w']$, the interpretations of 
$I_{\{j,j'\}}$ and $O_{(j,j')}$ in $\textup{compil}(\pp)$ 
record, respectively, incomparabilities and comparabilities between the $j$th and $j'$th 
coordinate (corresponding to elements in the chains $\cc_{i_j}$ and $\cc_{i_{j'}}$, respectively) 
of the tuples in $\textup{compil}(P)$.  Finally, 
for each $j \in [w']$ and $k \in [k_{i_j}]$, 
the interpretation of $R_{(j,k)}$ in $\textup{compil}(\pp)$ 
is the restriction of the lattice order of $\textup{compil}(\pp)$ 
to those pairs of tuples in $\textup{compil}(P)$ such that their 
$j$th coordinate is colored $k$ by $\lambda_j$; the $R_{(j,k)}$'s 
responsibility is to implement the color coding technique (in our setting), 
as it will become clear in the proof of Claim~\ref{cl:cl1}.  

We define a binary function $$s \colon \textup{compil}(P)^2 \to \textup{compil}(P)$$ 
as follows.  Let $\mathbf{c}=(c_1,\ldots,c_{w'})$ and $\mathbf{c}'=(c'_1,\ldots,c'_{w'})$ 
be elements in $\textup{compil}(P)$.  Let $j \in [w']$.  
Recalling that $\cc_{i_j}$ is a chain, 
let $d_j=\textup{min}^{\cc_{i_j}}(c_j,c'_j)$.  Define 
\begin{equation}\label{eq:semilattice}
s(\mathbf{c},\mathbf{c}')=(d_1,\ldots,d_{w'})\text{.} 
\end{equation}

Clearly, $s$ is idempotent, associative and commutative, and hence $s$ is a semilattice function over $\textup{compil}(P)$.  

\longshort{\begin{lemma}}{\begin{lemma}[$\star$]}
\label{lemma:compres}
Let $\pp$ be a 
poset, 
$(\cc_{i_1},\ldots,\cc_{i_{w'}})$ be a coordinatization of $\pp$, 
$(\lambda_1,\ldots,\lambda_{w'})$ be a coloring of $\pp$.  Then, 
the function $s$ in (\ref{eq:semilattice}) is a polymorphism of $\textup{compil}(\pp,\cc_{i_1},\ldots,\cc_{i_{w'}},\lambda_1,\ldots,\lambda_{w'})$.
\end{lemma}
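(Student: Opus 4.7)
The plan is to verify that $s$ preserves each of the four kinds of relations in the vocabulary $\sigma$ of $\textup{compil}(\pp)$: the lattice order $L$, the incomparability markers $I_{\{j,j'\}}$, the comparability markers $O_{(j,j')}$, and the color-restricted relations $R_{(j,k)}$. Throughout, I will exploit the fact that since each $\cc_{i_j}$ is a chain, the coordinate-wise minimum $\min^{\cc_{i_j}}(c_j,d_j)$ is always one of $c_j$ or $d_j$; the argument will reduce to case analyses indexed by which of the two arguments realizes the minimum in each coordinate, combined with transitivity of $\leq^\pp$.

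For $L$ I would take $(\mathbf{c},\mathbf{c}'),(\mathbf{d},\mathbf{d}') \in L^{\textup{compil}(\pp)}$ and fix a coordinate $j$. Without loss of generality, $\min(c_j,d_j)=c_j$. If $\min(c'_j,d'_j)=c'_j$, then the required inequality follows directly from $c_j \leq^\pp c'_j$; otherwise, $\min(c'_j,d'_j)=d'_j$, and $c_j \leq^\pp d_j \leq^\pp d'_j$ by transitivity. For $O_{(j,j')}$, the argument is essentially the same template: assuming $\min(c_j,d_j)=c_j$ and case-splitting on $\min(c_{j'},d_{j'})$, one reaches $c_j <^\pp c_{j'}$ directly or $c_j \leq^\pp d_j <^\pp d_{j'}$ via transitivity. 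For $R_{(j,k)}$, preservation of $L$ is already established; the equality $c_j=c'_j$ and $d_j=d'_j$ from membership in $R_{(j,k)}$ gives $\min(c_j,d_j)=\min(c'_j,d'_j)$, and since both $c_j$ and $d_j$ lie in $K_{(j,k)}$, so does their chain-minimum.

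The case that merits a bit more care is $I_{\{j,j'\}}$, where I need to show that incomparability in both coordinates is preserved. Assume $c_j \parallel^\pp c_{j'}$, $d_j \parallel^\pp d_{j'}$, and without loss of generality $\min(c_j,d_j)=c_j$, i.e.\ $c_j \leq^\pp d_j$ in $\cc_{i_j}$. If $\min(c_{j'},d_{j'})=c_{j'}$, then the required incomparability $c_j \parallel^\pp c_{j'}$ is given. Otherwise $\min(c_{j'},d_{j'})=d_{j'}$, so $d_{j'} \leq^\pp c_{j'}$, and I must derive $c_j \parallel^\pp d_{j'}$. Suppose towards contradiction that they are comparable: if $c_j \leq^\pp d_{j'}$ then $c_j \leq^\pp d_{j'} \leq^\pp c_{j'}$ contradicts $c_j \parallel^\pp c_{j'}$, while $d_{j'} \leq^\pp c_j$ gives $d_{j'} \leq^\pp c_j \leq^\pp d_j$, contradicting $d_j \parallel^\pp d_{j'}$. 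This ruling out of both comparability directions is the only genuine subtlety in the proof; everything else is bookkeeping. Once all four families of relations are verified, the lemma follows.
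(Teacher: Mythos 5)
Your proposal is correct and follows essentially the same route as the paper's proof: a relation-by-relation verification that $s$ preserves $L$, $I_{\{j,j'\}}$, $O_{(j,j')}$, and $R_{(j,k)}$, using that the coordinatewise minimum in a chain is one of its two arguments, transitivity of $\leq^\pp$ for the comparability cases, and a two-directional contradiction for the incomparability case. The only cosmetic difference is that you fix the minimizing argument in the first coordinate up to symmetry before case-splitting on the second, whereas the paper enumerates the joint cases; the content is identical.
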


\newcommand{\pfcompres}[0]{
\begin{proof}
We denote $\textup{compil}(\pp,\cc_{i_1},\ldots,\cc_{i_{w'}},\lambda_1,\ldots,\lambda_{w'})$ 
by $\textup{compil}(\pp)$ in short.  We check that $s$ preserves each relation in the vocabulary.  
In the rest of the proof, $\mathbf{c}=(c_1,\ldots,c_{w'})$, 
$\mathbf{c}'=(c'_1,\ldots,c'_{w'})$, 
$\mathbf{d}=(d_1,\ldots,d_{w'})$, and 
$\mathbf{d}'=(d'_1,\ldots,d'_{w'})$ are elements of $\textup{compil}(\pp)$. 

$L$ in $\sigma$: We claim that $s$ preserves $L$.  
Let 
$(\mathbf{c},\mathbf{c}'),(\mathbf{d},\mathbf{d}') \in L$.  
Suffices to show that, for all $j \in [w']$, 
$\textup{min}^{\pp}(c_j,d_j) \leq^{\pp} \textup{min}^{\pp}(c'_j,d'_j)$.  

By hypothesis\longshort{ we have}{,} 
$c_1 \leq^\pp c'_1,\ldots,c_{w'} \leq^\pp c'_{w'}$
and
$d_1 \leq^\pp d'_1,\ldots,d_{w'} \leq^\pp d'_{w'}$ 
so that 
$c_1 \leq^{\cc_{i_1}} c'_1,\ldots,c_{w'} \leq^{\cc_{i_{w'}}} c'_{w'}$
and
$d_1 \leq^{\cc_{i_1}} d'_1,\ldots,d_{w'} \leq^{\cc_{i_{w'}}} d'_{w'}$.  
For all $j \in [w']$, 
$c_j \leq^{\cc_{i_j}} c'_j$ and $d_j \leq^{\cc_{i_j}} d'_j$ 
implies 
$\textup{min}^{\cc_{i_j}}(c_j,d_j) \leq^{\cc_{i_j}} \textup{min}^{\cc_{i_j}}(c'_j,d'_j)$, 
which implies $\textup{min}^{\pp}(c_j,d_j) \leq^{\pp} \textup{min}^{\pp}(c'_j,d'_j)$, 
and we are done.

$I_{\{j,j'\}}$ in $\sigma$ for $1 \leq j < j' \leq w'$: We claim that $s$ preserves $I_{\{j,j'\}}$.  
Let $\mathbf{c},\mathbf{d} \in I_{\{j,j'\}}$.  
Suffices to show that 
$\textup{min}^{\cc_{i_j}}(c_j,d_j) \parallel^\pp \textup{min}^{\cc_{i_{j'}}}(c_{j'},d_{j'})$.  

Assume the contrary for a contradiction, say 
$\textup{min}^{\cc_{i_j}}(c_j,d_j) \leq^\pp \textup{min}^{\cc_{i_{j'}}}(c_{j'},d_{j'})$ 
(the other case is similar).  If $c_j \leq^{\cc_{i_j}} d_j$ and $c_{j'} \leq^{\cc_{i_{j'}}} d_{j'}$, 
then $c_j \leq^\pp c_{j'}$, contradicting the hypothesis that $c_j \parallel^\pp c_{j'}$.  
Similarly, it is impossible that $d_j \leq^{\cc_{i_j}} c_j$ and $d_{j'} \leq^{\cc_{i_{j'}}} c_{j'}$.  
So, assume that $c_j \leq^{\cc_{i_j}} d_j$ and $d_{j'} \leq^{\cc_{i_{j'}}} c_{j'}$.  
Then, $c_j \leq^\pp d_{j'} \leq^\pp c_{j'}$ by the absurdum hypothesis and the case distinction, 
a contradiction.  The case $d_j \leq^{\cc_{i_j}} c_j$ and $c_{j'} \leq^{\cc_{i_{j'}}} d_{j'}$ is similar.

$O_{(j,j')}$ and $O_{(j',j)}$ in $\sigma$ for $1 \leq j < j' \leq w'$: 
We claim that $s$ preserves $O_{(j,j')}$ 
and $O_{(j',j)}$.  We argue for $O_{(j,j')}$, 
and $O_{(j',j)}$ is similar.  Let $\mathbf{c},\mathbf{d} \in O_{(j,j')}$.  
Suffices to show that $\textup{min}^{\cc_{i_j}}(c_j,d_j) \leq^{\pp} \textup{min}^{\cc_{i_{j'}}}(c_{j'},d_{j'})$, 
since $\cc_{i_j} \cap \cc_{i_{j'}}=\emptyset$.  

If $c_{j} \leq^{\cc_{i_j}} d_{j}$ and $c_{j'} \leq^{\cc_{i_{j'}}} d_{j'}$, 
then $c_{j} \leq^\pp c_{j'}$ by hypothesis; similarly 
if $d_{j} \leq^{\cc_{i_j}} c_{j}$ and $d_{j'} \leq^{\cc_{i_{j'}}} c_{j'}$.  
So, assume that $c_{j} \leq^{\cc_{i_j}} d_{j}$ and $d_{j'} \leq^{\cc_{i_{j'}}} c_{j'}$.  
Combining the main hypothesis and the case distinction, 
we have $c_{j} \leq^{\cc_{i_j}} d_{j} \leq^{\pp} d_{j'}$, 
that is, $c_{j} \leq^{\pp} d_{j'}$.  
Similarly, $d_{j} \leq^{\cc_{i_j}} c_{j}$ and $c_{j'} \leq^{\cc_{i_{j'}}} d_{j'}$ 
implies $d_{j} \leq^{\pp} c_{j'}$.  

$R_{(j,k)}$ for $j \in [w']$ and $k \in [k_{i_j}]$: 
We claim that $s$ preserves $R_{(j,k)}$.  

To prove the claim, let 
$(\mathbf{c},\mathbf{d}),(\mathbf{c}',\mathbf{d}') \in R$. 
Let $b,b' \in K_{(j,k)}$ be such that $c_{j}=d_{j}=b$ and $c'_{j}=d'_{j}=b'$.  
Assume $b \leq^{\cc_{i_j}} b'$ (the other case is similar).  Clearly, 
$\textup{min}^{\cc_{i_j}}(c_{j},c'_{j})=\textup{min}^{\cc_{i_{j}}}(d_{j},d'_{j})=b$.  
By hypothesis, $(\mathbf{c},\mathbf{d}),(\mathbf{c}',\mathbf{d}') \in L$, 
so that, by the above, $$(s(\mathbf{c},\mathbf{c}'), 
s(\mathbf{d},\mathbf{d}')) \in L\text{,}$$ 
and thus, by definition, 
$$(s(\mathbf{c},\mathbf{c}'), 
s(\mathbf{d},\mathbf{d}')) \in R\text{,}$$ 
which completes the proof.
\end{proof} 
}

\longversion{\pfcompres}

It follows from Lemma \ref{lemma:compres} and Theorem \ref{th:semilpoly} 
that, for every poset $\mathbf{P}$ and every compilation $\mathbf{P}^*$ of $\mathbf{P}$, 
the problem $\textsc{Hom}(\mathbf{P}^*)$ is polynomial-time tractable; 
this settles the main result of this section.


\subsubsection{Reduction}\label{sect:reduction}
The following lemma reduces an instance of the poset embedding problem 
to a family of instances of the homomorphism problem for suitable compilations of the given posets.  
The lemma is illustrated in Example \ref{ex:compilhom}.

\begin{lemma}
\label{lemma:correct}
Let $\qq$ and $\pp$ be 
posets such that $\textup{width}(\qq) \leq \textup{width}(\pp)=w$. 
Let $(\cc_1,\ldots,\cc_{w})$ be a chain partition of $\pp$.  The following are equivalent.
\begin{enumerate}[label=\textit{(\roman*)}] 
\item $\qq$ embeds into $\pp$. 
\item There exist $w' \leq w$, 
a subtuple $(i_1,\ldots,i_{w'})$ of $(1,\ldots,w)$, 
a chain partition $(\dd_{i_1},\ldots,\dd_{i_{w'}})$ of $\qq$ 
such that $|D_{i_j}| \leq |C_{i_j}|$ for all $j \in [w']$, 
and a tuple $(\mu_1,\ldots,\mu_{w'})$ of bijections from $D_{i_j}$ to $[|D_{i_j}|]$ for all $j \in [w']$, 
such that, for all tuples $(\Lambda_{1},\ldots,\Lambda_{w'})$, 
where $\Lambda_j$ is a $|D_{i_j}|$-perfect family of hash functions from 
$C_{i_j}$ to $[|D_{i_j}|]$ for all $j \in [w']$, 
there exists a tuple $(\lambda_1,\ldots,\lambda_{w'}) \in \Lambda_{1} \times \ldots \times \Lambda_{w'}$ such that 
such that $$\mathbf{Q}^* \in \textsc{Hom}(\mathbf{P}^*)\text{,}$$
where 
\begin{align*}
\mathbf{Q}^*&=\textup{compil}(\mathbf{Q},\dd_{i_1},\ldots,\dd_{i_{w'}},\mu_1,\ldots,\mu_{w'})\text{,}\\
\mathbf{P}^*&=\textup{compil}(\mathbf{P},\cc_{i_1},\ldots,\cc_{i_{w'}},\lambda_1,\ldots,\lambda_{w'})\text{.} 
\end{align*}
\end{enumerate}
\end{lemma}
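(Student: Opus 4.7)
\medskip\noindent\textit{Proof proposal.} The equivalence will be established by proving the two directions separately. The direction $\textit{(ii)}\Rightarrow\textit{(i)}$ is essentially structural: the plan is to show that any homomorphism $h\colon\mathbf{Q}^*\to\mathbf{P}^*$ is forced to be coordinatewise, and then to extract an embedding from its components. For any $\mathbf{q},\mathbf{q}'\in\textup{compil}(Q)$ with $q_j=q'_j$ and $q_{j'}\leq^\qq q'_{j'}$ for every $j'\neq j$, the pair $(\mathbf{q},\mathbf{q}')$ lies in $R_{(j,\mu_j(q_j))}^{\mathbf{Q}^*}$, so its image under $h$ must lie in $R_{(j,\mu_j(q_j))}^{\mathbf{P}^*}$, pinning the $j$-th coordinates of $h(\mathbf{q})$ and $h(\mathbf{q}')$ to a common value. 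Routing any two tuples sharing their $j$-th coordinate through the common upper bound (obtained by maximising all other coordinates) shows that $h(\mathbf{q})_j$ depends only on $q_j$, so one may write $h_j(q):=h(\mathbf{q})_j$. The same $R_{(j,k)}$-constraint yields $\lambda_j\circ h_j=\mu_j$, so each $h_j$ is injective (as $\mu_j$ is bijective), and $L$-preservation restricted to tuples differing only in coordinate $j$ forces $h_j$ to be order-preserving on the chain $\dd_{i_j}$. Defining $e(q):=h_j(q)$ for $q\in D_{i_j}$ then yields an injection (by disjointness of the $C_{i_j}$'s), and preservation of $\leq$ (respectively of $\parallel$) across distinct chains follows from preservation of $O_{(j,j')},O_{(j',j)}$ (respectively $I_{\{j,j'\}}$); this makes $e$ an embedding.

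For $\textit{(i)}\Rightarrow\textit{(ii)}$, starting from an embedding $e\colon\qq\to\pp$, I would set $D_{i_j}:=e^{-1}(C_{i_j})$ and retain only the indices where this preimage is nonempty, obtaining the subtuple $(i_1,\ldots,i_{w'})$ and a chain partition $(\dd_{i_1},\ldots,\dd_{i_{w'}})$ of $\qq$ with $|D_{i_j}|\leq|C_{i_j}|$. I then choose each $\mu_j$ as the order-preserving bijection $D_{i_j}\to[|D_{i_j}|]$. Given any $|D_{i_j}|$-perfect hash family $\Lambda_j$, the perfect-hash property applied to the $|D_{i_j}|$-subset $e(D_{i_j})\subseteq C_{i_j}$ furnishes $\lambda_j\in\Lambda_j$ with $\lambda_j|_{e(D_{i_j})}$ injective, hence a bijection onto $[|D_{i_j}|]$. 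I define the homomorphism $h$ coordinatewise, taking $h_j(q)$ to be the unique element of $e(D_{i_j})$ whose $\lambda_j$-color is $\mu_j(q)$; by construction $\lambda_j\circ h_j=\mu_j$, so $R_{(j,k)}$ is preserved, while preservation of $L$, $I_{\{j,j'\}}$, and $O_{(j,j')}$ is inherited because the image of $h$ lies coordinatewise inside $e(Q)$ and $e$ is an embedding of $\qq$ into $\pp$.

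The technical heart of the argument, and the step I expect to be the main obstacle, is the alignment inside $\textit{(i)}\Rightarrow\textit{(ii)}$: the coordinatewise $h_j$ defined through the color equation coincides with $e|_{D_{i_j}}$ precisely when the permutation $\mu_j^{-1}\circ(\lambda_j\circ e)$ is the identity on $D_{i_j}$, that is, when $\lambda_j$ induces on $e(D_{i_j})$ exactly the bijection prescribed by $\mu_j$ via $e$. Ensuring that such a $\lambda_j$ lies in $\Lambda_j$ is the raison d'\^etre of color coding: the $|D_{i_j}|$-perfect property of $\Lambda_j$ must be used in concert with the pre-chosen order-preserving $\mu_j$ to select a hash function whose restriction to $e(D_{i_j})$ realises the prescribed bijection, thereby turning the abstract embedding $e$ into a genuine coordinatewise homomorphism of the compilations.
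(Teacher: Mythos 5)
Your direction \textit{(ii)}$\Rightarrow$\textit{(i)} is essentially the paper's argument: the relations $R_{(j,k)}$ force any homomorphism to be coordinatewise (the paper routes two tuples sharing a $j$-th coordinate through their coordinatewise \emph{meet} rather than their join, but this is immaterial), injectivity comes from $\lambda_j\circ h_j=\mu_j$ with $\mu_j$ bijective, and comparabilities and incomparabilities across chains are read off from $O$ and $I$. That half is sound.

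The forward direction \textit{(i)}$\Rightarrow$\textit{(ii)} has a genuine gap, and it sits exactly at the point you flag as ``the technical heart.'' You pre-commit to $\mu_j$ (the order-preserving bijection $D_{i_j}\to[|D_{i_j}|]$) and then need a $\lambda_j\in\Lambda_j$ whose restriction to $e(D_{i_j})$ \emph{realises the prescribed bijection} $\mu_j\circ e^{-1}$. A $k$-perfect family of hash functions does not provide this: it guarantees only that for every $k$-subset $K$ \emph{some} member is injective on $K$; it says nothing about which of the $k!$ bijections $K\to[k]$ is realised, and a family may well contain no member inducing the particular bijection you prescribed. Without that alignment your candidate homomorphism $h_j=(\lambda_j|_{e(D_{i_j})})^{-1}\circ\mu_j$ is merely some bijection $D_{i_j}\to e(D_{i_j})$; it differs from $e|_{D_{i_j}}$ by a nontrivial permutation of the chain, so it need not be order-preserving, and preservation of $L$, $O_{(j,j')}$ and $I_{\{j,j'\}}$ fails --- ``the image lies coordinatewise inside $e(Q)$'' is not enough. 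The paper resolves this by reversing the dependency: it first takes $\lambda_j\in\Lambda_j$ injective on $e(D_{i_j})$ (which perfect hashing \emph{does} give) and then \emph{defines} $\mu_j:=\lambda_j\circ e|_{D_{i_j}}$, so that the coordinatewise map $h=(e,\ldots,e)$ is a homomorphism by construction; the algorithm compensates for not knowing $\mu_j$ in advance by enumerating all bijections from $D_{i_j}$ to $[|D_{i_j}|]$ in its outer loop. To repair your proof you should drop the insistence on an order-preserving $\mu_j$ and let $\mu_j$ be determined by the chosen hash function (or, equivalently, prove the statement with the existential quantifier over $(\mu_1,\ldots,\mu_{w'})$ placed after the universal quantifier over $(\Lambda_1,\ldots,\Lambda_{w'})$, which is all the algorithm needs).
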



\newcommand{\pfforwclaima}[0]{
\begin{proof}\renewcommand{\qedsymbol}{$\dashv$}[Proof of Claim~\ref{claim:forwclaim1}]
To prove the claim, let $e(Q)=\{ e(q) \mid q \in Q \}$.  Let $(i_1,i_2,\ldots,i_{w'})$ be the subtuple of $(1,2,\ldots,w)$ 
uniquely determined by deleting the index $i \in [w]$ if and only if $e(Q) \cap C_{i}=\emptyset$.  
For all $j \in [w']$, let $D_{i_j}=e^{-1}(C_{i_j})$, 
and let $\mathbf{D}_{i_j}$ be the substructure of $\mathbf{Q}$ induced by $D_{i_j}$.  
Then, $(\mathbf{D}_{i_1},\ldots,\mathbf{D}_{i_{w'}})$ is a chain partition of $\mathbf{Q}$, 
and clearly $e(D_{i_j}) \subseteq C_{i_j}$ for all $j \in [w']$, which settles the claim.
\end{proof}}

\newcommand{\pfforwclaim}[0]{
\begin{proof}\renewcommand{\qedsymbol}{$\dashv$}[Proof of Claim~\ref{claim:forwclaim}]
Note that $\qq^*$ and $\pp^*$ have the same vocabulary.  
To prove the claim, we check that $h$ preserves all relations in the vocabulary.  
Below, $\mathbf{d}=(d_1,\ldots,d_{w'})$ and $\mathbf{d}'=(d'_1,\ldots,d'_{w'})$ are elements of $\qq^*$.

$L$:  If $(\mathbf{d},\mathbf{d}') \in L^{\qq^*}$, 
then $d_j \leq^{\dd_{i_j}} d'_j$ for all $j \in [w']$, 
then $d_j \leq^{\qq} d'_j$ for all $j \in [w']$, 
then $e(d_j) \leq^{\pp} e(d'_j)$ for all $j \in [w']$, 
then $e(d_j) \leq^{\cc_{i_j}} e(d'_j)$ for all $j \in [w']$, 
then $((e(d_1),\ldots,e(d_{w'})),(e(d'_1),\ldots,e(d'_{w'}))) \in L^{\pp^*}$. 
Altogether this yields $(h(\mathbf{d}),h(\mathbf{d}')) \in L^{\pp^*}$.

$I_{\{j,j'\}}$:  If $\mathbf{d} \in I_{\{j,j'\}}^{\qq^*}$, 
then $d_j \parallel^{\qq} d_{j'}$, 
then $e(d_j) \parallel^{\pp} e(d_{j'})$, 
then $(e(d_1),\ldots,e(d_{w'})) \in I_{\{j,j'\}}^{\pp^*}$, 
that is, $h(\mathbf{d}) \in I_{\{j,j'\}}^{\pp^*}$.

$1 \leq j<j' \leq w'$, $O_{(j,j')}$: If $\mathbf{d} \in O_{(j,j')}^{\qq^*}$, 
then $d_j <^{\qq} d_{j'}$, 
then $e(d_j) <^{\pp} e(d_{j'})$, 
then $(e(d_1),\ldots,e(d_{w'})) \in O_{(j,j')}^{\pp^*}$, 
that is, $h(\mathbf{d}) \in O_{(j,j')}^{\pp^*}$.  The case $O_{(j',j)}$ is similar.

$j \in [w']$, $k \in [|D_{i_j}|]$, $R_{(j,k)}$:   
If $(\mathbf{d},\mathbf{d}') \in R_{(j,k)}^{\qq^*}$, 
then first observe that $(\mathbf{d},\mathbf{d}') \in L^{\qq^*}$, 
so that $(h(\mathbf{d}),h(\mathbf{d}')) \in L^{\pp^*}$ by the argument above.  
We have that $d_{j}=d'_{j}=d$ for some $d \in D_{i_j}$ such that $\mu_{j}(d)=k$.  
By construction, $\mu_{j}(d)=k$ if and only if there exists $c \in \cc_{i_j}$ such that 
$d=e^{-1}(c)$ 
and $\lambda_{j}(c)=k$.   Therefore, 
$e(d_{j})=e(d'_{j})=e(d)=c$, so that 
$((e(d_1),\ldots,e(d_{w'})),(e(d'_1),\ldots,e(d'_{w'}))) \in R_{(j,k)}^{\pp^*}$, 
that is, 
$(h(\mathbf{d}),h(\mathbf{d}')) \in R_{(j,k)}^{\pp^*}$.  
\end{proof}}

\newcommand{\pfcla}[0]{
\begin{proof}\renewcommand{\qedsymbol}{$\dashv$}[Proof of Claim~\ref{cl:cl1}]
Let $\mu_j(q)=k$.  Since $\{ \mathbf{d} \in Q^* \mid d_j=q \}$ is nonempty, 
there exists at least one element $p \in C_{i_j}$ such that 
$\mathbf{c}$ is in the image of $h$ in $P^*$ and $c_j=p$.  
Let $p,p' \in C_{i_j}$ be such that, 
for some $\mathbf{d},\mathbf{d}' \in Q^*$ with $d_j=d'_j=q$, 
$h(\mathbf{d})=\mathbf{c}$ and $c_j=p$, 
and $h(\mathbf{d}')=\mathbf{c}'$ and $c'_j=p'$.  
We prove that $p=p'$ and $\lambda_j(p)=k$.    We distinguish two cases.    

Case $1$:  $(\mathbf{d},\mathbf{d}') \in L^{\qq^*}$ 
or $(\mathbf{d}',\mathbf{d}) \in L^{\qq^*}$.  
Assume $(\mathbf{d},\mathbf{d}') \in L^{\qq^*}$.  
Then, $(\mathbf{d},\mathbf{d}') \in R_{j,k}^{\qq^*}$.  
Then, $(\mathbf{c},\mathbf{c}') \in R_{j,k}^{\pp^*}$, 
so that $c_j=c'_j$ by definition of $R_{j,k}^{\pp^*}$, that is $p=p'$, and $\lambda_j(p)=k$.  
The argument is similar if $(\mathbf{d}',\mathbf{d}) \in L^{\qq^*}$.

Case $2$:  $(\mathbf{d},\mathbf{d}') \not\in L^{\qq^*}$ 
and $(\mathbf{d}',\mathbf{d}) \not\in L^{\qq^*}$.  
Clearly it then holds that $\textup{min}^{\qq_{i_j}}(d_{j},d'_{j})=q$ and 
$\textup{min}^{\qq_{i_{j'}}}(d_{j'},d'_{j'}) \leq^{\qq_{i_{j'}}} d_{j'},d'_{j'}$ for all $j' \in [w']$.  
Therefore, 
$$(
(\textup{min}^{\qq_{i_1}}(d_1,d'_1),\ldots,\textup{min}^{\qq_{i_{w'}}}(d_{w'},d'_{w'})),
\mathbf{d}
) \in R_{j,k}^{\qq^*}\text{,}$$ 
and 
$$(
(\textup{min}^{\qq_{i_1}}(d_1,d'_1),\ldots,\textup{min}^{\qq_{i_{w'}}}(d_{w'},d'_{w'})),
\mathbf{d}'
) \in R_{j,k}^{\qq^*}\text{.}$$
Let $$h((\textup{min}^{\qq_{i_1}}(d_1,d'_1),\ldots,\textup{min}^{\qq_{i_{w'}}}(d_{w'},d'_{w'})))=\mathbf{c}''\text{.}$$
Then, 
$(\mathbf{c}'',\mathbf{c}) \in R_{j,k}^{\pp^*}$ 
and 
$(\mathbf{c}'',\mathbf{c}') \in R_{j,k}^{\pp^*}$, 
so that $c''_j=c_j=c'_j$, that is $p=p'$, and $\lambda_j(p)=k$. 
\end{proof}}

\newcommand{\pfclb}[0]{
\begin{proof}\renewcommand{\qedsymbol}{$\dashv$}[Proof of Claim~\ref{cl:cl2}]
Let $q,q' \in Q$.  
It is sufficient to check that $q<^\qq q'$ implies $e(q)<^\pp e(q')$, 
and $q\parallel^\qq q'$ implies $e(q)\parallel^\pp e(q')$.  
Let $j,j' \in [w']$ be such that $q \in D_{i_j}$ and $q' \in D_{i_{j'}}$.  

$q<^\qq q'$ implies $e(q)<^\pp e(q')$:  Assume $q<^\qq q'$.  Assume that $j \leq j'$ (the case $j' \leq j$ is similar).  
We distinguish two cases.    

Case $1$:  If $j=j'$, then let $\mu_j(q)=k$ and $\mu_j(q')=k'$.    Since $q,q' \in D_{i_j}$ and $\mu_j$ is a bijection from $D_{i_j}$ 
to $[|D_{i_j}|]$, we have that $k \neq k'$.  Hence, if $e(q)=p \in C_{i_j}$ and $e(q')=p' \in C_{i_j}$, 
then by the definition of $e$ we have that $\lambda_j(p)=k\neq k'=\lambda_j(p')$, so that $p \neq p'$.  
We have that 
\begin{align*}
( &(\textup{bot}(\dd_{i_1}),\ldots,q,\ldots,\textup{bot}(\dd_{i_{w'}})), \\
  &(\textup{bot}(\dd_{i_1}),\ldots,q',\ldots,\textup{bot}(\dd_{i_{w'}}))) 
\in L^{\qq^*}\text{,} 
\end{align*}
where $q$ and $q'$ occur at the $j$th coordinate, 
and $\textup{bot}(\dd_{i_{j''}})$ is the bottom of chain $\dd_{i_{j''}}$ for all $j'' \in [w']\setminus\{j\}$. Let\shortversion{ us set}
$h((\textup{bot}(\dd_{i_1}),\ldots,q,\ldots,\textup{bot}(\dd_{i_{w'}})))=\mathbf{c} \in P^*$ 
and similarly 
$h((\textup{bot}(\dd_{i_1}),\ldots,q',\ldots,\textup{bot}(\dd_{i_{w'}})))=\mathbf{c}' \in P^*$.
Then $$(\mathbf{c},\mathbf{c}') \in L^{\pp^*}\text{,}$$
so that, in particular, $c_j \leq^{\cc_{i_j}} c'_j$.  
We claim that $c_j=p$.  Indeed, since $h$ is a homomorphism, 
it is the case that $\mu_j(q)=\lambda_j(c_j)=k$, because 
there is a $R_{(j,k)}$ loop over the elements of $(\textup{bot}(\dd_{i_1}),\ldots,q,\ldots,\textup{bot}(\dd_{i_{w'}}))$ 
in $\qq^*$.  
By Claim~\ref{cl:cl1}, there exists a unique element in $\cc_{i_j}$ 
having the same color of $q$ and occurring at the $j$th coordinate 
of any $h((\ldots,q,\ldots)) \in P^*$, and this element is $e(q)=p$ by definition.  
Similarly, $c'_j=p'$.  Thus, since we observed that $p \neq p'$, 
we have that $p <^{\cc_{i_j}} p'$, and therefore, $e(q)=p <^{\pp} p'=e(q')$.  

Case $2$:  If $j<j'$, then $e(q)=p \in C_{i_j}$ and $e(q')=p' \in C_{i_{j'}}$, so that $p \neq p'$ 
because $C_{i_j} \cap C_{i_{j'}}=\emptyset$.  We have that 
$$(\ldots,q,\ldots,q',\ldots) \in O_{(j,j')}^{\qq^*}\text{,}$$
where $q$ occurs at the $j$th coordinate and $q'$ occurs at the $j'$th coordinate, 
so that, if $h((\ldots,q,\ldots,q',\ldots))=\mathbf{c} \in P^*$, then  
$$\mathbf{c} \in O_{(j,j')}^{\pp^*}\text{,}$$
that is, $c_j \leq^{\pp} c_{j'}$.  We claim that $c_j=p$ and $c_{j'}=p'$, 
which implies $e(q)=p <^{\pp} p'=e(q')$.  
Indeed, since $h$ is a homomorphism, it is the case that $\mu_j(q)=\lambda_j(c_j)=k$ 
and $\mu_{j'}(q')=\lambda_{j'}(c_{j'})=k'$, because 
there is both a $R_{(j,k)}$ loop and a $R_{(j',k')}$ loop over $(\ldots,q,\ldots,q',\ldots)$ in $\qq^*$.  
Then, by Claim~\ref{cl:cl1} and the definition of $e$, it is the case that $c_j=e(q)=p$ 
and $c_{j'}=e(q')=p'$.

$q \parallel^\qq q'$ implies $e(q) \parallel^\pp e(q')$:  Let $\mu_j(q)=k$ and $\mu_{j'}(q')=k'$.  
We have that 
$$(\ldots,q,\ldots,q',\ldots) \in I_{\{j,j'\}}^{\qq^*}\text{,}$$
where $q$ occurs at the $j$th coordinate and $q'$ occurs at the $j'$th coordinate, 
so that, if $h((\ldots,q,\ldots,q',\ldots))=\mathbf{c} \in P^*$, then  
$$\mathbf{c} \in I_{\{j,j'\}}^{\pp^*}\text{,}$$
that is, $c_j\parallel^{\pp} c_{j'}$.  We claim that $c_j=e(q)$ and $c_{j'}=e(q')$, 
which implies $e(q) \parallel^{\pp} e(q')$.  
Indeed, since $h$ is a homomorphism, it is the case that $\lambda_j(c_j)=k$ 
and $\lambda_{j'}(c_{j'})=k'$, because 
there is both a $R_{(j,k)}$ loop and a $R_{(j',k')}$ loop over $(\ldots,q,\ldots,q',\ldots)$ in $\qq^*$.  
Then, by Claim~\ref{cl:cl1} and the definition of $e$, it is the case that $c_j=e(q)$ 
and $c_{j'}=e(q')$.
\end{proof}}

\newcommand{\excompilhom}[0]{
\begin{example}\label{ex:compilhom}
Let $\qq$ and $\pp$ be the posets in Example~\ref{ex:chainpartition}, 
so that $\qq$ embeds into $\pp$ via the map $e \colon Q \to P$ defined 
in the example (see Figure~\ref{fig:chainpartition}).  Let 
$\qq^*=\textup{compil}(\qq,\dd_{1},\dd_{2},\mu_1,\mu_{2})$ 
and $\pp^*=\textup{compil}(\pp,\cc_{1},\cc_{2},\lambda_1,\lambda_{2})$ 
be the structures in Example~\ref{ex:compilp}, respectively compiling $\qq$ and $\pp$.  
The homomorphism $h \colon Q^* \to P^*$, corresponding to the embedding $e \colon Q \to P$ as by (the forward direction of) 
Lemma~\ref{lemma:correct}, 
is depicted in Figure~\ref{fig:compilhom}.

\begin{figure}[h]
\centering
\includegraphics[scale=.2]{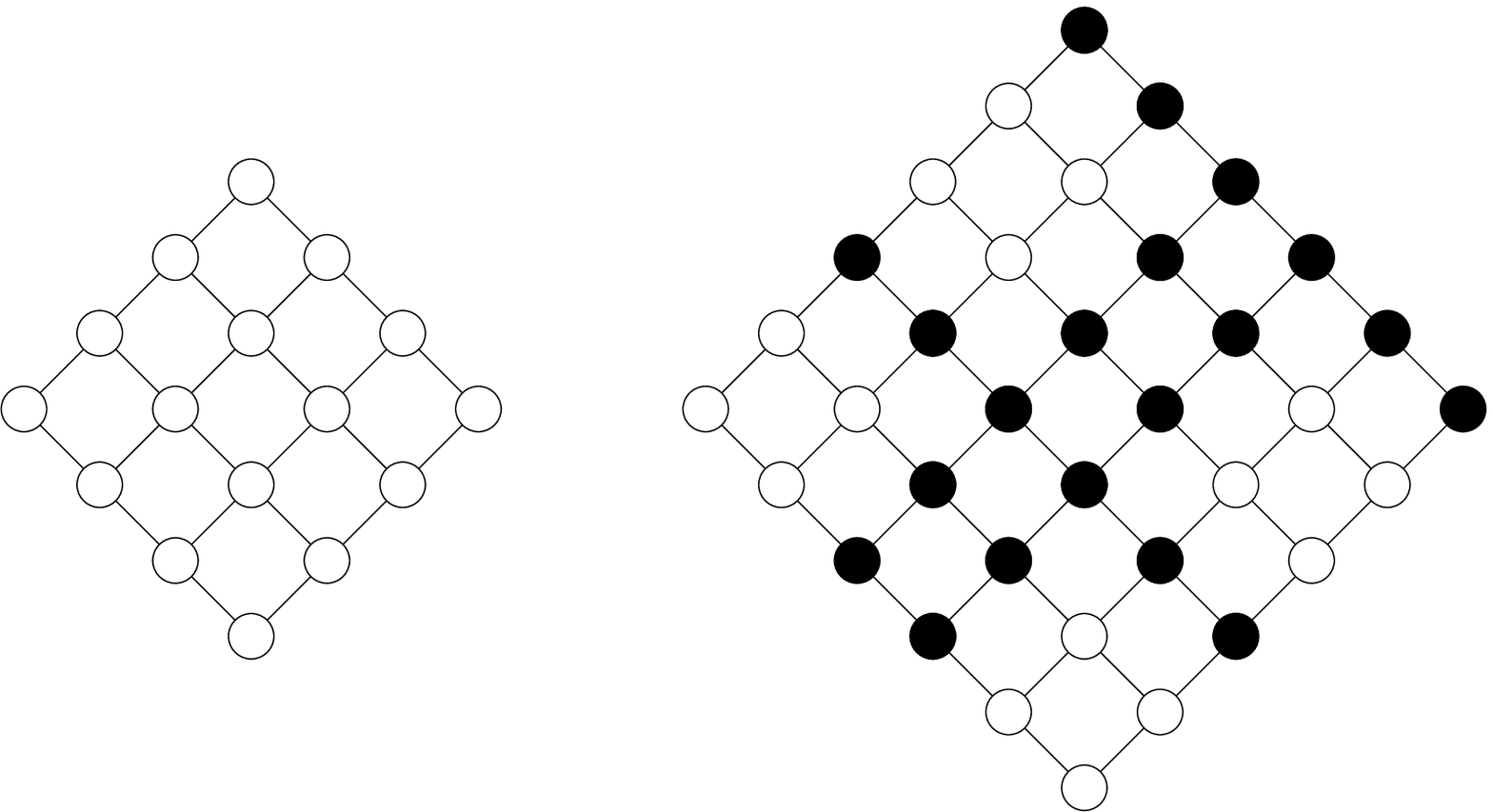}
\caption{The structures $\qq^*$ (left) and $\pp^*$ (right) in Example~\ref{ex:compilhom}.  
The white points in $P^*$ form the image of the homomorphism $h \colon Q^* \to P^*$ in Example~\ref{ex:compilhom}.  
It is possible to check that $h$ is a homomorphism by direct inspection of Figure~\ref{fig:compilq} and Figure~\ref{fig:compilp}.}
\label{fig:compilhom}
\end{figure}  
\end{example}}



\begin{proof}[Proof of Lemma~\ref{lemma:correct}]
$\textit{(i)} \Rightarrow \textit{(ii)}$:  Let $e \colon Q\to P$ be an embedding of $\qq$ into $\pp$.  

\longshort{\begin{claim}}{\begin{claim}}
\label{claim:forwclaim1}
There exist $w' \in \mathbb{N}$ such that $\textup{width}(\qq) \leq w' \leq w$, 
a subtuple $(i_1,i_2,\ldots,i_{w'})$ of $(1,2,\ldots,w)$, 
and a chain partition $(\dd_{i_1},\ldots,\dd_{i_{w'}})$ of $\qq$ 
such that, for all $j \in [w']$, $e(D_{i_j})=\{ e(d) \mid d \in D_{i_j} \} \subseteq C_{i_j}$.  
\end{claim}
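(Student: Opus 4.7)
The plan is to pull back the given chain partition of $\pp$ along the embedding $e$. Concretely, for each $i \in [w]$ set $D_i = e^{-1}(C_i) \subseteq Q$, let $\dd_i$ be the substructure of $\qq$ induced by $D_i$, and then discard the indices $i$ for which $D_i = \emptyset$; call the remaining indices, in increasing order, $i_1 < i_2 < \cdots < i_{w'}$. The resulting tuple $(\dd_{i_1},\ldots,\dd_{i_{w'}})$ will be the required chain partition of $\qq$, and the containment $e(D_{i_j}) \subseteq C_{i_j}$ holds by construction.

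The remaining things to check are: that $(\dd_{i_1},\ldots,\dd_{i_{w'}})$ is indeed a chain partition of $\qq$, and that $\textup{width}(\qq) \leq w' \leq w$. The upper bound $w' \leq w$ is immediate since $(i_1,\ldots,i_{w'})$ is a subtuple of $(1,\ldots,w)$. For the partition properties: disjointness of the $D_i$'s follows from injectivity of $e$ together with disjointness of the $C_i$'s; coverage $\bigcup_{j} D_{i_j} = Q$ follows from $e(Q) \subseteq P = \bigcup_{i} C_i$; and each $D_{i_j}$ is nonempty by our pruning. The key point is that each $\dd_{i_j}$ is a chain in $\qq$: for any $q,q' \in D_{i_j}$, we have $e(q),e(q') \in C_{i_j}$, which is a chain in $\pp$, so $e(q) \leq^\pp e(q')$ or $e(q') \leq^\pp e(q)$; since $e$ is an embedding (in particular a strong homomorphism), it reflects the order relation, so $q \leq^\qq q'$ or $q' \leq^\qq q$.

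Finally, the lower bound $\textup{width}(\qq) \leq w'$ follows from Dilworth's theorem: we have exhibited a partition of $Q$ into $w'$ chains, so the maximum antichain in $\qq$ has size at most $w'$. (Alternatively, and without invoking Dilworth: if $A \subseteq Q$ were an antichain in $\qq$ with $|A| > w'$, then by the pigeonhole principle two distinct elements $q,q' \in A$ would lie in the same $D_{i_j}$, contradicting the fact just established that $\dd_{i_j}$ is a chain.)

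I do not anticipate a genuine obstacle here; the only subtlety is invoking the strong part of the embedding hypothesis (not merely the homomorphism part) to ensure that preimages of chains are chains, and then using Dilworth (or a pigeonhole argument) for the width bound. Everything else is bookkeeping about subtuples and discarding empty blocks.
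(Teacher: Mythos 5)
Your proposal is correct and follows exactly the paper's argument: pull back the chain partition $(\cc_1,\ldots,\cc_w)$ along $e$ by setting $D_i = e^{-1}(C_i)$, discard the empty blocks, and observe the result is a chain partition of $\qq$. The paper states the chain-partition property without elaboration, whereas you supply the (correct) details — in particular the use of the strong-homomorphism part of the embedding to see that preimages of chains are chains, and the pigeonhole bound $\textup{width}(\qq)\leq w'$.
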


\longshort{\pfforwclaima}{\pfforwclaima}

We let $\qq^*=\textup{compil}(\qq,\dd_{i_1},\ldots,\dd_{i_{w'}},\mu_1,\ldots,\mu_{w'})$ 
and $\pp^*=\textup{compil}(\pp,\cc_{i_1},\ldots,\cc_{i_{w'}},\lambda_1,\ldots,\lambda_{w'})$ 
be the compilations of $\qq$ and $\pp$ respectively, given by the colorings 
$(\mu_1,\ldots,\mu_{w'})$ and $(\lambda_1,\ldots,\lambda_{w'})$ defined as follows.  

For each $j \in [w']$, let $\Lambda_j$ be a $|D_{i_j}|$-perfect family of hash functions from $C_{i_j}$ to $[|D_{i_j}|]$.  
Let $j \in [w']$.  Let $\lambda_j \in \Lambda_j$ be such that $\lambda_j|_{e(D_{i_j})}$ is injective; 
indeed such a $\lambda_j$ exists, because 
$e(D_{i_j})$ is a subset of $C_{i_j}$ of cardinality $|D{i_j}|$ (as $e$ is injective), 
and $\Lambda_j$ is a $|D_{i_j}|$-perfect family of hash functions from $C_{i_j}$ to $[|D_{i_j}|]$.  
Let $e(D_{i_j})=\{c_1,\ldots,c_{|D_{i_j}|}\}$.  Let $\lambda_j(c_1)=k_1,\ldots,\lambda_j(c_{|D_{i_j}|})=k_{|D_{i_j}|}$. 
We let $\mu_j$ be such that $\mu_j(e^{-1}(c_i))=k_i$ 
for all $i \in [|D_{i_j}|]$.  Clearly, $\mu_j$ is a bijection from $D_{i_j}$ to $[|D_{i_j}|]$.  

The following claim settles the forward direction.

\longshort{\begin{claim}}{\begin{claim}}
\label{claim:forwclaim}
The function $h \colon Q^* \to P^*$ defined 
by $$h((d_1,\ldots,d_{w'}))=(e(d_1),\ldots,e(d_{w'}))$$ 
for all $(d_1,\ldots,d_{w'}) \in Q^*$ maps $\mathbf{Q}^*$ homomorphically to $\mathbf{P}^*$.  
\end{claim}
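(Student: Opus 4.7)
The plan is a direct, relation-by-relation verification that $h$ preserves each symbol of the compilation vocabulary $\sigma$, leaning on the fact that $e$ is an embedding (so $e$ preserves $\leq$, $<$, and $\parallel$) and on the careful matching between the colorings $\mu_j$ and $\lambda_j$ built from $e$.

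First I would handle the ``easy'' symbols. For $L$: unfolding the definition, $(\mathbf{d},\mathbf{d}') \in L^{\qq^*}$ means $d_j \leq^\qq d'_j$ for every $j \in [w']$; applying $e$ coordinatewise and using that $e$ preserves $\leq$ yields $e(d_j) \leq^\pp e(d'_j)$, which is exactly $(h(\mathbf{d}),h(\mathbf{d}')) \in L^{\pp^*}$. For $I_{\{j,j'\}}$, membership of $\mathbf{d}$ says $d_j \parallel^\qq d_{j'}$, and since $e$ is an embedding (in particular preserves non-edges of the order relation), $e(d_j) \parallel^\pp e(d_{j'})$, giving $h(\mathbf{d}) \in I_{\{j,j'\}}^{\pp^*}$. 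The two cases $O_{(j,j')}$ and $O_{(j',j)}$ are symmetric: strict order is preserved by embeddings, so $d_j <^\qq d_{j'}$ yields $e(d_j) <^\pp e(d_{j'})$, as required.

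The only case with any content is $R_{(j,k)}$, and I expect this to be the main obstacle since it is where the color coding has to line up. Suppose $(\mathbf{d},\mathbf{d}') \in R_{(j,k)}^{\qq^*}$. By definition this means $(\mathbf{d},\mathbf{d}') \in L^{\qq^*}$ and $d_j = d'_j = d$ with $\mu_j(d) = k$. The $L$ part is already covered by the previous step. It remains to show $e(d) \in K_{(j,k)}$, i.e., $\lambda_j(e(d)) = k$. This is precisely the content of how $\mu_j$ was constructed in the proof: enumerating $e(D_{i_j}) = \{c_1,\ldots,c_{|D_{i_j}|}\}$, we set $\lambda_j(c_i) = k_i$ and defined $\mu_j(e^{-1}(c_i)) = k_i$; thus for $d = e^{-1}(e(d))$, we have $\mu_j(d) = \lambda_j(e(d))$. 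Hence $\lambda_j(e(d)) = k$ and $e(d_j) = e(d'_j)$, so $(h(\mathbf{d}),h(\mathbf{d}')) \in R_{(j,k)}^{\pp^*}$.

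Having verified preservation for every symbol in $\sigma$, we conclude that $h$ is a homomorphism from $\qq^*$ to $\pp^*$, which proves the claim. Apart from the $R_{(j,k)}$ bookkeeping, the proof is a routine translation between the embedding $e$ at the level of posets and the induced map on tuples of $\textup{compil}(\cdot)$; the only substantive observation is that the colorings $\mu_j$ were engineered so that $\mu_j = \lambda_j \circ e$ on $D_{i_j}$, which is exactly what $R_{(j,k)}$ preservation demands.
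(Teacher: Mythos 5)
Your proposal is correct and follows essentially the same route as the paper's proof: a relation-by-relation verification in which $L$, $I_{\{j,j'\}}$, and $O_{(j,j')}$ follow directly from $e$ being an embedding, and $R_{(j,k)}$ follows from the identity $\mu_j=\lambda_j\circ e$ on $D_{i_j}$ built into the construction of the colorings. The only point worth making explicit (implicit in your write-up and guaranteed by the preceding claim that $e(D_{i_j})\subseteq C_{i_j}$) is that $h$ indeed lands in the universe $C_{i_1}\times\cdots\times C_{i_{w'}}$ of $\mathbf{P}^*$.
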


\longshort{\pfforwclaim}{\pfforwclaim}

$\textit{(ii)} \Rightarrow \textit{(i)}$: Let $\qq^*$ and $\pp^*$ be specified as in the statement of the lemma, 
and let $h \colon Q^* \to P^*$ be a homomorphism from $\qq^*$ to $\pp^*$.  
We define a function $e \colon Q\to P$ as follows.  Below, 
$\mathbf{c}=(c_1,\ldots,c_{w'})$, 
$\mathbf{c}'=(c'_1,\ldots,c'_{w'})$, 
$\mathbf{c}''=(c''_1,\ldots,c''_{w'})$,  
$\mathbf{d}=(d_1,\ldots,d_{w'})$, and $\mathbf{d}'=(d'_1,\ldots,d'_{w'})$ are elements of $\qq^*$.

Let $q \in Q$.  Let $j \in [w']$ be such that $q \in D_{i_j}$.  

\longshort{\begin{claim}}{\begin{claim}}
\label{cl:cl1}
There exists a unique $p \in C_{i_j} \subseteq P$ such that:
\begin{itemize}
\item if $h(\mathbf{d})=\mathbf{c}$ and $d_j=q$, 
then $c_j=p$;
\item $\mu_j(q)=\lambda_j(p)$.
\end{itemize}
\end{claim}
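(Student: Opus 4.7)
The plan is to exploit the self-loop structure of the color-coded relations $R_{(j,k)}$ that were purpose-built into the compilation. Set $k:=\mu_j(q)$. By construction, a pair $(\mathbf{c},\mathbf{c}')$ lies in $R_{(j,k)}^{\pp^*}$ only if the $j$-th coordinates satisfy $c_j=c'_j$ and $\lambda_j(c_j)=k$; in particular, a self-loop $(\mathbf{c},\mathbf{c})$ lies in $R_{(j,k)}^{\pp^*}$ exactly when $\lambda_j(c_j)=k$. The same holds in $\qq^*$ with $\mu_j$ in place of $\lambda_j$. This loop bookkeeping is the key lever.

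First I would verify the color-matching clause $\mu_j(q)=\lambda_j(p)$. For every $\mathbf{d}\in Q^*$ with $d_j=q$, the loop $(\mathbf{d},\mathbf{d})$ lies in $R_{(j,k)}^{\qq^*}$ because $\mu_j(q)=k$, so $h$ transports it to $(h(\mathbf{d}),h(\mathbf{d}))\in R_{(j,k)}^{\pp^*}$; unpacking the definition gives $\lambda_j(c_j)=k$ where $\mathbf{c}:=h(\mathbf{d})$. Existence of at least one such $\mathbf{d}$ is immediate, since $Q^*$ is a Cartesian product of nonempty chains and we may fix the $j$-th coordinate to $q$ and choose arbitrary coordinates elsewhere.

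The main obstacle is uniqueness: given two candidates $\mathbf{d},\mathbf{d}'\in Q^*$ with $d_j=d'_j=q$, one must show $h(\mathbf{d})$ and $h(\mathbf{d}')$ agree at the $j$-th coordinate. The plan is to introduce the coordinatewise minimum $\mathbf{d}''$, defined by $d''_l:=\textup{min}^{\dd_{i_l}}(d_l,d'_l)$ for each $l\in[w']$. Since $d''_j=\textup{min}^{\dd_{i_j}}(q,q)=q$, both $(\mathbf{d}'',\mathbf{d})$ and $(\mathbf{d}'',\mathbf{d}')$ belong to $L^{\qq^*}$, and because their $j$-th coordinates coincide at $q$ with $\mu_j(q)=k$, both pairs in fact belong to $R_{(j,k)}^{\qq^*}$. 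Applying the homomorphism $h$ pushes both pairs into $R_{(j,k)}^{\pp^*}$, and the definition of $R_{(j,k)}^{\pp^*}$ then forces the $j$-th coordinates of $h(\mathbf{d}'')$, $h(\mathbf{d})$, and $h(\mathbf{d}')$ to coincide at a single element of $C_{i_j}$. Setting $p$ to this common value produces the unique element claimed.

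Conceptually, the coordinatewise minimum inside $\qq^*$ plays precisely the role that the semilattice polymorphism $s$ plays in Lemma~\ref{lemma:compres}: it supplies a universal lower bound in $L^{\qq^*}$ that still carries the correct color at the $j$-th coordinate, and its image under $h$ pins down a canonical witness in $\pp^*$. Once Claim~\ref{cl:cl1} is secured, the embedding $e$ can be defined unambiguously by $e(q):=p$, which is exactly what the subsequent claim (and the reverse direction of Lemma~\ref{lemma:correct}) will need.
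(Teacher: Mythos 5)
Your proof is correct and follows essentially the same strategy as the paper's: the paper also pins down the $j$-th coordinate by pushing pairs through $R_{(j,k)}$ and, in its second case, uses exactly your coordinatewise minimum $\mathbf{d}''$ as the common $L$-lower bound. Your version is a mild streamlining — the single minimum argument subsumes the paper's separate case where $\mathbf{d}$ and $\mathbf{d}'$ are already $L$-comparable, and the self-loop observation gives the color condition $\mu_j(q)=\lambda_j(p)$ a touch more cleanly.
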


\longshort{\pfcla}{\pfcla}

We define $e(q)=p\text{,}$
where $p \in P$ is the unique element identified by Claim~\ref{cl:cl1} relative to $q$. The following claim then settles the backwards direction.

\longshort{\begin{claim}}{\begin{claim}}
\label{cl:cl2}
$e$ embeds $\qq$ into $\pp$.
\end{claim}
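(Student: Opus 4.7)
The plan is to verify the two defining properties of an embedding in turn: first that $e$ is injective, and then that $e$ is a strong homomorphism. Since $\pp$ is a poset and every pair of its elements is either equal, strictly comparable, or incomparable, it will suffice to establish that $q<^\qq q'$ implies $e(q)<^\pp e(q')$ and that $q\parallel^\qq q'$ implies $e(q)\parallel^\pp e(q')$; the reverse implications then follow from injectivity and the trichotomy of poset order.

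For injectivity, pick distinct $q,q'\in Q$ and let $j,j'\in[w']$ be such that $q\in D_{i_j}$ and $q'\in D_{i_{j'}}$. If $j\neq j'$ then $e(q)\in C_{i_j}$ and $e(q')\in C_{i_{j'}}$ lie in disjoint sets. If $j=j'$ then the bijectivity of $\mu_j$ yields $\mu_j(q)\neq\mu_j(q')$, and hence by the second bullet of Claim~\ref{cl:cl1} we have $\lambda_j(e(q))\neq\lambda_j(e(q'))$, so $e(q)\neq e(q')$.

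The preservation steps all follow the same template: fabricate a convenient tuple in $Q^*$ (the universe of $\qq^*$ being the full Cartesian product $D_{i_1}\times\cdots\times D_{i_{w'}}$, such tuples are always available), observe that it witnesses an appropriate relation of $\qq^*$, apply $h$ and use that it preserves that relation in $\pp^*$, and finally invoke Claim~\ref{cl:cl1} to identify the coordinates of the image with the values of $e$. Concretely, for $q<^\qq q'$ with $q\in D_{i_j}$ and $q'\in D_{i_{j'}}$: if $j=j'$ form two tuples agreeing everywhere except at coordinate $j$ where they carry $q$ and $q'$ respectively; this pair lies in $L^{\qq^*}$ (since $q\leq^{\dd_{i_j}}q'$), its $h$-image lies in $L^{\pp^*}$, and by Claim~\ref{cl:cl1} the $j$-th coordinates of the images are $e(q)$ and $e(q')$, yielding $e(q)\leq^\pp e(q')$; injectivity then upgrades this to strict inequality. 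If $j<j'$ take a single tuple with $q$ at coordinate $j$ and $q'$ at coordinate $j'$; it lies in $O_{(j,j')}^{\qq^*}$, its image lies in $O_{(j,j')}^{\pp^*}$, and Claim~\ref{cl:cl1} again identifies the relevant coordinates with $e(q)$ and $e(q')$. The case $j'<j$ is symmetric via $O_{(j',j)}$. For $q\parallel^\qq q'$, incomparability forces $j\neq j'$; the analogous argument using $I_{\{j,j'\}}$ in place of $O_{(j,j')}$ delivers $e(q)\parallel^\pp e(q')$.

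The only subtle point is the invocation of Claim~\ref{cl:cl1} to identify the $j$-th and $j'$-th coordinates of the $h$-images as $e(q)$ and $e(q')$; this is where the $R_{(j,k)}$-loops in $\qq^*$ at the selected tuples play their role, since they force the $h$-image to carry the matching colour at those coordinates and thus, by the uniqueness clause of Claim~\ref{cl:cl1} together with the definition of $e$, to equal $e(q)$ (resp.\ $e(q')$). Everything else is a direct translation between the relation symbols of the compilations and the order-theoretic primitives $<^\pp$ and $\parallel^\pp$, so no additional obstacles arise.
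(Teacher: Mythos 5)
Your proposal is correct and follows essentially the same route as the paper: reduce the claim to the two implications $q<^\qq q'\Rightarrow e(q)<^\pp e(q')$ and $q\parallel^\qq q'\Rightarrow e(q)\parallel^\pp e(q')$, then in each case build a witnessing tuple in $Q^*$, push it through $h$ into the corresponding relation $L$, $O_{(j,j')}$, or $I_{\{j,j'\}}$ of $\pp^*$, and use the $R_{(j,k)}$-loops together with Claim~\ref{cl:cl1} to identify the relevant coordinates of the image with $e(q)$ and $e(q')$. The only cosmetic difference is that the paper instantiates the auxiliary coordinates with the bottoms of the chains $\dd_{i_{j''}}$, whereas you leave them arbitrary, which is equally valid.
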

\longshort{\pfclb}{\pfclb}

The statement is proved.\end{proof}

%
%
%
%
%
%
%
%

\longshort{\excompilhom}{\excompilhom}


\subsubsection{Algorithm}\label{sect:proof}

We are now ready to 
list the pseudocode of our main algorithm.  
The input is a pair $(\qq,\pp)$ of
posets.

\begin{tabbing}
\textsc{Algorithm}$(\qq,\pp)$\\
1 \quad \= \textbf{if} ($|P|<|Q|$ \textbf{or} $\textup{width}(\pp)<\textup{width}(\qq)$) \textbf{then reject}\\
2 \> \= $w \leftarrow \textup{width}(\pp)$ \\
3 \> \= compute a chain partition $(\cc_1,\ldots,\cc_w)$ of $\pp$ \\
4       \> \textbf{foreach} $1 \leq w' \leq w$,\\
       \> \> \quad \= subtuple $(i_1,\ldots,i_{w'})$ of $(1,\ldots,w)$,\\
       \> \> \> chain partition $(\dd_{i_1},\ldots,\dd_{i_{w'}})$ of $\qq$,\\
       \> \> \> coloring $(\mu_1,\ldots,\mu_{w'}) \in M_1 \times \cdots M_{w'}$ \textbf{do}\\
5       \> \quad \= \textbf{if} exists $j \in [w']$ such that $|C_{i_j}|<|D_{i_j}|$ \textbf{then reject}\\
6       \> \> $\qq^* \leftarrow \textup{compil}(\qq,\dd_{i_1},\ldots,\dd_{i_{w'}},\mu_1,\ldots,\mu_{w'})$\\
7       \>       \> \textbf{foreach} $j \in [w']$  \textbf{do}\\
8       \>       \> \quad \= $\Lambda_j \leftarrow$ $|D_{i_j}|$-perfect family of hashing functions\\
         \>       \> \>       $\ \ \ \ \ \ \ $ from $C_{i_j}$ to $[|D_{i_j}|]$\\
9       \>       \> \textbf{foreach} $(\lambda_1,\ldots,\lambda_{w'}) \in \Lambda_1 \times \cdots \times \Lambda_{w'}$  \textbf{do}\\
10       \>       \> \quad \= $\pp^* \leftarrow \textup{compil}(\pp,\cc_{i_1},\ldots,\cc_{i_{w'}},\lambda_1,\ldots,\lambda_{w'})$\\
11       \>       \>             \> \textbf{if} $\qq^* \in \textsc{Hom}(\pp^*)$ \textbf{then accept}\\
12      \> \textbf{reject}
\end{tabbing}

In Line~4, $M_j$ is the set of all bijections from $D_{i_j}$ to $[|D_{i_j}|]$, 
for all $j \in [w']$.  We conclude proving that the algorithm above 
has the desired properties, from which the main result of the section follows.

\longshort{\begin{theorem}}{\begin{theorem}[$\star$]}
\label{th:embfpt}
Let $\mathcal{P}$ be a class of 
posets of bounded width.  There exists an algorithm 
deciding any instance $(\qq,\pp)$ of $\textsc{Emb}(\mathcal{P})$ 
in $2^{O(k\log k)}\cdot n^{O(1)}$ time, 
where $n=|P|$ and $k=|Q|$.
\end{theorem}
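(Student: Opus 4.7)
The plan is to verify both correctness and running time for the pseudocode, leaning on Lemma~\ref{lemma:correct} for the former and on a careful accounting of the enumerations for the latter. For correctness, I would first dispose of the trivial rejections in Lines~1 and~5 (they are valid since any embedding is injective and cannot decrease width nor overfill a chain). Then, observe that by Lemma~\ref{lemma:correct}, $\qq$ embeds into $\pp$ if and only if there is some choice of $w'$, subtuple $(i_1,\ldots,i_{w'})$, chain partition $(\dd_{i_1},\ldots,\dd_{i_{w'}})$, bijections $(\mu_1,\ldots,\mu_{w'})$, and (for any fixed choice of hash families $\Lambda_1,\ldots,\Lambda_{w'}$) some $(\lambda_1,\ldots,\lambda_{w'})\in\Lambda_1\times\cdots\times\Lambda_{w'}$ for which $\qq^{*}$ admits a homomorphism to $\pp^{*}$. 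The outer loops in Lines~4--11 enumerate exactly these choices, the hash families are constructed in Line~8 via Theorem~\ref{th:hash}, and the homomorphism query in Line~11 is decided correctly in polynomial time by combining Lemma~\ref{lemma:compres} with Theorem~\ref{th:semilpoly}.

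For the running time, let $w=\textup{width}(\pp)\le b$ for some constant $b$ (bounded-width class), and let $k=|Q|$, $n=|P|$. Line~3 costs $O(w\cdot n^{2})$ by Theorem~\ref{th:felsner}. In Line~4, there are at most $w$ choices for $w'$, at most $2^{w}$ choices for the subtuple, at most $(w')^{k}\le w^{k}$ chain partitions of $\qq$ into $w'$ labelled chains (each element of $Q$ is assigned to one of $w'$ indices), and at most $\prod_{j}|D_{i_j}|!\le k!$ tuples of bijections $\mu_{j}$. Since $w$ is a constant, this enumeration contributes
\begin{equation*}
w\cdot 2^{w}\cdot w^{k}\cdot k!\;=\;2^{O(k\log k)}.
\end{equation*}
For each iteration, Line~8 invokes Theorem~\ref{th:hash} to produce a family $\Lambda_j$ of size $2^{O(k)}\log^{2}n$ in time $2^{O(k)}\cdot n\log^{2}n$, and the product $\prod_{j}|\Lambda_j|$ has size $(2^{O(k)}\log^{2}n)^{w}=2^{O(k)}\cdot n^{O(1)}$.

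For each choice of colorings, the compiled structures $\qq^{*}$ and $\pp^{*}$ have universes of size at most $k^{w}$ and $n^{w}$ respectively, so Line~10 builds $\pp^{*}$ in time $n^{O(1)}$, and Line~11 is decided in time polynomial in $|\pp^{*}|+|\qq^{*}|$ by Lemma~\ref{lemma:compres} together with Theorem~\ref{th:semilpoly}, which is again $n^{O(1)}$. Multiplying, the total cost is
\begin{equation*}
2^{O(k\log k)}\cdot 2^{O(k)}\cdot n^{O(1)}\;=\;2^{O(k\log k)}\cdot n^{O(1)},
\end{equation*}
as claimed. The only subtle point, and the piece I would spend most care on, is the bookkeeping showing that every constant ultimately absorbed into the $O(\cdot)$ depends only on the width bound $b$ of the class $\mathcal{P}$, so that the exponent hiding $w$ does not silently leak a dependence on $n$; the routine parts of the analysis are otherwise straightforward once Lemma~\ref{lemma:correct} is in hand.
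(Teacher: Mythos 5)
Your proposal is correct and follows essentially the same route as the paper: correctness is delegated to Lemma~\ref{lemma:correct}, and the running time is obtained by bounding the outer enumeration (subtuples, chain partitions, bijections) by $2^{O(k\log k)}$, the hash-function loops by $2^{O(k)}\cdot n^{O(1)}$ via Theorem~\ref{th:hash}, and each homomorphism test by $n^{O(1)}$ via Lemma~\ref{lemma:compres} and Theorem~\ref{th:semilpoly}. Your accounting is in fact more explicit than the paper's (which simply asserts the loop bounds), and your closing remark that all hidden constants depend only on the width bound $b$ is exactly the right point to be careful about.
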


\newcommand{\pfembfpt}[0]{
\begin{proof}
By Lemma \ref{lemma:correct}, \textsc{Algorithm} accepts if and only if there exists an embedding from $Q$ to $P$.  
Let us analyze its running time.  Let $n=|P|$ and $k=|Q|$.  In the rest of the analysis, 
we assume $k \leq n$; otherwise, the algorithm rejects in time $O(k+n)$ by the first test in Line 1.  

(The second test in) Line 1, and Lines 2-3 are feasible in time $n^{O(1)}$ by Theorem~\ref{th:felsner}. 
The loop between Line 4 and 10 executes at most $2^{O(k\log k)}$ times, 
and Lines 5-6 are feasible in time $n^{O(1)}$. 
The two loops in Lines 7-8 and 9-11 are feasible in time $2^{O(k)}\cdot n^{O(1)}$ by Theorem \ref{th:hash}; 
in particular, Line 11 executes in time $n^{O(1)}$ by 
Lemma \ref{lemma:compres} and Theorem \ref{th:semilpoly}. 
Hence the total running time is bounded above by $2^{O(k\log k)}\cdot n^{O(1)}$.
\end{proof}}

\longversion{\pfembfpt}


\begin{theorem}
\label{thm:EFOFPT}
Let $\mathcal{P}$ be a class of
posets of bounded width.  Then, 
$\textsc{MC}(\mathcal{P},\mathcal{FO}(\exists,\wedge,\vee,\neg))$ is fixed-parameter tractable 
(with single exponential parameter dependence).
\begin{proof}
Directly from Proposition~\ref{proposition:metodological}  
and Theorem~\ref{th:embfpt}.
\end{proof}
\end{theorem}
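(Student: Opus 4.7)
The proof plan is essentially a direct concatenation of two results that have already been established in the paper. First, I would invoke Proposition~\ref{proposition:metodological}, which for any class $\mathcal{C}$ of structures shows that $\textsc{MC}(\mathcal{C},\mathcal{FO}(\exists,\wedge,\vee,\neg))$ is fixed-parameter tractable if and only if $\textsc{Emb}(\mathcal{C})$ is fixed-parameter tractable. Applying this equivalence to the class $\mathcal{P}$ of bounded-width posets reduces the theorem to the fpt tractability of $\textsc{Emb}(\mathcal{P})$. The latter is exactly the content of Theorem~\ref{th:embfpt}, which provides a decision algorithm running in time $2^{O(k\log k)}\cdot n^{O(1)}$, with $n$ the size of the host poset and $k$ the size of the guest.

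There is essentially no remaining obstacle: the two cited statements fit together without slack. The only thing worth verifying is the claim about \emph{single-exponential} parameter dependence, since the reduction from model checking to embedding in Proposition~\ref{proposition:metodological} is a Turing (truth-table) reduction that enumerates reduced completions. Tracing through that reduction, an input sentence $\phi$ of size $k$ produces at most an elementary-in-$k$ number of embedding queries, each with a guest structure of size $O(k)$, and for every such query the algorithm of Theorem~\ref{th:embfpt} runs in time $2^{O(k\log k)}\cdot n^{O(1)}$. Taking the union bound over queries preserves the form $f(k)\cdot n^{O(1)}$ with $f$ single-exponential in $k\log k$, matching the parenthetical claim.

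In short, the proof I would write is a two-sentence deduction: by Proposition~\ref{proposition:metodological} it suffices to solve $\textsc{Emb}(\mathcal{P})$, and by Theorem~\ref{th:embfpt} this is fixed-parameter tractable on bounded-width classes with the stated dependence. All the substantive work, namely the semilattice compilation of bounded-width posets, the application of Jeavons--Cohen--Gyssens, and the color-coding reduction from embedding to homomorphism, has already been carried out in Lemma~\ref{lemma:compres}, Theorem~\ref{th:semilpoly}, and Lemma~\ref{lemma:correct}.
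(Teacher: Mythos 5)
Your proposal is correct and is exactly the paper's proof: the authors also deduce the theorem directly from Proposition~\ref{proposition:metodological} (the fpt equivalence of model checking existential logic and embedding) combined with Theorem~\ref{th:embfpt} (the $2^{O(k\log k)}\cdot n^{O(1)}$ embedding algorithm on bounded-width classes). Your additional remark tracing the number of embedding queries through the truth-table reduction to justify the single-exponential dependence is a sensible elaboration of what the paper leaves implicit.
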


\subsection{Embedding is $\textup{W}[1]$-hard on Bounded Cover-Degree Posets}\label{sect:bdcover}

We construct a class $\mathcal{P}_{\textup{cover\textup{-}degree}}$ of bounded cover-degree 
posets 
such that $\textsc{Emb}(\mathcal{P}_{\textup{cover\textup{-}degree}})$ is $\textup{W}[1]$-hard.  
By Proposition~\ref{proposition:metodological}, it follows 
that $\textsc{MC}(\mathcal{P}_{\textup{cover\textup{-}degree}},\mathcal{FO}(\exists,\wedge,\vee,\neg))$ 
is $\textup{W}[1]$-hard.

Let $\mathbf{G}=(V,E^\mathbf{G})$ be a graph and let $V=[n]$.  Then $r(\mathbf{G})=\pp$ is the
poset defined as follows. The universe of $\pp$ is $P=\bigcup_{i \in [n]}P_i$ 
where, for all $i \in [n]$, 
\longshort{\begin{align*}
P_i = &\{ \bot_i,a_i,b_i,c_i,d_i,\top_i \} \cup \{ l_{(i,j)}, u_{(i,j)} \mid j \in [n], (i,j) \in E^{\mathbf{G}} \}\text{.}
\end{align*}}{\begin{align*}
P_i = &\{ \bot_i,a_i,b_i,c_i,d_i,\top_i \} \\
    & \cup \{ l_{(i,j)}, u_{(i,j)} \mid j \in [n], (i,j) \in E^{\mathbf{G}} \}\text{.}
\end{align*}}
The order is defined by the following. For each $i,j \in [n]$.:
\begin{itemize}
\item $a_i \prec^\pp b_i$, $a_i \prec^\pp c_i$, $b_i \prec^\pp d_i$, $c_i \prec^\pp d_i$, and $b_i \parallel^\pp c_i$; 
\item $\bot_i \prec^\pp l_{(i,1)} \prec^\pp  \cdots \prec^\pp l_{(i,n)} \prec^\pp a_i$;
\item $d_i \prec^\pp u_{(i,1)} \prec^\pp  \cdots \prec^\pp u_{(i,n)} \prec^\pp \top_i$;
\item $l_{(i,j)} \prec^\pp u_{(j,i)}$ if and only if $(i,j) \in E^{\mathbf{G}}$.
\end{itemize}
The construction satisfies the following properties.  Let $\mathbf{G} \in \mathcal{G}$:
%
\begin{enumerate}[label=\textit{(\roman*)}] 
\item since $\textup{cover\textup{-}degree}(r(\mathbf{G})) \leq 3$, 
the class $\mathcal{P}_{\textup{cover\textup{-}degree}}=\{ r(\mathbf{G}) \mid \mathbf{G} \in \mathcal{G} \}$ has bounded cover-degree;
\item $r(\mathbf{G})$ can be constructed in polynomial time;
\item for any $j,j'\in [n]$, $j \neq j'$, we have $\bot_j <^{\pp} \top_{j'}$ if and only if $(j,j')\in E^{\mathbf{G}}$.
\end{enumerate}

\longshort{\begin{proposition}}{\begin{proposition}[$\star$]}
\label{pr:harddegree}
$\textsc{Emb}(\mathcal{P}_{\textup{cover\textup{-}degree}})$ is $\textup{W}[1]$-hard.
\end{proposition}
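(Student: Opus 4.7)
The plan is to establish $\textup{W}[1]$-hardness by an fpt many-one reduction from the $\textup{W}[1]$-hard \textsc{Clique} problem. Given an instance $(\mathbf{G},k)$ of \textsc{Clique}, I would output the pair $(r(\mathbf{K}_k), r(\mathbf{G}))$ as an instance of $\textsc{Emb}(\mathcal{P}_{\textup{cover\textup{-}degree}})$; by property \textit{(ii)}, this is computable in polynomial time, and since $|r(\mathbf{K}_k)|$ depends only on $k$, it is an fpt-reduction. Correctness amounts to proving that $\mathbf{G}$ contains a $k$-clique if and only if $r(\mathbf{K}_k)$ embeds into $r(\mathbf{G})$.

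For the forward direction, given a $k$-clique $\{v_1 < v_2 < \cdots < v_k\}$ in $\mathbf{G}$, I would define $e \colon r(\mathbf{K}_k) \to r(\mathbf{G})$ by sending the $i$-th vertex gadget of $r(\mathbf{K}_k)$ element-wise into the $v_i$-th vertex gadget of $r(\mathbf{G})$: each of $\bot_i, a_i, b_i, c_i, d_i, \top_i$ maps to the correspondingly labeled element of the $v_i$-th gadget, while $l_{(i,j)} \mapsto l_{(v_i,v_j)}$ and $u_{(i,j)} \mapsto u_{(v_i,v_j)}$ for $j \in [k] \setminus \{i\}$. The inter-gadget covers $l_{(i,j)} \prec^{r(\mathbf{K}_k)} u_{(j,i)}$ are realized by the covers $l_{(v_i,v_j)} \prec^{r(\mathbf{G})} u_{(v_j,v_i)}$, which exist precisely because $\{v_1,\dots,v_k\}$ is a clique; because $v_1<\cdots<v_k$, the chain order inside each gadget is respected. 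Preservation and reflection of the order by $e$ then follow by routine inspection of the gadget structure.

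The backward direction, in which one extracts a $k$-clique from an embedding $e \colon r(\mathbf{K}_k) \to r(\mathbf{G})$, is the main obstacle. The strategy is to identify distinct vertices $v_1, \ldots, v_k \in V(\mathbf{G})$ with $e(\bot_i) = \bot_{v_i}$ and $e(\top_i) = \top_{v_i}$ for all $i \in [k]$. Once this is secured, property \textit{(iii)} applied in $r(\mathbf{K}_k)$ gives $\bot_i <^{r(\mathbf{K}_k)} \top_j$ for each $i \neq j$, so $\bot_{v_i} = e(\bot_i) <^{r(\mathbf{G})} e(\top_j) = \top_{v_j}$, and property \textit{(iii)} applied in $r(\mathbf{G})$ forces $(v_i, v_j) \in E^{\mathbf{G}}$; hence $\{v_1, \ldots, v_k\}$ is a $k$-clique. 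To anchor each gadget, I would exploit the rigidity provided by the diamond subposets $\{a_i, b_i, c_i, d_i\}$ (with $a_i$ at the bottom, $d_i$ at the top, and $b_i \parallel c_i$) together with the chains of length $k$ that flank each diamond below $a_i$ and above $d_i$ within its gadget. The key structural observation is that in $r(\mathbf{G})$ the middle elements $a_v, b_v, c_v, d_v$ have their entire upsets and downsets confined to the $v$-th vertex gadget, which prevents the image of a source gadget from ``straddling'' two gadgets of $r(\mathbf{G})$; combined with the bipartite-like constraint $e(\bot_i) <^{r(\mathbf{G})} e(\top_j)$ for all $i,j$ (forced by $r(\mathbf{K}_k)$ encoding a complete graph via property~\textit{(iii)}), this pins the $\bot_i$'s to distinct $\bot_v$'s and the $\top_j$'s to distinct $\top_v$'s. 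The main technical difficulty lies in ruling out ``hybrid'' images that could in principle be realized by routing the long $l$- and $u$-chains through the inter-gadget covers $l_{(v,w)} \prec u_{(w,v)}$; this requires a careful case analysis on where the diamond elements $e(b_i), e(c_i)$ can simultaneously land while preserving both incomparability and the attached chain lengths.
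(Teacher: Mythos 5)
Your reduction is from \textsc{Clique} via the same gadget construction $r(\cdot)$ as the paper, and it can be made to work, but the backward direction as written has a genuine gap that you yourself flag, and the paper sidesteps it by a different choice of source poset. The paper does \emph{not} use $r(\mathbf{K}_k)$ as the source; it uses a much smaller poset $\mathbf{Q}_k$ on the $6k$ elements $\{\bot_i,a_i,b_i,c_i,d_i,\top_i \mid i\in[k]\}$, with the diamonds, the covers $\bot_i\prec a_i$ and $d_i\prec\top_i$, and the cross relations $\bot_i\prec\top_{i'}$ for $i\neq i'$. This eliminates exactly the ``main technical difficulty'' you describe: there are no $l$- and $u$-chains in the source, so there is nothing to route through the inter-gadget covers and no case analysis on hybrid images of long chains.

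The idea missing from your backward direction is the lemma that actually pins the diamonds: \emph{any two incomparable elements of $r(\mathbf{G})$ lying in different vertex gadgets lack a common upper bound or a common lower bound}. Since $e(b_i)\parallel e(c_i)$ and these images have both a common lower bound $e(a_i)$ and a common upper bound $e(d_i)$, this forces them into a single gadget $P_{v_i}$, where $\{b_{v_i},c_{v_i}\}$ is the unique incomparable pair; injectivity of $e$ makes the $v_i$ pairwise distinct. Your stated ``key structural observation'' (upsets and downsets of $a_v,b_v,c_v,d_v$ are confined to $P_v$) is true and is what one uses \emph{afterwards} to place $e(\bot_i)$ in the lower part of $P_{v_i}$ and $e(\top_{i'})$ in the upper part of $P_{v_{i'}}$, but by itself it does not rule out $e(b_i)$ and $e(c_i)$ landing in different gadgets; the common-bounds lemma does. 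Finally, your intermediate target $e(\bot_i)=\bot_{v_i}$ and $e(\top_i)=\top_{v_i}$ is neither forced (if $\textup{degree}_{\mathbf{G}}(v_i)>k-1$, the image of the bottom chain can be shifted upward inside $P_{v_i}$) nor needed: once $e(\bot_i)$ is known to lie in $\{\bot_{v_i}\}\cup\{l_{(v_i,\cdot)}\}\cup\{a_{v_i}\}$ and $e(\top_{i'})$ in the dual set, the relation $e(\bot_i)<^{r(\mathbf{G})}e(\top_{i'})$ can only be witnessed by a path through a cover $l_{(v_i,v_{i'})}\prec u_{(v_{i'},v_i)}$, which forces $(v_i,v_{i'})\in E^{\mathbf{G}}$.
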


\newcommand{\pfharddegree}[0]{
\begin{proof}

We give an fpt many-one reduction from the $\textsc{Clique}$ problem 
to $\textsc{Emb}(\mathcal{P}_{\textup{cover\textup{-}degree}})$, 
which suffices since $\textsc{Clique}$ is $\textup{W}[1]$-hard.  
The reader is advised to inspect Example~\ref{ex:degree}.

Let $(\mathbf{G},k)$ be an instance of $\textsc{Clique}$; 
the question is whether $\mathbf{G}$ contains a clique on $k \in \mathbb{N}$ vertices
Let $\mathbf{P}=r(\mathbf{G})$.  
We reduce to the instance $(\mathbf{Q}_k,\mathbf{P})$ of $\textsc{Emb}(\mathcal{P}_{\textup{cover\textup{-}degree}})$, 
where $\mathbf{Q}_k$ is the 
poset with 
universe $Q_k=\{ \bot_i,a_i,b_i,c_i,d_i,\top_i \mid i \in [k] \}$, 
uniquely determined by the following relations:
\begin{itemize}
\item $a_i \prec^{\qq_k} b_i$, $a_i \prec^{\qq_k} c_i$, $b_i \prec^{\qq_k} d_i$, $c_i \prec^{\qq_k} d_i$, and $b_i \parallel^{\qq_k} c_i$; 
\item $\bot_i \prec^{\qq_k} a_i$ and $d_i \prec^{\qq_k} \top_i$ for all $i \in [k]$;
\item $\bot_i \prec^{\qq_k} \top_{i'}$ for all $i,i' \in [k]$, $i \neq i'$.
\end{itemize}
We argue correctness (the complexity of the reduction is clear).  

If $\{j_1,\ldots,j_k\} \subseteq G$ induces a clique of size $k$ in $\mathbf{G}$, 
then $\qq_k$ embeds into $\pp$ by $q_i \mapsto q_{j_i}$ 
for all $q \in \{ \bot,a,b,c,d,\top \}$ and $i \in [k]$.  

Conversely, assume that $\qq_k$ embeds into $\pp$ via a mapping $e$. 
Let $i \in [k]$. We claim that there exists $j \in [n]$ such that 
$\{e(b_i),e(c_i)\}=\{b_j,c_j\}$.  Indeed, by construction, 
$b_i \parallel^{\qq_k} c_i$ and 
$a_i \leq^{\qq_k} b_i,c_i \leq^{\qq_k} d_i$.  
Note that any two incomparable elements $p' \in P_{j'}$ and $p'' \in P_{j''}$ with $j',j'' \in [n]$, 
$j' \neq j''$, lack a common upper bound or a common lower bound.  Hence, since $e$ is an embedding, 
$\{e(b_i),e(c_i)\} \subseteq P_j$ for some $j \in [n]$, which forces $\{e(b_i),e(c_i)\}=\{b_j,c_j\}$ 
because $b_j$ and $c_j$ are the only two incomparable elements in $P_j$.

We claim that $C=\{ j \mid \{b_j,c_j\} \cap \text{$e(Q_k) \neq \emptyset$} \}\subseteq V$ 
induces a clique of size $k$ in $\mathbf{G}$.  By the above, $|C|=k$.  Hence it 
suffices to show that $(j,j') \in E^{\mathbf{G}}$ for any $j,j' \in C$, $j \neq j'$.  
Let $i,i' \in [k]$, $i \neq i'$, be such that $\{e(b_i),e(c_i)\}=\{b_j,c_j\}$ 
and $\{e(b_{i'}),e(c_{i'})\}=\{b_{j'},c_{j'}\}$.
Since $e(\bot_i) <^{\pp} e(b_{i})$ and $e(\top_{i'}) >^{\pp} e(b_{i'})$, 
we obtain that $e(\bot_i)\in P_j$ and $e(\top_{i'})\in P_{j'}$ by construction. The embedding ensures $e(\bot_i) <^{\pp} e(\top_{i'})$ and so $(j,j') \in E^{\mathbf{G}}$ 
by the properties listed before the statement, concluding the proof.
\end{proof}
}

\longversion{\pfharddegree}

\newcommand{\exdegree}[0]{

\begin{example}\label{ex:degree}
Let $(\mathbf{G},k)$ be an instance of $\textsc{Clique}$, 
where $\mathbf{G}$ is the graph whose universe is $G=[4]$ 
and whose edge relation $E^\mathbf{G}$ is the symmetric closure of 
$\{(1,2),(1,3),(2,3),(2,4),(3,4)\}$, and $k=3$.  Then posets 
$\qq_k$ and $\pp$ in the proof of Proposition~\ref{pr:harddegree} 
are depicted in Figure~\ref{fig:qdegree} and Figure~\ref{fig:pdegree} respectively.
\end{example}}

\longversion{\exdegree}

\newcommand{\figqdegree}[0]{
\begin{figure}[h]
\centering
\includegraphics[scale=.2]{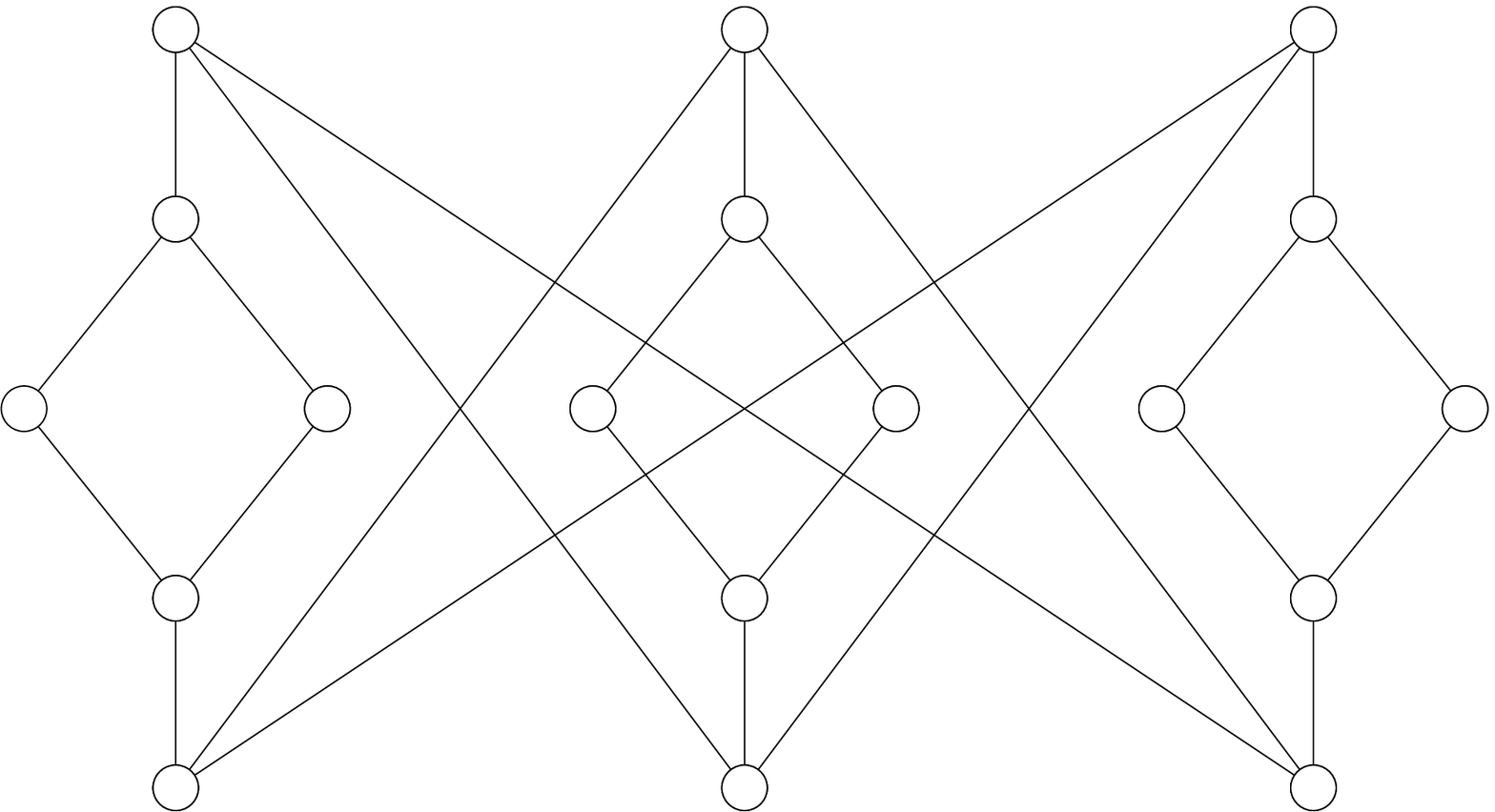}
\caption{The poset $\qq_k$ in the proof of Proposition~\ref{pr:harddegree}, 
where $k=3$ as in Example~\ref{ex:degree}.}
\label{fig:qdegree}
\end{figure}}

\longversion{\figqdegree}

\newcommand{\figpdegree}[0]{
\begin{figure}[h]
\centering
\includegraphics[scale=.2]{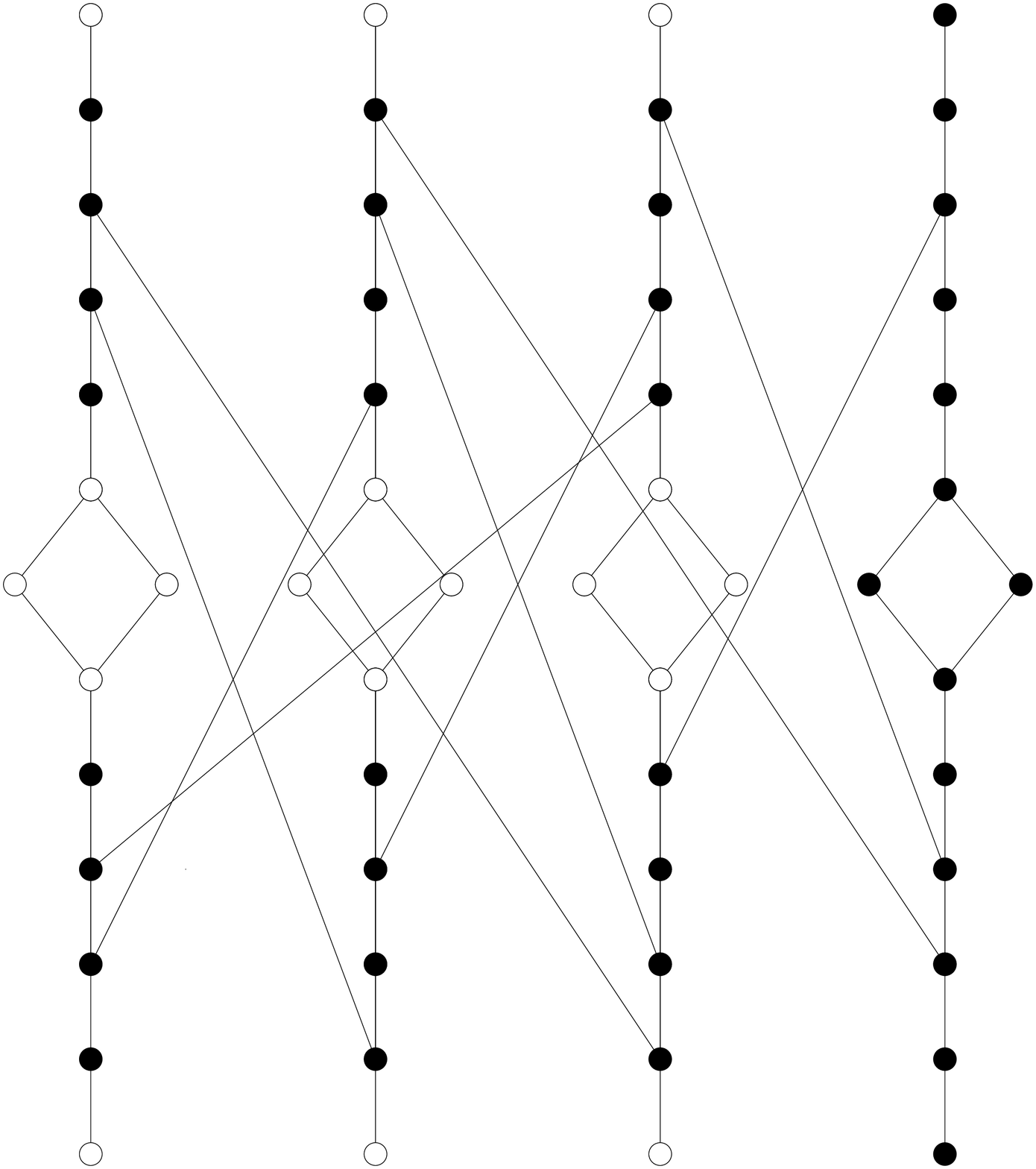}
\caption{The poset $\pp$ in the proof of Proposition~\ref{pr:harddegree}, 
where $\mathbf{G}$ is as in Example~\ref{ex:degree}.}
\label{fig:pdegree}
\end{figure}}

\longversion{\figpdegree}

\section{Classical Complexity}\label{sect:classical}
In this section, we study the classical complexity of 
the embedding problem on the targeted classes of posets, 
and we prove a tractability result of independent interest 
on bounded width posets. We first observe the following fact.  

\longshort{\begin{proposition}}{\begin{proposition}[$\star$]}
\label{th:wdtract}
Let $\mathcal{P}$ be a class of 
posets of bounded size.  Then, 
$\textsc{Emb}(\mathcal{P})$ is polynomial-time tractable.
\end{proposition}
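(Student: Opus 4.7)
The plan is to exploit the boundedness hypothesis to reduce the embedding test to a brute-force search of constant size. Let $b \in \mathbb{N}$ be a bound such that $|P| \leq b$ for all $\mathbf{P} \in \mathcal{P}$. Given an instance $(\mathbf{Q},\mathbf{P})$ of $\textsc{Emb}(\mathcal{P})$, the algorithm first tests whether $|Q| > |P|$; if so, no injective map from $Q$ to $P$ can exist, and the algorithm rejects. Otherwise, since $\mathbf{P} \in \mathcal{P}$, we have $|Q| \leq |P| \leq b$, so both universes are of constant size.

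At this point, the algorithm enumerates all injective functions $h \colon Q \to P$; there are at most $b!$ such functions, a constant depending only on $\mathcal{P}$. For each candidate $h$, we verify in time $O(b^2)$ that $h$ is a strong homomorphism by comparing $(q,q') \in {\leq^{\mathbf{Q}}}$ with $(h(q),h(q')) \in {\leq^{\mathbf{P}}}$ for all pairs $(q,q') \in Q^2$. The algorithm accepts iff at least one candidate passes the check.

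The total running time is bounded by a constant (depending on $b$) plus the linear-time preprocessing needed to read the input and parse the order relations, hence the algorithm is polynomial-time (in fact, linear-time) in the size of the encoding of $(\mathbf{Q},\mathbf{P})$. Correctness is immediate from the definition of embedding as injective strong homomorphism.

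There is no genuine obstacle here: the entire argument rests on the observation that an embedding $\mathbf{Q} \hookrightarrow \mathbf{P}$ forces $|Q| \leq |P|$, so bounded size of the target propagates to bounded size of the source, collapsing the search space to a constant.
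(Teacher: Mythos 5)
Your proposal is correct and coincides with the paper's own argument: reject if $|Q|>|P|$, and otherwise brute-force over the constantly many candidate maps from $Q$ to $P$ (the paper bounds these by $s^s$ rather than $b!$, an immaterial difference). Nothing further is needed.
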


\newcommand{\pfwdtract}[0]{

\begin{proof}
Let $s \in \mathbb{N}$ be such that $|\pp| \leq s$ for all $\pp \in \mathcal{P}$.  
Let $(\mathbf{Q},\mathbf{P})$ be an instance of $\textsc{Emb}(\mathcal{P})$.  
If $|Q|>|P|$, reject.  Otherwise, check whether one of the at most $s^s$ many 
mappings from $\mathbf{Q}$ to $\mathbf{P}$ is an embedding.  
\end{proof}}

\longshort{\pfwdtract}{}

Note that the above together with Proposition \ref{pr:exprcomplex} rules out a polynomial-time tractability analogue of Proposition \ref{proposition:metodological}.
The section is organized into three subsections, as follows.
\begin{itemize}
\item In Subsections \ref{sect:widthnphard} and \ref{sect:degreenphard}, we prove that the embedding problem is NP-hard on bounded width and bounded degree posets respectively. This implies that Proposition~\ref{th:wdtract} is tight with respect to the studied invariants.
\item In Subsection \ref{sect:wdtract}, we show how the ideas developed in Section \ref{sect:mainresults} may be used to obtain a polynomial-time tractable algorithm for the isomorphism of bounded width posets, an open problem in order theory \cite[p.~284]{CaspardLeclercMonjardet12}. 
\end{itemize}

\subsection{Embedding is $\textup{NP}$-hard on Bounded Width Posets}\label{sect:widthnphard}

In this subsection, we construct a class $\mathcal{P}$ of 
posets of bounded width 
such that $\textsc{Emb}(\mathcal{P})$ is $\textup{NP}$-hard, which immediately implies 
$\textup{NP}$-hardness of $\textsc{MC}(\mathcal{P},\mathcal{FO}(\exists,\wedge,\neg))$. 

The reduction, from the Boolean satisfiability problem (SAT), is technically involved.  
Intuitively, given a SAT instance $\phi$, 
we construct two bounded width posets $\qq_\phi$ and $\pp_\phi$.  
The two posets are such that, if $\phi$ is satisfiable, 
then $\qq_\phi$ embeds into $\pp_\phi$ \lq\lq nicely\rq\rq, 
in the sense that certain chains of $\qq_\phi$ embed 
into certain families of chains in $\pp_\phi$; 
conversely, 
every embedding of $\qq_\phi$ into $\pp_\phi$ must be nice in the above sense, 
and any nice embedding of $\qq_\phi$ into $\pp_\phi$ yields a satisfying assignment to $\phi$.
\newcommand{\exexa}[0]{
\begin{example}\label{ex:ex1}
Let $\phi(x_1,x_2,x_3,x_4)=\delta_1 \wedge \delta_2 \wedge \delta_3 \wedge \delta_4 \wedge \delta_5$, 
where 
$\delta_1=x_4 \vee \neg x_2$, 
$\delta_2=x_4 \vee \neg x_1$, 
$\delta_3=x_1 \vee \neg x_2$, 
$\delta_4=x_3 \vee \neg x_1$, and 
$\delta_5=\neg x_3 \vee x_2$.  Note that, for instance, 
$\phi$ is satisfied by $\{(x_1,0),(x_2,0),(x_3,0),(x_4,1)\}$.  

The poset $\qq_{\phi}$ is depicted in Figure~\ref{fig:qphi}, 
where the chain on the left is $Q^v_{\phi}$, 
the chain in the middle contains $Q^a_{\phi}$, 
and the chain on the right is $Q^c_{\phi}$.  
Thick edges represent chains of $|Q^a_{\phi}|$ elements.
\end{example}}
%
\longversion{\exexa}

\newcommand{\figqphi}[0]{
\begin{figure}[h]
\centering
\includegraphics[scale=.2]{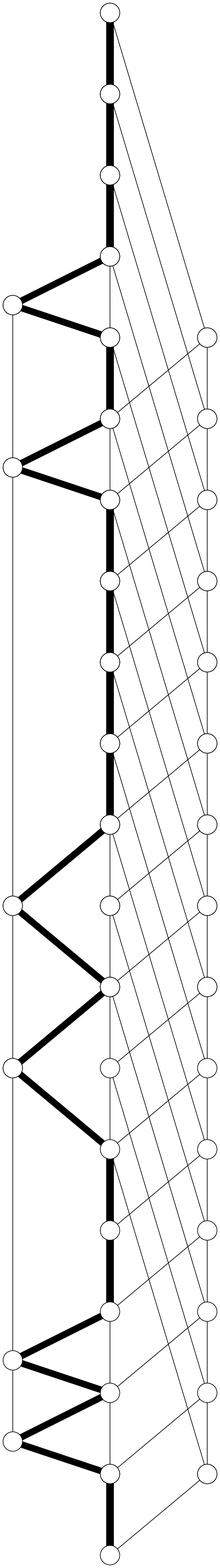}
\caption{The poset $\qq_{\phi}$ corresponding to $\phi \in \mathcal{S}$ in Example~\ref{ex:ex1}.}
\label{fig:qphi}
\end{figure}}

\longversion{\figqphi}

\newcommand{\exexb}[0]{
\begin{example}\label{ex:ex2}
Let $\phi \in \mathcal{S}$ be as in Example~\ref{ex:ex1}.  
Then, the poset $\pp_{\phi}$ is depicted in Figures~\ref{fig:pphi}-\ref{fig:pphi2}.  
The block on the left is $P^v_{\phi}$, 
the block in the middle contains $P^a_{\phi}$, 
and the block on the right is $P^c_{\phi}$.  
Thick edges represent chains of $|Q^a_{\phi}|$ elements.

The white points in $\pp_{\phi}$ form the image of the 
embedding $e \colon Q_{\phi} \to P_{\phi}$ of $\qq_{\phi}$ into $\pp_{\phi}$ corresponding 
to the $\phi$-satisfying assignment in Example~\ref{ex:ex1} as by (the easy direction of) Theorem~\ref{th:widthnphard}.  
It is possible to check that $e$ is an embedding by direct inspection of Figure~\ref{fig:qphi}, 
Figure~\ref{fig:pphi}, and Figure~\ref{fig:pphi2}.
\end{example}}

\newcommand{\figpphi}[0]{
\begin{figure}[h]
\centering
\includegraphics[scale=.2]{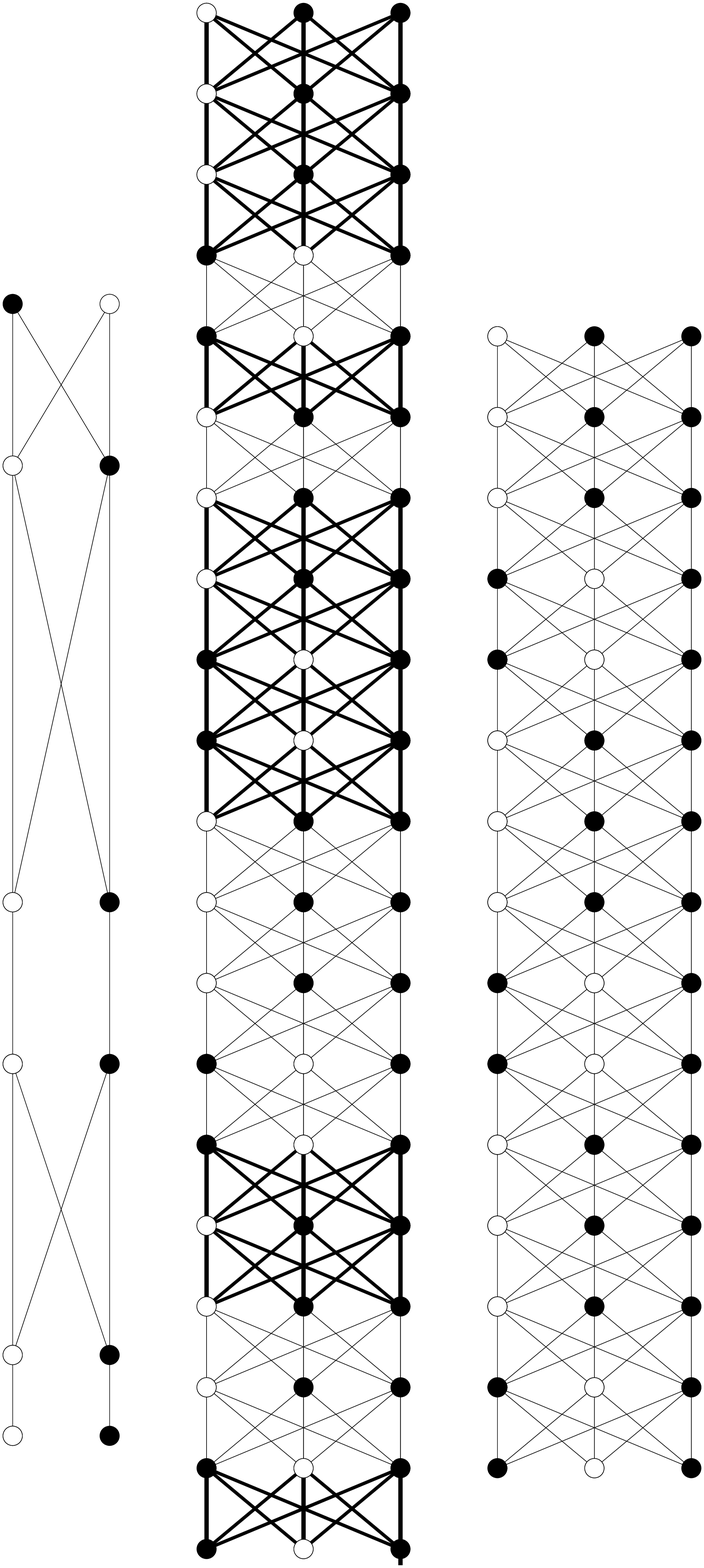}
\caption{Items (P1)-(P4) in the construction of 
poset $\pp_{\phi}$, where $\phi \in \mathcal{S}$ is as in Example~\ref{ex:ex1}.}
\label{fig:pphi}
\end{figure}}

\longversion{\figpphi}

\newcommand{\figpphib}[0]{
\begin{figure}[h]
\centering
\includegraphics[scale=.2]{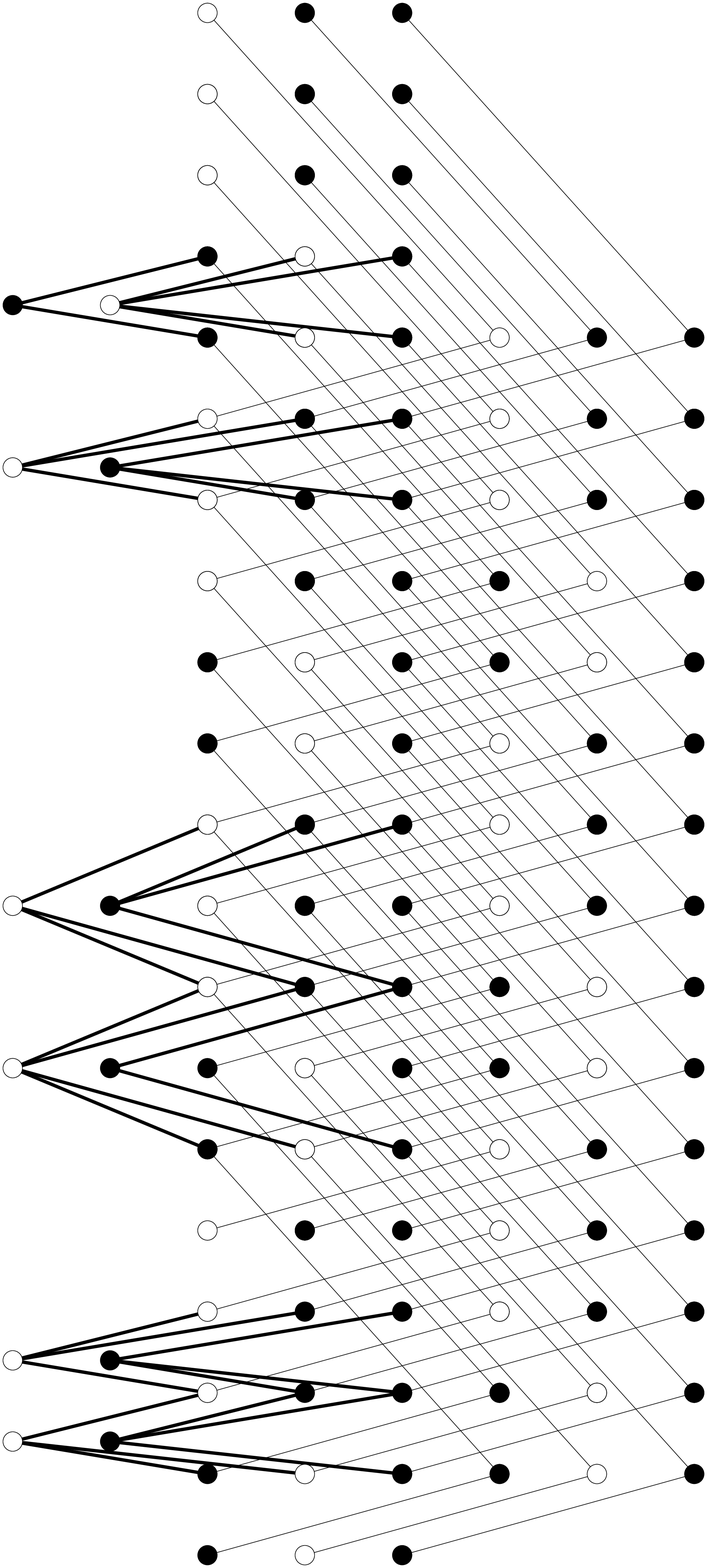}
\caption{Items (P5)-(P6) in the construction of poset $\pp_{\phi}$, 
where $\phi \in \mathcal{S}$ is as in Example~\ref{ex:ex1}.  The three points  
on the bottom, from left to right, represent $(\{(x_2,0),(x_4,0)\},(\delta_1,1))$, 
$(\{(x_2,1),(x_4,0)\},(\delta_1,1))$, and $(\{(x_2,1),(x_4,1)\},(\delta_1,1))$, 
that is, we display the satisfying assignments of $\delta_1$ 
in the order $\{(x_2,0),(x_4,0)\}$, $\{(x_2,1),(x_4,0)\}$, and $\{(x_2,1),(x_4,1)\}$.  
Similarly, we display the satisfying assignments: of $\delta_2$ in the order 
$\{(x_1,0),(x_4,0)\}$, $\{(x_1,1),(x_4,0)\}$, and $\{(x_1,1),(x_4,1)\}$; 
of $\delta_3$, in the order $\{(x_1,0),(x_2,0)\}$, $\{(x_1,1),(x_2,0)\}$, and $\{(x_1,1),(x_2,1)\}$; 
of $\delta_4$, in the order $\{(x_1,0),(x_2,0)\}$, $\{(x_1,1),(x_2,0)\}$, and $\{(x_1,1),(x_2,1)\}$; 
of $\delta_5$, in the order $\{(x_2,0),(x_3,0)\}$, $\{(x_2,0),(x_3,1)\}$, and $\{(x_2,1),(x_3,1)\}$.}
\label{fig:pphi2}
\end{figure}}

\longversion{\figpphib}

Let $\mathcal{S}$ be the class of propositional formulas in conjunctive form, containing at least $3$ clauses, 
where each clause contains at most $3$ literals; 
also, no clause contains a pair of complementary literals, 
and each variable occurs in at least two clauses.  
Let $\phi(x_1,\ldots,x_n)=\delta_1 \wedge \cdots \wedge \delta_m$ be in $\mathcal{S}$.  
For $i \in [n]$ and $j \in [m]$, we write $x_i \in \delta_j$ if a literal on variable $x_i$ occurs in clause $\delta_j$, 
and we let $\textup{var}(\delta_j)=\{ x_i \mid i \in [n], x_i \in \delta_j \}$.   

\longshort{We proceed in two stages (recall Example~\ref{ex:ex1}). }{We proceed in two stages.}
First, we define a poset $\qq_{\phi}$ as follows.  
The universe $Q_{\phi}$ contains 
$Q_{\phi}^a = \{ (\delta_i,j) \mid i \in [m], j \in [n]\}$, 
$Q_{\phi}^c = \{ (\delta'_i,j) \mid i \in [m], j \in [n-1]\}$, 
\begin{align*}
Q_{\phi}^v &=\left\{ (x_i,(j,j')) \suchthat 
\begin{array}{c}
i \in [n], x_i \in \delta_j, x_i \in \delta_{j'},j<j'\textup{,}\\
\textup{and } x_i \not\in \delta_{j''} \textup{ for all } j<j''<j'
\end{array} \right\}\text{,}
\end{align*}
and a set $Q_{\phi}^l$ of auxiliary elements introduced below. 

For $q,q' \in Q_{\phi}$, we let $\ll^{\qq_{\phi}}$ 
denote the fact that, in the order of $\qq_{\phi}$, 
there is a chain of $|Q^a_{\phi}|$ fresh auxiliary elements, contained in $Q^l_{\phi}$, between $q$ and $q'$.  
The order relation of $\qq_{\phi}$ is defined by the following cover relations:
\begin{itemize}
\item[(Q1)] for all $(\delta_i,j),(\delta_{i+1},{j}) \in Q_{\phi}^a$:  
if $i+1 \leq i'$ where $i'$ is the minimum in $[m]$ such that $x_j \in \delta_{i'}$, 
then $(\delta_i,j) \ll^{\qq_{\phi}} (\delta_{i+1},{j})$;  
if $i' \leq i$ where $i'$ is the maximum in $[m]$ such that $x_j \in \delta_{i'}$, 
then $(\delta_i,j) \ll^{\qq_{\phi}} (\delta_{i+1},{j})$;  
otherwise, $(\delta_i,j) \prec^{\qq_{\phi}} (\delta_{i+1},{j})$;
\item[(Q2)] $(\delta_m,j) \ll^{\qq_{\phi}} (\delta_{1},{j+1})$, for $(\delta_m,j),(\delta_{1},{j+1}) \in Q_{\phi}^a$; 
\item[(Q3)] $(\delta'_i,j) \prec^{\qq_{\phi}} (\delta'_{i+1},{j})$, for $(\delta'_i,j),(\delta'_{i+1},{j}) \in Q_{\phi}^c$; 
\item[(Q4)] $(\delta'_m,j) \prec^{\qq_{\phi}} (\delta'_{1},{j+1})$, for $(\delta'_m,j),(\delta'_{1},{j+1}) \in Q_{\phi}^c$; 
\item[(Q5)] $(x_i,(j,j')) <^{\qq_{\phi}} (x_i,(j',j''))$, 
for 
$(x_i,(j,j')),(x_i,(j',j'')) \in Q_{\phi}^v$; 
\item[(Q6)] $(x_i,(j,j')) <^{\qq_{\phi}} (x_{i+1},(k,k'))$, 
for 
$(x_i,(j,j')),(x_i,(k,k')) \in Q_{\phi}^v$ where 
$j'$ is maximum in $[m]$ such that $x_i \in \delta_{j'}$ 
and $k$ is minimum in $[m]$ such that $x_{i+1} \in \delta_{k}$.
\item[(Q7)] $(\delta_i,j) \prec^{\qq_{\phi}} (\delta'_i,j) \prec^{\qq_{\phi}} (\delta_i,j+1)$, 
for all $(\delta_i,j),(\delta_i,j+1) \in Q_{\phi}^a$ and $(\delta'_i,j) \in Q_{\phi}^c$;
\item[(Q8)] $(\delta_i,j) \ll^{\qq_{\phi}} (x_j,(i,i')) \ll^{\qq_{\phi}} (\delta_{i'},j)$, 
for all $(\delta_i,j),(\delta_{i'},j) \in Q_{\phi}^a$ and $(x_j,(i,i')) \in Q_{\phi}^v$.
\end{itemize}

Second, we define the poset $\pp_{\phi}=r(\phi)$, 
using $\qq_{\phi}$ as a basis, as follows.  
The universe $P_{\phi}$ is the union of  
\begin{align*}
P_{\phi}^a &= \bigcup_{(\delta_i,j) \in Q_{\phi}^a} \{ (f,(\delta_i,j)) \mid f \in \{0,1\}^{\textup{var}(\delta_i)} \textup{ satisfies } \delta_i \}\text{,}\\ 
P_{\phi}^c &= \bigcup_{(\delta'_i,j) \in Q_{\phi}^c} \{ (f,(\delta'_i,j)) \mid f \in \{0,1\}^{\textup{var}(\delta_i)} \textup{ satisfies } \delta_i \}\text{,}\\
P_{\phi}^v &=\bigcup_{(x_i,(j,j')) \in Q_{\phi}^v} \{ (x_i,(j,j')), (\neg x_i,(j,j'))\}\text{,}
\end{align*}
and a set $P_{\phi}^l$ of auxiliary elements introduced below. 

Again, for $p,p' \in P_{\phi}$, we let $\ll^{\pp_{\phi}}$ 
denote the fact that, in the order of $\pp_{\phi}$, 
there is a chain of $|Q^a_{\phi}|$ fresh auxiliary elements, contained in $P_{\phi}^l$, between $p$ and $p'$.   
The order relation of $\pp_{\phi}$ is defined by the following 
cover relation:
\begin{itemize}
\item[(P1)] for all $(f,(\delta_i,j)),(f',(\delta_{i'},{j'})) \in P_{\phi}^a$, 
$(f,(\delta_i,j)) \prec^{\pp_{\phi}} (f',(\delta_{i+1},{j}))$ 
if and only if 
$(\delta_i,j) \prec^{\qq_{\phi}} (\delta_{i'},{j'})$, 
and 
$(f,(\delta_i,j)) $\\
$\ll^{\pp_{\phi}} (f',(\delta_{i+1},{j}))$ 
if and only if 
$(\delta_i,j) \ll^{\qq_{\phi}} (\delta_{i'},{j'})$;

\item[(P2)] for all $(f,(\delta'_i,j)),(f',(\delta'_{i'},{j'})) \in P_{\phi}^c$, 
$(f,(\delta'_i,j)) \prec^{\pp_{\phi}} (f',(\delta'_{i+1},{j}))$ if and only if $(\delta'_i,j) \prec^{\qq_{\phi}} (\delta'_{i'},{j'})$;

\item[(P3)] for all $(x_i,(j,j'))$, $(\neg x_i,(j,j'))$, $(x_i,(j',j''))$, $(\neg x_i,(j',j''))$ in $P_{\phi}^v$, 
$(x_i,(j,j')) \prec^{\pp_{\phi}} (x_i,(j',j''))$ and 
$(\neg x_i,(j,j')) \prec^{\pp_{\phi}} (\neg x_i,(j',j''))$ 
if and only if $(x_i,(j,j')) \prec^{\qq_{\phi}} (x_i,(j',j''))$; 

\item[(P4)] for all $(x_i,(j,j'))$, $(\neg x_i,(j,j'))$, $(x_{i+1},(k,k'))$, and\\ $(\neg x_{i+1},(k,k'))$ in $P_{\phi}^v$, 
$(x_i,(j,j'))\longversion{\!} \prec^{\pp_{\phi}}\longversion{\!}(x_{i+1},(k,k'))$,  
$(x_i,(j,j'))\longversion{\!} \prec^{\pp_{\phi}}\longversion{\!}(\neg x_{i+1},(k,k'))$,  
$(\neg x_{i},(j,j')) \prec^{\pp_{\phi}}$\\
$(x_{i+1},(k,k'))$, 
and $(\neg x_{i},(j,j')) \prec^{\pp_{\phi}}(\neg x_{i+1},(k,k'))$, 
if and only if $(x_i,(j,j')) \prec^{\qq_{\phi}} (x_{i+1},(k,k'))$.

\item[(P5)] for all $(f,(\delta_i,j)), (f,(\delta_i,j+1)) \in P_{\phi}^a$ and $(f,(\delta'_i,j)) \in P_{\phi}^c$, 
$(f,(\delta_i,j)) \prec^{\pp_{\phi}} (f,(\delta'_i,j)) \prec^{\pp_{\phi}} (f,(\delta_i,j+1))$
if and only if 
$(\delta_i,j) \prec^{\qq_{\phi}} (\delta'_i,j) \prec^{\qq_{\phi}} (\delta_i,j+1)$;

\item[(P6)] for all 
$(f,(\delta_i,j)),(f',(\delta_{i'},j)),(g,(\delta_i,j)),(g',(\delta_{i'},j)) \in P_{\phi}^a$ 
and $(x_j,(i,i')),(\neg x_j,(i,i')) \in P_{\phi}^v$, it holds that 
$(f,(\delta_i,j)) \ll^{\pp_{\phi}} (x_j,(i,i')) \ll^{\pp_{\phi}} (f',(\delta_{i'},j))$ 
and $(g,(\delta_i,j)) \ll^{\pp_{\phi}} (\neg x_j,(i,i'))$\\$ \ll^{\pp_{\phi}} (g',(\delta_{i'},j))$ 
if and only if
$(\delta_i,j) \ll^{\qq_{\phi}} (x_j,(i,i')) \ll^{\qq_{\phi}} (\delta_{i'},j)$, 
$f(x_j)=f'(x_j)=1$, and $g(x_j)=g'(x_j)=0$.  
\end{itemize}
Note that $\textup{width}(\qq_{\phi}) \leq 4$ and $\textup{width}(\mathbf{P}_{\phi}) \leq 2^2+7^2+7^2=102$ 
for all $\phi \in \mathcal{S}$ 
(we remark that this width bound may be improved at the cost of a more complicated construction).  Hence 
$\mathcal{P}_{\textup{width}}=\{ r(\phi) \mid \phi \in \mathcal{S} \} $ 
has bounded width.

\longshort{\begin{theorem}}{\begin{theorem}[$\star$]}
\label{th:widthnphard} 
$\textsc{Emb}(\mathcal{P}_{\textup{width}})$ is $\textup{NP}$-hard.  
\end{theorem}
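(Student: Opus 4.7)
The plan is to verify that $\phi \mapsto (\qq_\phi, \pp_\phi)$ is a polynomial-time many-one reduction from SAT restricted to $\mathcal{S}$ (which is well known to be NP-hard) to $\textsc{Emb}(\mathcal{P}_{\textup{width}})$. Polynomial-time computability and the width bound $\textup{width}(\pp_\phi) \leq 102$, hence $\pp_\phi \in \mathcal{P}_{\textup{width}}$, are immediate from the construction. It remains to show that $\qq_\phi$ embeds into $\pp_\phi$ if and only if $\phi$ is satisfiable.

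For the forward direction, given a satisfying assignment $f \colon \{x_1,\dots,x_n\} \to \{0,1\}$ of $\phi$, I would define $e \colon Q_\phi \to P_\phi$ by sending $(\delta_i, j) \mapsto (f|_{\textup{var}(\delta_i)}, (\delta_i, j))$, $(\delta'_i, j) \mapsto (f|_{\textup{var}(\delta_i)}, (\delta'_i, j))$, and $(x_i, (j, j')) \mapsto (x_i, (j, j'))$ if $f(x_i) = 1$ and to $(\neg x_i, (j, j'))$ otherwise, while mapping each auxiliary $\ll^{\qq_\phi}$-chain bijectively onto the corresponding auxiliary $\ll^{\pp_\phi}$-chain (which is guaranteed to exist between the target images by items (P1), (P5), and (P6)). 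The targets in $P^a_\phi$ and $P^c_\phi$ are well defined precisely because $f$ satisfies each $\delta_i$. The verification that $e$ is an injective strong homomorphism reduces to a case-by-case comparison of (Q1)--(Q8) against (P1)--(P6), each case following directly from the construction and the choice of $f$.

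For the backward direction, I would extract a satisfying assignment from an embedding $e \colon \qq_\phi \to \pp_\phi$ in three steps. \emph{Step 1 (layer alignment):} the long $\ll^{\qq_\phi}$-auxiliary chains of length $|Q^a_\phi|=nm$, combined with the bounded width of $\pp_\phi$, force $e$ to map each $(\delta_i, j) \in Q^a_\phi$ into the $(\delta_i, j)$-layer of $P^a_\phi$, i.e.\ $e((\delta_i, j)) = (f_{i,j}, (\delta_i, j))$ for some satisfying $f_{i,j} \in \{0,1\}^{\textup{var}(\delta_i)}$ of $\delta_i$, and analogously $e((\delta'_i, j)) = (f'_{i,j}, (\delta'_i, j))$; the argument uses that no chain of the required length can pass through $P^v_\phi$ or through a mismatched layer of $P^a_\phi \cup P^c_\phi$ without violating either an $\ll^{\pp_\phi}$-cardinality or a width constraint. \emph{Step 2 (per-clause consistency):} the (Q7) triples $(\delta_i, j) \prec^{\qq_\phi} (\delta'_i, j) \prec^{\qq_\phi} (\delta_i, j+1)$ map, via (P5), to comparable triples sharing a common second coordinate inside a single $\delta_i$-block, which forces $f_{i,j} = f'_{i,j} = f_{i,j+1}$; iterating yields a uniform $f_i \in \{0,1\}^{\textup{var}(\delta_i)}$ satisfying $\delta_i$ for every clause. \emph{Step 3 (cross-clause consistency):} the (Q8) $\ll^{\qq_\phi}$-witnesses $(\delta_i, j) \ll^{\qq_\phi} (x_j, (i, i')) \ll^{\qq_\phi} (\delta_{i'}, j)$ combined with (P6) force $e((x_j, (i, i'))) \in \{(x_j, (i, i')), (\neg x_j, (i, i'))\}$ with the literal determined by whether $f_i(x_j) = f_{i'}(x_j) = 1$ or $= 0$, and the (Q5)/(P3) chain of $(x_j, (\cdot, \cdot))$-elements for each variable $x_j$ forces a single consistent value $f(x_j)$ across all occurrences. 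The resulting $f$ then restricts to $f_i$ on $\textup{var}(\delta_i)$ for every $i$, so $f$ satisfies $\phi$.

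The main obstacle is Step 1: rigorously showing that the long auxiliary $\ll^{\qq_\phi}$-chains cannot be accommodated by $\pp_\phi$ outside the intended layers. This requires a combinatorial width-and-length argument — exploiting that $\pp_\phi$ has width $\leq 102$ and that the $\ll^{\pp_\phi}$-chains of length $|Q^a_\phi|$ appear only at the positions stipulated by (P1), (P6), thereby ruling out any "shortcut" embedding that merges distinct $(\delta_i, j)$-layers or routes chain elements through $P^v_\phi$ or stray $P^l_\phi$-auxiliaries. Once this alignment is established, the remaining steps are local verifications against (P2)--(P6).
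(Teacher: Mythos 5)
Your overall architecture coincides with the paper's: the same forward-direction embedding built from a satisfying assignment, and a backward direction that first aligns the image of $e$ with the blocks $P_{\phi}^a$, $P_{\phi}^v$, $P_{\phi}^c$ and then extracts per-clause consistency from (Q1)--(Q4),(Q7) versus (P1)--(P2),(P5) and cross-clause consistency from (Q5),(Q6),(Q8) versus (P3)--(P6). Steps~2 and~3 of your backward direction are exactly the paper's two concluding observations. However, your Step~1 --- which you yourself identify as the main obstacle --- is asserted rather than proved, and this is precisely where the paper's technical work lies (its Claim on block preservation). Moreover, the mechanism you propose for it, namely that the width bound $\textup{width}(\pp_\phi)\leq 102$ together with the length of the auxiliary chains rules out stray embeddings, is not the right tool: width is far too coarse an invariant to distinguish where inside $\pp_\phi$ a long chain can land, since all the candidate locations live inside a structure of the same bounded width.

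The argument the paper actually uses is a \emph{depth} argument, not a width argument. One sets $Q^*=\{q\in Q_\phi^a \mid q \text{ is comparable to all of } Q_\phi^v\}$ and observes that $Q_\phi^v\cup Q_\phi^l\cup Q^*$ is the \emph{unique} chain of $\qq_\phi$ realizing $\textup{depth}(\qq_\phi)=d$, that $\textup{depth}(\pp_\phi)=d$ as well, and that every $d$-element chain of $\pp_\phi$ necessarily routes its variable part through $P_\phi^v$ and its $Q^*$-part through $P_\phi^a$; since an embedding must carry a maximum chain of $\qq_\phi$ onto a $d$-element chain of $\pp_\phi$, this pins down $e(Q_\phi^v)\subseteq P_\phi^v$ and $e(Q^*)\subseteq P_\phi^a$. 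The remaining elements of $Q_\phi^a\setminus Q^*$ are then handled by sandwiching each between consecutive elements of $Q^*$ and examining the interval of $\pp_\phi$ between their images, and $e(Q_\phi^c)\subseteq P_\phi^c$ is forced by an incomparability-counting argument: each $q\in Q_\phi^c$ is incomparable to $m-1$ consecutive elements of a length-$(m+1)$ chain in $Q_\phi^a$, a profile that no element of $P_\phi^a\cup P_\phi^v\cup P_\phi^l$ can reproduce. Without some argument of this kind (or an equally concrete substitute), your reduction's correctness in the backward direction is not established, so as written the proposal has a genuine gap at its central step even though the surrounding structure is correct.
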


\newcommand{\pfwidthnphard}[0]{
\begin{proof}
We give a polynomial-time many-one reduction from the satisfiability problem over 
$\mathcal{S}$ to the problem $\textsc{Emb}(\mathcal{P}_{\textup{width}})$, 
which suffices since the source problem is $\textup{NP}$-hard.  

The reduction maps an instance $\phi \in \mathcal{S}$ of the satisfiability problem, 
say $\phi(x_1,\ldots,x_n)=\delta_1 \wedge \cdots \wedge \delta_m$,  
to the instance $(\mathbf{Q}_{\phi},\mathbf{P}_{\phi})$ of $\textsc{Emb}(\mathcal{P}_{\textup{width}})$, 
where $\mathbf{Q}_{\phi}$ and $\mathbf{P}_{\phi}$ are constructed as above.  
The reduction is clearly polynomial-time computable.  We prove that the reduction is correct.

If $\phi$ is satisfiable, then let $g \colon \{x_1,\ldots,x_n\} \to \{0,1\}$ 
be a satisfying assignment.  We define a function $e \colon Q_{\phi} \to P_{\phi}$ 
as follows.  Let $q \in Q_{\phi}$.  Then:
\begin{itemize}
\item If $q=(\delta_i,j) \in Q_{\phi}^a$, 
then $e(q)=(f,(\delta_i,j)) \in P_{\phi}^a$ 
if and only if $g|_{\textup{var}(\delta_i)}=f$.
\item If $q=(\delta'_i,j) \in Q_{\phi}^c$, 
then $e(q)=(f,(\delta'_i,j)) \in P_{\phi}^c$ 
if and only if $g|_{\textup{var}(\delta_i)}=f$. 
\item If $q=(x_i,(j,j')) \in Q_{\phi}^v$, 
then $e(q)=(x_i,(j,j')) \in P_{\phi}^v$ if $g(x_i)=1$, 
and $e(q)=(\neg x_i,(j,j')) \in P_{\phi}^v$ if $g(x_i)=0$. 
\item If $q \in Q_{\phi}^l$, 
then let $q',q'' \in Q_{\phi}$ 
and $q_1,\ldots,q_{|Q^a_{\phi}|} \in Q_{\phi}^l$ 
be such that $q' \prec^{\qq_{\phi}} q_1 \prec^{\qq_{\phi}} \cdots \prec^{\qq_{\phi}} q_{|Q^a_{\phi}|} \prec^{\qq_{\phi}} q''$ 
and $q=q_i$ for $i \in [|Q^a_{\phi}|]$.  By construction, there exist 
$p_1,\ldots,p_{|Q^a_{\phi}|} \in P_{\phi}^l$ 
such that $e(q') \prec^{\pp_{\phi}} p_1 \prec^{\pp_{\phi}} \cdots \prec^{\pp_{\phi}} p_{|Q^a_{\phi}|} \prec^{\qq_{\phi}} e(q'')$.  
Then, $e(q)=e(q_i)=p_i$.   
\end{itemize}
It is easy to check that $e$ embeds $\qq_{\phi}$ into $\pp_{\phi}$. 

Conversely, let $e \colon Q_{\phi} \to P_{\phi}$ 
be an embedding of $\qq_{\phi}$ into $\pp_{\phi}$.  

\begin{claim}\label{claim:empart}
$e(Q_{\phi}^a) \subseteq P_{\phi}^a$, 
$e(Q_{\phi}^v) \subseteq P_{\phi}^v$, 
$e(Q_{\phi}^c) \subseteq P_{\phi}^c$. 

\begin{proof}[Proof of Claim~\ref{claim:empart}]  
Let $Q^*=\{ q \in Q_{\phi}^a \mid \text{$q$ is comparable to all elements in $Q_{\phi}^v$} \}$.  
Note that, by construction, $\textup{depth}(\qq_{\phi})=|Q_{\phi}^v \cup Q_{\phi}^l \cup Q^*|=d$, 
and the chain $Q_{\phi}^v \cup Q_{\phi}^l \cup Q^*$ is the unique chain 
whose size equals $d$.  
In the poset $\qq_{\phi}$ depicted in Figure~\ref{fig:qphi}, 
$Q^*$ contains exactly the elements of the middle chain hit by a thick edge, 
and the chain $Q_{\phi}^v \cup Q_{\phi}^l \cup Q^*$ is represented by the thick edges.  
Moreover, by construction again, $\textup{depth}(\pp_{\phi})=d$, 
and the only chains in $\pp_{\phi}$ whose size equals $d$ 
force the embedding to satisfy $e(Q_{\phi}^v) \subseteq P_{\phi}^v$ 
and $e(Q^*) \subseteq P_{\phi}^a$.  

We now prove that $e(Q_{\phi}^a \setminus Q^*) \subseteq P_{\phi}^a$, 
which, together with the above, yields $e(Q_{\phi}^a) \subseteq P_{\phi}^a$.  
Indeed, let $q \in Q_{\phi}^a \setminus Q^*$.  
Let $q',q'' \in Q^*$ be such that $q' <^{\qq_{\phi}} q <^{\qq_{\phi}} q''$ 
and there do not exist $r',r'' \in Q^*$ such that $q' <^{\qq_{\phi}} r' <^{\qq_{\phi}} q$ 
or $q <^{\qq_{\phi}} r'' <^{\qq_{\phi}} q''$.  
In Figure~\ref{fig:qphi}, if, for instance, $q$ is the $8$th lowest element in the middle chain, 
then $q'$ and $q''$ are respectively the $6$th and $9$th lowest elements in the middle chain.  
Let $S=\{ p \in P_{\phi} \mid e(q') <^{\pp_{\phi}} p <^{\pp_{\phi}} e(q'') \}$, 
so that $e(q) \in S$, because $e$ is an embedding.  By the above, 
$S \cap (P_{\phi}^v \cup P_{\phi}^l) \subseteq e(Q_{\phi}^v \cup Q_{\phi}^l \cup Q^*)$, 
therefore $e(q) \in S \setminus (P_{\phi}^v \cup P_{\phi}^l)$.  Moreover, 
the distance between $e(q')$ and $e(q'')$ in $\pp_{\phi}$ is strictly 
less than $m$, therefore $S \cap P_{\phi}^c=\emptyset$.  It follows that 
$e(q) \in S \setminus (P_{\phi}^v \cup P_{\phi}^l \cup P_{\phi}^c)$, 
that is, $e(q) \in P_{\phi}^a$.

Finally, we prove that $e(Q_{\phi}^c) \subseteq P_{\phi}^c$.  
Indeed, let $q \in Q_{\phi}^c$.  By construction, 
there exist $m+1$ elements $q_0,\ldots,q_m \in Q_{\phi}^a$ 
such that $q_0 <^{\qq_{\phi}} \cdots <^{\qq_{\phi}} q_{m}$, 
$q_0 <^{\qq_{\phi}} q <^{\qq_{\phi}} q_m$, 
and $q$ is incomparable to $q_1,\ldots,q_{m-1}$  in $\qq_{\phi}$.   
By the above, $e(q_0),\ldots,e(q_{m}) \in P_{\phi}^a$.  
As $e$ is an embedding, 
$e(q_0) <^{\pp_{\phi}} \cdots <^{\pp_{\phi}} e(q_{m})$, 
$e(q_0) <^{\pp_{\phi}} e(q) <^{\pp_{\phi}} e(q_m)$, 
and $e(q)$ is incomparable to $e(q_1),\ldots,e(q_{m-1})$ in $\pp_{\phi}$.  
By inspection of the construction, 
we now prove that $e(q) \not\in P_{\phi}^a \cup P_{\phi}^v \cup P_{\phi}^l$, 
which implies $e(q) \in P_{\phi}^c$ as desired.  
If $e(q) \in P_{\phi}^a$, 
then $e(q)$ is incomparable to at most $1$ element 
among $e(q_1),\ldots,e(q_{m-1})$, 
which implies $e(q) \not\in P_{\phi}^a$ since $m>2$.  
If $e(q) \in P_{\phi}^v \cup P_{\phi}^l$, 
then $e(q)$ is incomparable to at most $m-2$ elements 
among $e(q_1),\ldots,e(q_{m-1})$, 
which implies $e(q) \not\in P_{\phi}^v \cup P_{\phi}^l$.
\end{proof}
\end{claim}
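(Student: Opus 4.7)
The plan is to exploit the fact that both $\qq_{\phi}$ and $\pp_{\phi}$ have a distinguished maximum-length chain (a \emph{spine}) whose internal structure is rigidly dictated by the construction; since an embedding is an injective strong homomorphism, it must send chains of $\qq_{\phi}$ to chains in $\pp_{\phi}$ of the same length, and the unique spine of $\qq_{\phi}$ must therefore land inside (a) spine(s) of $\pp_{\phi}$. This immediately pins down the partition class of most of $Q_{\phi}^v$ and of a carefully chosen subset of $Q_{\phi}^a$; the rest of the partition is recovered by local constraints.

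Concretely, I would let $Q^* = \{ q \in Q_{\phi}^a \mid q \text{ is comparable to every element of } Q_{\phi}^v \}$, which by (Q8) is exactly the set of \lq\lq bracketing\rq\rq\ $a$-elements. I would first verify that the concatenation $Q_{\phi}^v \cup Q_{\phi}^l \cup Q^*$ is a chain of length $\textup{depth}(\qq_{\phi})$, using that the $\ll$ connectors in (Q8) each insert $|Q_{\phi}^a|$ auxiliary elements and that the $v$-elements themselves form a chain by (Q5)--(Q6). A small counting argument — comparing the contribution of $l$-chains (of length $|Q_{\phi}^a|$) against the alternative of traversing $Q_{\phi}^a$ or $Q_{\phi}^c$ — shows that this chain uniquely realizes the depth. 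The analogous calculation in $\pp_{\phi}$ shows that $\textup{depth}(\pp_{\phi}) = \textup{depth}(\qq_{\phi})$ and that every maximum-length chain in $\pp_{\phi}$ occupies the $v$-slots with elements of $P_{\phi}^v$, the $l$-slots with elements of $P_{\phi}^l$, and the \lq\lq bracketing\rq\rq\ $a$-slots with elements of $P_{\phi}^a$ (not $P_{\phi}^c$, whose entries are positioned too far from the $v$-layer). From this I obtain $e(Q_{\phi}^v) \subseteq P_{\phi}^v$ and $e(Q^*) \subseteq P_{\phi}^a$.

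Next I would extend the containment to all of $Q_{\phi}^a$. For $q \in Q_{\phi}^a \setminus Q^*$, pick its nearest spine neighbours $q' \prec^{\qq_{\phi}} \cdots \prec^{\qq_{\phi}} q \prec^{\qq_{\phi}} \cdots \prec^{\qq_{\phi}} q''$ with $q',q'' \in Q^*$; by the first part $e(q'), e(q'') \in P_{\phi}^a$, and the set $S$ of points strictly between them in $\pp_{\phi}$ has bounded size with a transparent description read off from (P1)--(P6). I would argue that $S$ meets $P_{\phi}^v \cup P_{\phi}^l$ only in images of spine elements already consumed by $e$, and cannot meet $P_{\phi}^c$ at all because the $\pp_{\phi}$-distance between $e(q')$ and $e(q'')$ is strictly less than $m$ (the length of one $c$-block). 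Hence $e(q) \in S \setminus (P_{\phi}^v \cup P_{\phi}^l \cup P_{\phi}^c) = P_{\phi}^a$.

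Finally, for $q \in Q_{\phi}^c$, I would use the incomparability signature: by (Q7) together with (Q3)--(Q4), each $c$-element is sandwiched between two $a$-elements $q_0 <^{\qq_{\phi}} q_m$ at distance $m+1$ along the $a$-chain and is incomparable to the $m-1$ intermediate $a$-elements $q_1,\dots,q_{m-1}$. After applying $e$, the images $e(q_0), \dots, e(q_m) \in P_{\phi}^a$ retain this pattern, and a case inspection of (P1)--(P6) shows that any element of $P_{\phi}^a$ is incomparable to at most one, and any element of $P_{\phi}^v \cup P_{\phi}^l$ to at most $m-2$, of such a consecutive block; so $e(q) \in P_{\phi}^c$. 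The main obstacle I anticipate is precisely this last case analysis — keeping track of the many cover relations and bounding the allowable incomparabilities uniformly — together with pinning down the uniqueness of the depth-realizing chain in $\pp_{\phi}$, which requires a clean accounting of the widths chosen in (Q1) versus the direct covers in (Q3), (Q5), (Q6).
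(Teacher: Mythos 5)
Your proposal follows essentially the same route as the paper's proof: the same set $Q^*$ of bracketing $a$-elements, the same identification of $Q_{\phi}^v \cup Q_{\phi}^l \cup Q^*$ as the unique depth-realizing chain forcing $e(Q_{\phi}^v) \subseteq P_{\phi}^v$ and $e(Q^*) \subseteq P_{\phi}^a$, the same interval argument (with the distance-less-than-$m$ bound excluding $P_{\phi}^c$) for the remaining $a$-elements, and the same incomparability-counting argument (at most $1$ versus at most $m-2$) for the $c$-elements. The plan is correct and matches the paper's argument in all essentials.
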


The previous three properties uniquely determine the behavior of $e$ over $Q_{\phi}^l$.  
Next, we state two facts which follow from the embedding and specific properties of the construction of $\mathbf{Q}$ and $\mathbf{P}$.
\begin{itemize}
\item Items (Q1)-(Q4) and (Q7) on one hand and (P1)-(P2) and (P5) on the other hand enforce the following:
for all $i \in [m]$ and $j \in [n]$, 
there exists a unique $f \in \{0,1\}^{\textup{var}(\delta_i)}$ 
such that for all $(\delta_i,j),(\delta'_i,j) \in Q_{\phi}^a \cup Q_{\phi}^c$ 
it holds that $e((\delta_i,j))=(f,(\delta_i,j))$ and $e((\delta'_i,j))=(f,(\delta'_i,j))$. 
\item Items (Q5)-(Q6) and (Q8) on one hand  
and (P3)-(P5) and (P6) on the other hand enforce the following:
for all $i,i' \in [m]$, $i \neq i'$, and $j \in [n]$ 
such that $x_j \in \textup{var}(\delta_i) \cap \textup{var}(\delta_{i'})$, 
it holds that if $e((\delta_i,j))=(f,(\delta_i,j))$ and $e((\delta_{i'},j))=(f',(\delta_{i'},j))$, 
then $f(x_j)=f'(x_j)$.  
\end{itemize}

Therefore the union of all the assignments $f$ 
such that $e((\delta_i,\cdot))=(f,(\delta_i,\cdot))$, taken over all $i \in [m]$, 
defines an assignment $g \colon \{x_1,\ldots,x_n\} \to \{0,1\}$, 
and moreover $g$ satisfies $\phi$. This concludes the proof
\end{proof}}

\longversion{\pfwidthnphard}

\subsection{Embedding is $\textup{NP}$-hard on Bounded Degree Posets}\label{sect:degreenphard}

\longshort{We reduce from the satisfiability problem. Let $\mathcal{S}$ be the class of propositional formulas in conjunctive form, 
where each clause contains exactly $3$ pairwise non-complementary literals (for notational convenience, 
but the construction works even if relaxed to at most $3$ literals, 
which we use for illustration purposes in the examples).}{We reduce from the satisfiability problem. Let $\mathcal{S}$ be the class of propositional formulas in conjunctive form, 
where each clause contains exactly $3$ pairwise non-complementary literals.} 

\newcommand{\exdegreenphard}[0]{
\begin{example}\label{ex:degreenphard}
Let $\phi(x_1,x_2,x_3)=\delta_1 \wedge \delta_2 \wedge \delta_3$, 
where 
$\delta_1=x_1 \vee \neg x_2$, 
$\delta_2=x_3 \vee \neg x_1$, and 
$\delta_3=\neg x_3 \vee x_2$.  Note that, for instance, 
$\phi$ is satisfied by $\{(x_1,0),(x_2,0),(x_3,0)\}$.  

The poset $\qq_{\phi}$ is depicted in Figure~\ref{fig:qdegreenphard}, 
where $Q_0$, $Q_1$, and $Q_2$ form respectively the bottom, middle, and top layers of the diagram; 
poset $\pp_{\phi}$ is similarly displayed in Figure~\ref{fig:pdegreenphard}.  
The white points in $\pp_{\phi}$ form the image of the 
embedding $e \colon Q_{\phi} \to P_{\phi}$ of $\qq_{\phi}$ into $\pp_{\phi}$ corresponding 
to the satisfying assignment above as by (the easy direction of) Theorem~\ref{th:degreenphard}.  
\end{example}}

The idea of the reduction is the following.  We encode a formula in $\mathcal{S}$ by a poset $\pp$, 
whose universe partitions into three blocks, $P_0$, $P_1$ and $P_2$. 
The set $P_1$ contains several groups of $7$ elements, 
where each element corresponds to one possible satisfying assignment of a clause, and the embedding 
encodes an assignment for the whole formula by forcing us to choose one element out of each group. 
The set $P_2$ ensures that each assignment chosen by the embedding is consistent for each pair of clauses. 
To preserve bounded degree while ensuring the consistency of each pair of clauses, 
it is necessary to use many groups in $P_1$ for each clause. Finally, 
$P_0$ ensures that each choice made by the embedding for a given clause is consistent across all groups corresponding to that clause.  
\longversion{\exdegreenphard}

\newcommand{\figqdegreenphard}[0]{
\begin{figure}[h]
\centering
\includegraphics[scale=.2]{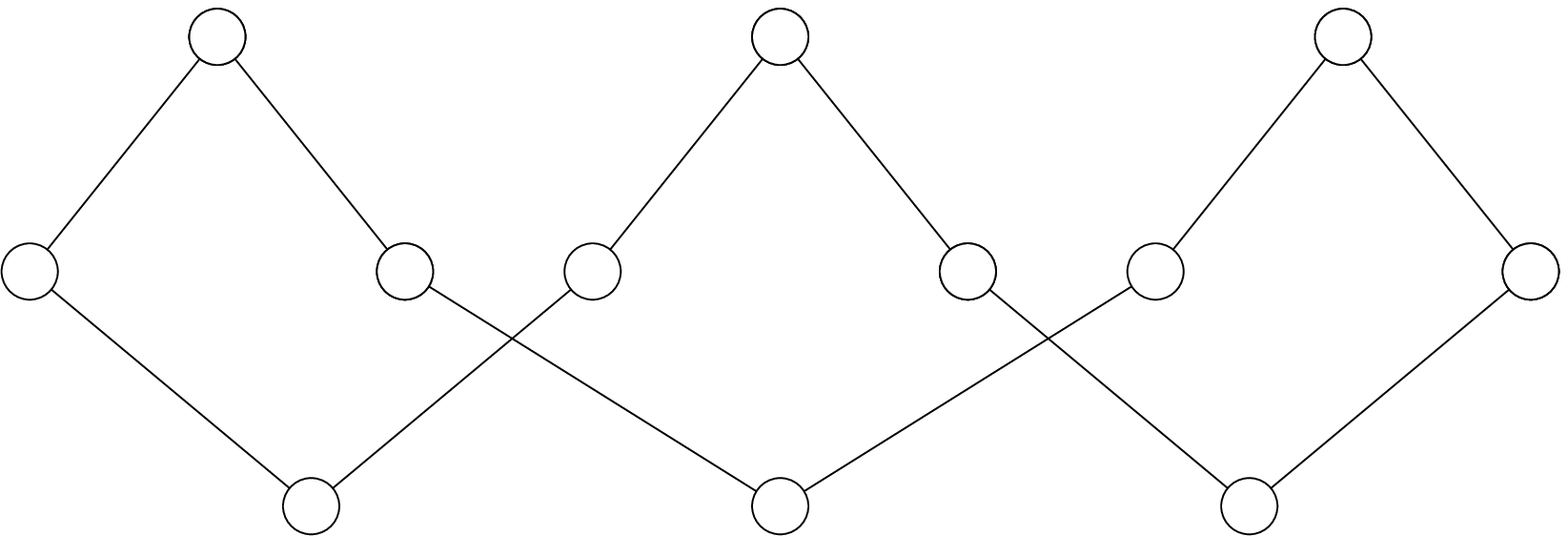}
\caption{The poset $\qq_{\phi}$ corresponding to $\phi \in \mathcal{S}$ in Example~\ref{ex:degreenphard}.}
\label{fig:qdegreenphard}
\end{figure}}

\longversion{\figqdegreenphard}

\newcommand{\figpdegreenphard}[0]{
\begin{figure}[h]
\centering
\includegraphics[scale=.2]{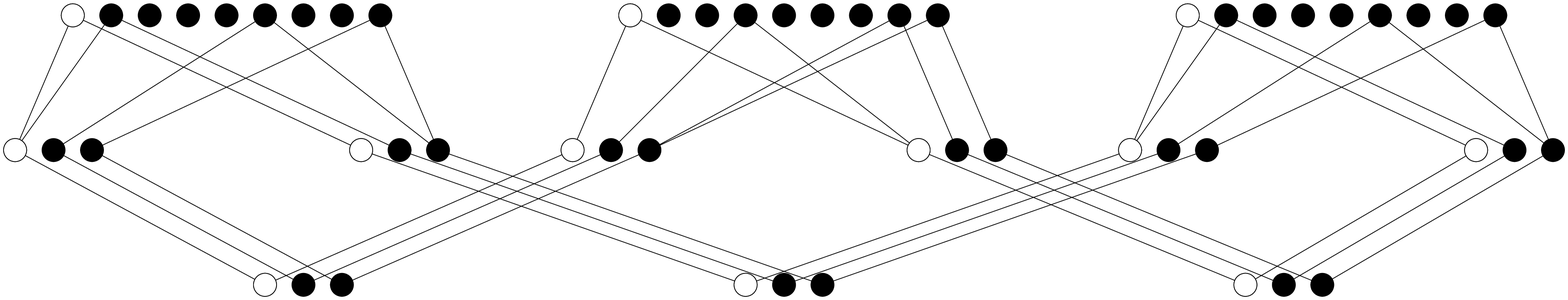}
\caption{The poset $\pp_{\phi}$ corresponding to $\phi \in \mathcal{S}$ in Example~\ref{ex:degreenphard}.}
\label{fig:pdegreenphard}
\end{figure}}

\longversion{\figpdegreenphard}

We now formalize the ideas outlined above. Let $\phi(x_1,\ldots,x_n)=\delta_1 \wedge \cdots \wedge \delta_m$ be in $\mathcal{S}$.  
For $j \in [n]$ and $i \in [m]$, we write $x_j \in \delta_i$ if a literal on variable $x_j$ occurs in clause $\delta_i$, 
and we let $\textup{var}(\delta_i)=\{ x_j \mid j \in [n], x_j \in \delta_i \}$.  
For all $i \in [m]$, let $(g_{i,1},\ldots,g_{i,7})$ be a fixed ordering of the assignments in $\{0,1\}^{\textup{var}(\delta_i)}$ satisfying $\delta_i$, 
and let $(i_1,i_2,\ldots,i_{m-1})=(1,\ldots,i-1,i+1,\ldots,m)$. We define our two posets $\qq_\phi$ and $\pp_\phi$ below.  

The poset $\qq_\phi$ has universe $Q_\phi=Q_{0} \cup Q_{1} \cup Q_{2}$, where 
\begin{align*}
Q_{0} = & \{ c_{(i,j)}, c_{(i,m)}, c_{(m,j)} \mid i, j \in [m-1], i\neq j \}\text{,} \\
Q_{1} = & \{ f_{(i,j)} \mid i,j \in [m], i\neq j \} \text{,}\\
Q_{2} = & \{ d_{(i,j)} \mid 1 \leq i<j \leq m \} \text{,}
\end{align*}
and its cover relation is defined by the following:
\begin{itemize}
\item[(E1)] $f_{(i,j)},f_{(j,i)} \prec^{\qq_\phi} d_{(i,j)}$ for all $1 \leq i<j \leq m$.  
\item[(E2)] For all $i \in [m]$, 
\longshort{\begin{align*}
f_{(i,i_1)} & \succ^{\qq_\phi} c_{(i,i_1)} \prec^{\qq_\phi} f_{(i,i_2)} \succ^{\qq_\phi} \cdots \succ^{\qq_\phi} c_{(i,i_{m-1})} \prec^{\qq_\phi} f_{(i,i_{m-1})}\text{.}
\end{align*}}{\begin{align*}
f_{(i,i_1)} & \succ^{\qq_\phi} c_{(i,i_1)} \prec^{\qq_\phi} f_{(i,i_2)} \succ^{\qq_\phi} \cdots \\
\cdots & \succ^{\qq_\phi} c_{(i,i_{m-1})} \prec^{\qq_\phi} f_{(i,i_{m-1})}\text{.}
\end{align*}}
\end{itemize}

The poset $\pp_\phi$ has universe $P_\phi=P_{0} \cup P_{1} \cup P_{2}$ where, 
\begin{align*}
P_0 = & \{ c_{(i,j),a},c_{(i,m),a},c_{(m,j),a} \mid i,j \in [m-1], i\neq j, a \in [7] \}\text{,}\\
P_1 = & \{ f_{(i,j),a} \mid i,j\in [m], i\neq j, a \in [7] \}\text{,}\\
P_2 = & \{ d_{(i,j),(a,a')} \mid 1 \leq i<j \leq m, (a,a') \in [7]^2 \}\text{,}
\end{align*}
and its cover relation is defined by the following:
\begin{itemize}
\item[(D1)] For all $1 \leq i<j \leq m$, it holds that 
$f_{(i,j),a},f_{(j,i),a'} \prec^{\pp_\phi} d_{(i,j),(a,a')}$ if and only if $g_{i,a}(x)=g_{j,a'}(x)$ 
for all $x \in \textup{var}(\delta_i) \cap \textup{var}(\delta_j)$.  

\item[(D2)] For all $i \in [m]$ and $a \in [7]$, 
\longshort{\begin{align*}
f_{(i,i_{1}),a} & \succ^{\pp_\phi} c_{(i,i_1),a} \prec^{\pp_\phi} f_{(i,i_2),a} \succ^{\pp_\phi} \cdots \succ^{\pp_\phi} c_{(i,i_{m-1})} \prec^{\pp_\phi} f_{(i,i_{m-1}),a}\text{.}
\end{align*}}{\begin{align*}
f_{(i,i_{1}),a} & \succ^{\pp_\phi} c_{(i,i_1),a} \prec^{\pp_\phi} f_{(i,i_2),a} \succ^{\pp_\phi} \cdots\\ 
\cdots & \succ^{\pp_\phi} c_{(i,i_{m-1})} \prec^{\pp_\phi} f_{(i,i_{m-1}),a}\text{.}
\end{align*}}
\end{itemize}
Since $\textup{cover\textup{-}degree}(\pp_{\phi}) \leq 1+7=8$ and 
$\textup{depth}(\pp_{\phi}) \leq 3$, $\mathcal{P}_{\textup{degree}}=\{ \pp_{\phi} \mid \phi \in \mathcal{S} \}$ 
has bounded degree by Proposition~\ref{pr:diagram}.

\longshort{\begin{theorem}}{\begin{theorem}[$\star$]}
\label{th:degreenphard} 
$\textsc{Emb}(\mathcal{P}_{\textup{degree}})$ is $\textup{NP}$-hard. 
\end{theorem}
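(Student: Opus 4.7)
The plan is to reduce from satisfiability over $\mathcal{S}$, which is NP-hard, to $\textsc{Emb}(\mathcal{P}_{\textup{degree}})$. Given $\phi = \delta_1 \wedge \cdots \wedge \delta_m \in \mathcal{S}$, I map it to the instance $(\qq_\phi, \pp_\phi)$ constructed above; this is clearly polynomial-time computable. It remains to prove that $\qq_\phi$ embeds into $\pp_\phi$ if and only if $\phi$ is satisfiable. For the easy direction, given a satisfying assignment $g$ of $\phi$, for each $i \in [m]$ let $a_i \in [7]$ be the unique index with $g_{i,a_i} = g|_{\textup{var}(\delta_i)}$, and define $e$ by $f_{(i,j)} \mapsto f_{(i,j),a_i}$, $c_{(i,j)} \mapsto c_{(i,j),a_i}$, and $d_{(i,j)} \mapsto d_{(i,j),(a_i,a_j)}$. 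The $d$-covers of (E1) are preserved by (D1) because $g_{i,a_i}$ and $g_{j,a_j}$ agree on $\textup{var}(\delta_i) \cap \textup{var}(\delta_j)$, and the cyclic chains (E2) map into the cyclic chains (D2) thanks to the consistent choice of $a_i$ across all $j$; injectivity and preservation of incomparabilities follow from the stratified structure of both posets.

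The converse is the crux. Let $e : \qq_\phi \to \pp_\phi$ be an embedding. First I would show that $e(Q_k) \subseteq P_k$ for $k \in \{0,1,2\}$ via a depth argument: every $d_{(i,j)} \in Q_2$ is the top of a chain $c \prec f \prec d$ of length three, and in the depth-three poset $\pp_\phi$ only elements of $P_2$ admit chains of length three below them; symmetrically $e(Q_0) \subseteq P_0$, and $e(Q_1) \subseteq P_1$ by injectivity. Because the layers $P_0, P_1, P_2$ are linearly stratified with no elements sitting between them, every cover $a \prec b$ of $\qq_\phi$ with $a \in Q_k$, $b \in Q_{k+1}$ maps to a cover $e(a) \prec e(b)$ in $\pp_\phi$. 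Now fix $i$ and traverse the cyclic chain (E2): each $c_{(i,j)}$ has two upper covers in $\qq_\phi$, so $e(c_{(i,j)}) \in P_0$ has two upper covers among the images, and by (D2) any element of $P_0$ has exactly two upper covers in $\pp_\phi$, both of the form $f_{(i',j'),a'}$ for a single pair $(i', a')$. Propagating this equality around the cycle yields constants $\sigma(i) \in [m]$ and $a_i \in [7]$ such that for every $j \neq i$, the image $e(f_{(i,j)})$ is of the form $f_{(\sigma(i), j'), a_i}$ for some $j'$.

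Finally, for each $i < j$ the element $e(d_{(i,j)}) \in P_2$ covers exactly two elements of $P_1$, namely $e(f_{(i,j)})$ and $e(f_{(j,i)})$, so by (D1) these must equal $f_{(i'',j''),b}$ and $f_{(j'',i''),b'}$ for some $i'' < j''$ and some $b, b'$ with $g_{i'',b}$ and $g_{j'',b'}$ agreeing on $\textup{var}(\delta_{i''}) \cap \textup{var}(\delta_{j''})$. Matching with the previous step forces $\{\sigma(i),\sigma(j)\} = \{i'',j''\}$, so $\sigma$ is a permutation of $[m]$ and the partial assignments $g_{\sigma(i),a_i}$ are pairwise compatible on overlapping variables. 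Their union is therefore a total assignment of $\phi$ that satisfies every clause. The main obstacle will be the cycle-propagation argument in the second paragraph: we must carefully ensure that each cover $c \prec f$ in $\qq_\phi$ lifts to a cover in $\pp_\phi$ and that the two upper covers of $e(c_{(i,j)})$ are pinned to a single $(i',a')$-pair, which is precisely what allows the indices $\sigma(i)$ and $a_i$ to remain constant around each cyclic chain of (E2).
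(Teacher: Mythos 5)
Your proposal is correct and follows essentially the same route as the paper's proof: the same reduction from satisfiability over $\mathcal{S}$, the same layer-preservation argument via $3$-element chains, the same fence-propagation along (E2)/(D2) to pin down a single satisfying assignment per clause, and the same use of (E1)/(D1) to enforce pairwise consistency. Your treatment is in fact slightly more careful than the paper's, since you explicitly allow the embedding to permute the clause indices (your $\sigma$) and then argue that $\sigma$ is a permutation, whereas the paper tacitly assumes $e(f_{(i,j)})=f_{(i,j),a}$.
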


\newcommand{\pfdegreenphard}[0]{
\begin{proof}
We give a polynomial-time many-one reduction from the satisfiability problem over 
$\mathcal{S}$ to the problem $\textsc{Emb}(\mathcal{P}_{\textup{degree}})$, 
which suffices since the source problem is $\textup{NP}$-hard.  

The reduction maps an instance $\phi \in \mathcal{S}$ of the satisfiability problem, 
say $\phi(x_1,\ldots,x_n)=\delta_1 \wedge \cdots \wedge \delta_m$,  
to the instance $(\mathbf{Q}_{\phi},\mathbf{P}_{\phi})$ of $\textsc{Emb}(\mathcal{P}_{\textup{degree}})$.  
The reduction is clearly polynomial-time computable.  

For correctness, let $g \colon \{x_1,\ldots,x_n\} \to \{0,1\}$ be an assignment satisfying $\phi$.  
Recall that $(g_{i,1},\ldots,g_{i,7})$ is a fixed ordering of the assignments in $\{0,1\}^{\textup{var}(\delta_i)}$ satisfying $\delta_i$, 
for all $i \in [m]$.  Let $(a_1,\ldots,a_m) \in [7]^m$ be such that 
$g|_{\textup{var}(\delta_i)}=g_{i,a_i}$ for all $i \in [m]$.  It is easy to check 
that the function $e \colon Q_{\phi} \to P_{\phi}$ defined by setting:
\begin{itemize}
\item $e(c_{(i,j)})=c_{(i,j),a_i}$ for all $c_{(i,j)} \in Q_0$;
\item $e(f_{(i,j)})=f_{(i,j),a_i}$ for all $f_{(i,j)} \in Q_1$;
\item $e(d_{(i,j)})=d_{(i,j),(a_i,a_j)}$ for all $d_{(i,j)} \in Q_2$;
\end{itemize}
embeds $\qq_{\phi}$ into $\pp_{\phi}$. 

Conversely, let $e \colon Q_{\phi} \to P_{\phi}$ embed $\qq_{\phi}$ into $\pp_{\phi}$.  
We show that $\phi$ is satisfiable.  Note that $e(Q_i) \subseteq P_i$ for all $i \in \{0,1,2\}$, 
because $e$ maps all $3$-element chains in $\qq_{\phi}$ into $3$-element chains in $\pp_{\phi}$, 
all $3$-element chains in $\qq_{\phi}$ link three elements in $Q_0$, $Q_1$, and $Q_2$, in this order, 
and all $3$-element chains in $\pp_{\phi}$ link three elements in $P_0$, $P_1$, and $P_2$, in this order.  

We first claim that for all $i \in [m]$, 
there exists exactly one $a \in [7]$ such that, 
for all $j \in [m]\setminus \{i\}$, 
it holds that $e(f_{(i,j)})=f_{(i,j),a}$.  Assume for 
a contradiction that $e(f_{(i,j)})=f_{(i,j),a}$ and $e(f_{(i,j')})=f_{(i,j'),a'}$ 
for some $i \in [m]$, $a \neq a' \in [7]$, and $j \neq j' \in [m]\setminus\{i\}$; 
without loss of generality, let $j<j'$.  By (E2), 
$f_{(i,j)}$ reaches $f_{(i,j')}$ through a fence of length 
$2(j'-j)$, starting in $Q_1$ and alternating steps in $Q_0$ and $Q_1$; but by (D2), 
$f_{(i,j),a}$ does not reach $f_{(i,j'),a'}$ through a fence of length $2(j'-j)$, 
starting in $P_1$ and alternating steps in $P_0$ and $P_1$, contradicting the assumption that $e$ is an embedding.

Let $(a_1,\ldots,a_m) \in [7]^m$ be uniquely determined by the previous 
claim.  We now claim that, for all $i,j \in [m]$ such that $i \neq j$, 
and all $x \in \textup{var}(\delta_i) \cap \textup{var}(\delta_{j})$, 
it holds that $g_{i,a_i}(x)=g_{j,a_j}(x)$.  Assume without loss of generality that $i<j$.  
By (E1), $f_{(i,j)},f_{(j,i)} \prec^{\qq_\phi} d_{(i,j)}$.  
By hypothesis, $e(f_{(i,j)})=f_{(i,j),a_i}$ and $e(f_{(j,i)})=f_{(j,i),a_j}$.  
Therefore, since $e$ is an embedding, $f_{(i,j),a_i},f_{(j,i),a_j} \prec^{\pp_\phi} e(d_{(i,j)})$; 
thus, by (D1), $e(d_{(i,j)})=d_{(i,j),(a_i,a_j)}$ that is, 
$g_{i,a_i}(x)=g_{j,a_j}(x)$ for all $x \in \textup{var}(\delta_i) \cap \textup{var}(\delta_{j})$.  

By the above, $g=g_{1,a_1} \cup \cdots \cup g_{m,a_m}$ is a 
function from $\{x_1,\ldots,x_n\}$ to $\{0,1\}$.  Since $g_{i,a_i}$ 
satisfies $\delta_i$ for all $i\in [m]$, it follows that $g$ satisfies $\phi$, concluding the proof.
\end{proof}}

\longversion{\pfdegreenphard}

\longshort{\subsection{Isomorphism in Polynomial Time on Bounded Width Posets}\label{sect:wdtract}}{\subsection{Isomorphism in Polytime on Bounded Width Posets}\label{sect:wdtract}}

The insight on bounded width used to prove tractability of the embedding problem 
essentially scales to the isomorphism problem.

\begin{theorem}\label{th:isoptime}
Let $\mathcal{P}$ be a class of 
posets of bounded width.  Then, 
$\textsc{Iso}(\mathcal{P})$ is polynomial-time tractable.
\end{theorem}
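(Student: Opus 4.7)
The plan is to reduce isomorphism of bounded-width posets to isomorphism of small distributive lattices via Birkhoff's representation theorem, and then appeal to a known polynomial-time algorithm for the latter. For a finite poset $\mathbf{P}$, let $L(\mathbf{P})$ denote the lattice of downsets (order ideals) of $\mathbf{P}$ ordered by inclusion, which is a finite distributive lattice. Birkhoff's theorem gives that $\mathbf{P} \cong \mathbf{Q}$ if and only if $L(\mathbf{P}) \cong L(\mathbf{Q})$, so poset isomorphism reduces to testing isomorphism of the associated lattices.

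The first step is to show that when $\mathbf{P}$ has width at most $w$, the lattice $L(\mathbf{P})$ has size polynomial in $|P|$ and can be constructed in polynomial time. By Theorem~\ref{th:felsner} one decomposes $\mathbf{P}$ into chains $\mathbf{C}_1,\ldots,\mathbf{C}_w$ in polynomial time; since each antichain of $\mathbf{P}$ meets every chain in at most one element, the number of antichains of $\mathbf{P}$---equivalently, the number of downsets, equivalently $|L(\mathbf{P})|$---is bounded by $\prod_{i=1}^{w}(|C_i|+1) \leq (|P|+1)^w$, which is $|P|^{O(w)}$ and hence polynomial for fixed $w$. The same bound yields an explicit polynomial-time procedure to enumerate the antichains of $\mathbf{P}$ as tuples picking at most one element per chain (discarding non-antichains), map each to its downset, and order the resulting family by inclusion.

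The second step is to feed the polynomial-size distributive lattices $L(\mathbf{P})$ and $L(\mathbf{Q})$ into a polynomial-time algorithm for isomorphism of finite distributive lattices, a classical result in order theory; combined with the preceding construction this yields a polynomial-time decision procedure for $\textsc{Iso}(\mathcal{P})$.

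The main obstacle I anticipate is securing and correctly invoking the black-box polynomial-time algorithm for distributive lattice isomorphism. The naive reduction in the opposite direction---extract the sub-poset of join-irreducibles of $L(\mathbf{P})$, which by Birkhoff is isomorphic to $\mathbf{P}$ itself, and test poset isomorphism there---is circular and gives no new algorithmic leverage. A genuine polynomial-time algorithm for distributive lattice isomorphism must therefore exploit additional lattice-theoretic structure, for example by building a canonical labeling that refines along the join-irreducible skeleton and along meets of canonical families of elements. Pinpointing such an algorithm in the literature (or reproving it) is the essential non-trivial ingredient; once in hand, the Birkhoff reduction together with the polynomial size bound for $L(\mathbf{P})$ closes the argument.
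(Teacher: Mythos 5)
Your proposal is correct and follows essentially the same route as the paper: compute the downset (antichain) lattice, observe it has size at most $(|P|+1)^w$ for width $w$, and invoke Birkhoff's correspondence together with a polynomial-time isomorphism test for finite distributive lattices. The one ingredient you flag as needing to be pinned down---the distributive-lattice isomorphism algorithm---is exactly what the paper resolves by citing Gorazd and Idziak \cite{GorazdIdziak95}, so nothing further is missing.
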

\begin{proof}
The proof utilizes three known facts from the literature.  

Let $\mathbf{R}$ be any poset.  
For all $S \subseteq R$, let 
$(S]$ be \emph{downset} generated by $S$ in  $\mathbf{R}$, i.e., 
 $(S]=\{ r \in R \mid \exists s \in S \text{ such that $r \leq^{\mathbf{R}} s$}\}$.  
Let $l(\mathbf{R})$ be the order defined 
by equipping the universe of all antichains in $\mathbf{R}$ 
by the relation $A \leq^{l(\mathbf{R})} A'$ if and only if 
$(A]\subseteq (A']$.  Note that, if $\textup{width}(\mathbf{R})$ 
is considered as a constant, the construction of $l(\mathbf{R})$ 
is polynomial-time computable from $\mathbf{R}$. 

The three needed facts are the following.  First, for any (finite) poset $\mathbf{R}$, 
the structure $l(\mathbf{R})$ is a (finite) distributive lattice \cite[Proposition~5.5.5]{Schroder03}.  
Second, the substructure of $l(\mathbf{R})$ 
generated by join irreducible elements is isomorphic to $\mathbf{R}$ \cite[Theorem~5.5.6]{Schroder03}; 
recall that, if $\mathbf{L}=(L,\leq)$ is a lattice, 
then $j \in L$ is \emph{join irreducible} if, for all $l,l' \in L$, 
if $j$ is the least upper bound of $l$ and $l'$, then $j=l$ or $j=l'$.   
Third, the isomorphism problem restricted to finite distributive lattices 
is polynomial-time tractable \cite{GorazdIdziak95}.  

Using the previous facts, we design the following algorithm.
Let $w \in \mathbb{N}$ be the upper bound on the width of posets in $\mathcal{P}$.  
Let $(\mathbf{Q},\mathbf{P})$ be an instance of $\textsc{Iso}(\mathcal{P})$.  Let $|P|=n$.  
If $|Q| \neq n$, or $\textup{width}(\mathbf{Q})>w$, 
or $\textup{width}(\mathbf{Q}) \neq \textup{width}(\mathbf{P})$, then reject; 
the condition is checkable in time $O(w \cdot n^2)$ by Theorem~\ref{th:felsner}.  
Otherwise, in polynomial time, 
compute $l(\qq)$ and $l(\pp)$ 
and accept if and only if $l(\qq)$ and $l(\pp)$ are isomorphic.  

The algorithm clearly runs in polynomial time.  For correctness, 
notice that $\qq$ and $\pp$ are isomorphic if and only if 
$l(\qq)$ and $l(\pp)$ are isomorphic.  For the nontrivial direction (backwards), 
if $f$ is an isomorphism from $l(\qq)$ to $l(\pp)$, 
then let $f'$ be the restriction of $f$ to the join irreducible elements 
of $l(\qq)$.  It is easy to check that $f'$ is bijective into 
the join irreducible elements of $l(\pp)$, hence, 
using the second fact mentioned above, 
$f'$ is an isomorphism between $\qq$ and $\pp$.
\end{proof}

\section{Conclusion}\label{sect:concl}

We embarked on the study of the model checking problem 
on 
posets; compared to graphs, 
the problem is largely unexplored, and we made a first contribution 
by studying basic syntactic fragments (existential logic) 
and fundamental poset invariants (including width, depth, and degree).  
Our complexity classification for existential logic also carries over to the \emph{jump number} 
(between size and width in Figure~\ref{fig:overwparcompl}); a future direction is to extend our study to \emph{dimension} 
(above width \cite{CaspardLeclercMonjardet12} and degree \cite{FurediKahn86} in Figure~\ref{fig:overwparcompl}).
%
 
Our main algorithmic result, fixed-parameter tractability of existential logic on bounded width posets, 
raises the natural question of whether model checking the full first-order logic 
is fixed-parameter tractable on classes of posets of bounded width. 
We propose this as a topic for future research.  

%
%

%

%
%
%

\shortversion{\acks

This research was supported by ERC Starting Grant (Complex Reason, 239962) 
and FWF Austrian Science Fund (Parameterized Compilation, P26200).}


\bibliographystyle{abbrvnat}


\longversion{\newpage}

\end{document}